\appto\TPTnoteSettings{\footnotesize}
\newtheoremstyle{theoremstyle}
{\topsep} 
{\topsep} 
{\itshape} 
{} 
{} 
{} 
{.5em} 
{\color{black}\ifthenelse{\equal{#3}{}}{{\bfseries #1 #2}}{{\bfseries #1 #2 (#3)}}}
\newtheoremstyle{theoremstylealt}
{\topsep} 
{\topsep} 
{\itshape} 
{} 
{} 
{} 
{.5em} 
{\color{black}\ifthenelse{\equal{#3}{}}{{\bfseries #1 #2$^\prime$}}{{\bfseries #1 #2$^\prime$ (#3)}}}
\newtheoremstyle{examplestyle}
{\topsep} 
{\topsep} 
{} 
{} 
{} 
{} 
{.5em} 
{\color{black}\ifthenelse{\equal{#3}{}}{{\bfseries #1 #2}}{{\bfseries #1 #2 (#3)}}}
\theoremstyle{theoremstyle}\newtheorem{thm}{Theorem}
\theoremstyle{theoremstylealt}
\theoremstyle{theoremstyle}     
\theoremstyle{theoremstyle}\newtheorem{lem}{Lemma}  
\theoremstyle{theoremstyle}\newtheorem{coro}{Corollary}        
\theoremstyle{theoremstyle}\newtheorem{prop}{Proposition}
\theoremstyle{theoremstyle}\newtheorem{assumption}{Assumption}
\theoremstyle{theoremstylealt}
\theoremstyle{theoremstyle}
\theoremstyle{theoremstyle}
\theoremstyle{theoremstyle}
\theoremstyle{examplestyle}
\theoremstyle{examplestyle}\newtheorem{remark}{Remark}
\theoremstyle{examplestyle}
\renewcommand{\epsilon}{\varepsilon}
\def \hat{\widehat}
\newcolumntype{H}{>{\setbox0=\hbox\bgroup}c<{\egroup}@{}}
\begin{document}

\doparttoc 
\faketableofcontents 

\title{Efficient Targeted Learning of Heterogeneous Treatment Effects for Multiple Subgroups
}
\author{
Waverly Wei\thanks{Division of Biostatistics, University of California, Berkeley.} \and
Maya Petersen\footnotemark[1] \and
Mark J van der Laan\footnotemark[1]\and Zeyu Zheng \thanks{Department of Industrial Engineering and Operations Research, University of California, Berkeley}\and Chong Wu  \thanks{Department of Biostatistics, The University of Texas MD Anderson Cancer Center.} \footnotemark[4] \and Jingshen Wang \footnotemark[1] \thanks{Correspondence: jingshenwang@berkeley.edu, cwu18@mdanderson.org}
} 

\date{}

\maketitle

		\begin{abstract}
	In biomedical science, analyzing treatment effect heterogeneity plays an essential role in assisting personalized medicine. The main goals of analyzing treatment effect heterogeneity include estimating treatment effects in clinically relevant subgroups and predicting whether a patient subpopulation might benefit from a particular treatment. Conventional approaches often evaluate the subgroup treatment effects via parametric modeling and can thus be susceptible to model mis-specifications. In this manuscript, we take a model-free semiparametric perspective and aim to efficiently evaluate the heterogeneous treatment effects of multiple subgroups simultaneously under the one-step targeted maximum-likelihood estimation (TMLE) framework. When the number of subgroups is large, we further expand this path of research by looking at a variation of the one-step TMLE that 
is robust to the presence of small estimated propensity scores in finite samples. From our simulations, our method demonstrates substantial finite sample improvements compared to conventional methods. In a case study, our method unveils the potential treatment effect heterogeneity of rs12916-T allele (a proxy for statin usage) in decreasing Alzheimer's disease risk.
 	 \\ \bigskip 
		
		\noindent \textbf{Keywords}: Causal inference; precision medicine; semiparametric statistics; subgroup analysis; treatment effect heterogeneity.
		\end{abstract}
		
		\vskip 2cm
		\begin{center}\bfseries
		\end{center}

\clearpage
\doublespacing

\section{Introduction}
\subsection{Motivation and our contribution}

In biomedical studies with observational data, investigators often aim to assess the heterogeneity of treatment effects in subpopulations of patients. Such analyses may provide useful information for patient care and for future medical research. For example, existing studies suggest that statins--a class of commonly prescribed  coronary artery disease (CAD) drugs for lowering low-density lipoprotein cholesterol concentration--may reduce Alzheimer's disease (AD) risk in some, but not all population (\citealp{zissimopoulos2017sex}). Understanding the heterogeneous treatment effects of statin usage may provide new insights for personalizing drug prescriptions to prevent AD.

In this manuscript, we aim to make valid inference on heterogeneous treatment effects in a user-supplied family of subgroups after adjusting for potential confounding factors with state-of-the-art machine learning algorithms. Motivated by our case study (Section \ref{sec: real-data}), we work under the setting that the treatment and outcome variables are binary. The extension of our method to continuous outcomes is discussed in Web Appendix \ref{appendix:continuous-outcome}. Our parameter of interest includes relative risk under a treatment versus a control in $d$ pre-specified subgroups of interest: 
$\bm{\alpha}_{\textbf{RR}} = (\alpha_{\text{RR},1},\dots \alpha_{\text{RR},d})^\intercal,\   \alpha_{\text{RR},j} =  \frac{{P}\big( Y(1)=1|X\in\mathcal{A}_j\big)}{{P}\big( Y(0)=1|X\in\mathcal{A}_j\big)}, \  j=1, \ldots, d, $
where $ {P}\big( Y(1)=1|X\in\mathcal{A}_j\big)$ (or $ {P}\big( Y(0)=1|X\in\mathcal{A}_j\big)$) is the conditional expectations of the potential outcome under treatment (or control) evaluated in the subgroup $\mathcal{A}_j$. We denote $X\in\mathbb{R}^p$ as the potential confounders, and denote  $\{\mathcal{A}_j\}_{j=1}^d$ as pre-specified possibly overlapped subgroups. We work under the classical semi-parametric inference framework, in which we aim to make inference on the low-dimensional target parameter $\bm{\alpha}_{\textbf{RR}} $ in the presence of high-dimensional nuisance parameters (see Section \ref{sec:conditions-notations} for rigorous statements). 

In this context, two potential issues emerge when one evaluates the treatment effects for multiple subgroups. On the one hand, while a commonly used method is to serially divide individuals into subgroups based on relevant pre-treatment characteristics and then estimate the treatment effect in each subgroup with either the (augmented) inverse propensity score weighting (\citealp{rosen1983overlap}) or the targeted maximum likelihood estimator (TMLE) (\citealp{laantmle2006}), this ``one-group-at-a-time" approach can be computationally costly (see Section \ref{sec:limitation-of-one-step-TMLE} for a concrete example). On the other hand,
when the estimated propensity scores or subgroup proportions are close to zero or one in finite
samples (a phenomenon refereed to as ``practical positivity violation" in \citep{petersen2012diagnosing}), such approaches can be numerically unstable due to the inverse propensity score or inverse subgroup proportion weights tending to infinity.

To address such potential issues, we work with a one-step targeted maximum likelihood estimator that ``targets" multiple subgroup treatment effects simultaneously.
The so-called ``targeting" step here involves fluctuating the initial plug-in estimator of the nuisance parameters in semiparametric models in directions which maximally adjust those initial estimates per change in the log-likelihood. Furthermore, we propose a variation of the one-step TMLE that not only targets multiple subgroups simultaneously but is also robust to the presence of small estimated propensity scores in finite samples. Deviating from the mainstream literature on the targeted learning,  we also look into the problem from an optimization point of view, where we further demonstrate that such a variation of the one-step TMLE can be viewed as a reparametrized dual formulation of the primal optimization problem. 

From our theoretical investigations, we show that the proposed estimator for multiple subgroup treatment effects attains the semiparametric efficiency bound, and it converges in distribution to a multivariate Gaussian distribution when the sample size becomes large. This result thus allows us to construct valid simultaneous confidence intervals and develop powerful multiple testing procedures fully utilizing the joint dependence among the subgroup specific test statistics. In addition to these large sample guarantees, through simulation studies, we demonstrate that the proposed estimator has substantial finite sample improvements relative to either applying the classical targeted learning approach \citep{van2011targeted} or the ``double machine learning" frequently adopted in the econometrics literature \citep{chernozhukov2017double}. From an application point of view, leveraging the observational data collected from the UK Biobank study, we analyze the differential effects of inheriting rs12916-T allele (a proxy for statin usage) in decreasing AD risk across multiple subgroups.

\subsection{Related literature}

The proposed method builds on the foundation of the targeted learning framework which is, broadly speaking, a meta-learning framework allowing various machine learning algorithms to enter the process of estimating desired target parameters (\citealp{van2011targeted}). \citet{laantmle2006} propose the original version of TMLE, which uses maximum likelihood in a least favourable direction and then performs $k$‐step updates using the estimated scores, in an effort to better estimate the target parameter. \citet{zheng2010asymptotic} introduce the cross-validated TMLE, which relaxes the stringent Donsker condition via sample splitting for the initial estimation of the nuisance parameters.
\citet{laan2010ctmle} further advances the original TMLE by designing different sets of candidate scores. A recent advancement in the targeted learning framework is the one-step TMLE (\citealp{van2016one}), which adopts a ``universal least favorable submodel" to avoid excessive data fitting in the locally least favorable submodel.
In terms of estimating a vector of multi-dimensional parameters with TMLE, seminal works by \cite{van2011targeted} and  \cite{van2016one} develop a universal canonical one-dimensional submodel such that the one-step TMLE, only maximizing the log-likelihood over a univariate parameter, solves the multivariate efficient influence curve equation. A recent work (\citealp{levy2021fundamental}) adopts this general TMLE approach for estimating the variance of the stratum-specific treatment effect functions.  We also note that the general strategy of TMLE that targets multi-dimensional parameters have also been discussed for estimating survival curves (see, \citet{van2018targeted}, Chapter 5 for example).

Our proposal contributes to the semiparametric statistics literature. Early work on semiparametric statistics (\citealp{newey1990semiparametric}) provides general efficiency results for the development of semiparametric estimators. Based on these efficiency results,  \citet{robins1992estequation} propose a general estimating equation approach that solves for the parameter of interest by setting the efficient score equations to zero. The estimating equation approach is further discussed in \citet{laan2003estequation}. \citet{bickel1993onestep} develop a one-step estimator that adds the empirical average of the efficient influence function to an initial estimator.   \citet{newey1994asymptotic} advances the semiparametric efficiency results by accounting for the nonparametric estimation of nuisance parameters. \citet{vaart2000asymptotic} discusses the use of maximum likelihood estimator and parametric submodel in semiparametric estimation.

Our work is also tied to the literature on heterogeneous treatment effect estimation in causal inference.
Different from our parameter of interest, \citet{Chernozhukov2018Simultaneous}, building on the debiased double machine learning framework (\citealp{chernozhukov2017double}), propose to estimate the average treatment effect conditional on a small subset of the potential confounders.
 \citet{KunzelCATE} and \citet{abrevaya2015estimating} propose meta-learning frameworks that estimates the average treatment conditional on all possible confounders. 
Unlike our approach, which efficiently evaluates the treatment effects in pre-specified subgroups, \citet{imai2013estimating} formulate the problem on heterogeneous treatment effect identification from a variable selection perspective. In this thread on heterogeneity identification, \citet{su2009subgroup} propose a recursive partitioning tree approach to identify treatment heterogeneity across subgroups. \cite{vanderweele2019selecting} provide a nice overview of subgroup selection problems encountered in practice.

\section{Causal Framework and Identification}\label{sec:framework}

Let $\{O_i\}_{i=1}^n =\{ (Y_i, T_i, X_i)\}_{i=1}^n $ be an independent and identically distributed (i.i.d.) random sample of the observed binary response variable $Y$, the treatment indicator variable $T$, and potential confounders $X\in\mathds{R}^p$. In accordance with the Neyman-Rubin causal model (\citealp{neyman1923application, rubin1974estimating}), we define the potential outcome $Y(T)$ as the outcome we would have observed under the treatment assignment $T$. The observed outcome is thus the potential outcome variable corresponding to the received treatment, i.e., $Y = T Y(1) + (1-T) Y(0)$. This framework allows us to characterize the multi-subgroup disease risk under different treatment arms as:
$\bm{\alpha}_{t} = (\alpha_{t,1},\dots \alpha_{t,d})^\intercal$, $\alpha_{t,j} =  {P}\big( Y(t)=1|X\in\mathcal{A}_j\big)$,  $t\in\{0,1\},\quad j=1,\dots, d$,
where $\mathcal{A}_j$ denotes a pre-specified subgroup $j$. Here, we allow different subgroup to overlaps, and we assume that the variables used to define the subgroups of interest are based on $X$. When comparing disease risks between two treatment arms, our framework allows practitioners to estimate three popular causal effect measures: relative risk, odds ratio, and absolute risk difference, across different subgroups, defined as 
$\bm{\alpha}_{\textbf{RR}} = (\alpha_{\text{RR},1},\dots \alpha_{\text{RR},d})^\intercal$,   $\alpha_{\text{RR},j} = \alpha_{1,j}/\alpha_{0,j}, 
\bm{\alpha}_{\textbf{OR}} = (\alpha_{\text{OR},1},\dots \alpha_{\text{OR},d})^\intercal$,   $\alpha_{\text{OR},j} = \big(\alpha_{1,j}/(1-\alpha_{1,j})\big)/\big(\alpha_{0,j}/(1-\alpha_{0,j} )\big)$, and $\bm{\alpha_{\textbf{ARD}}} = (\alpha_{1,1}-\alpha_{0,1},\dots, \alpha_{1,d}-\alpha_{0,d})$ (Section \ref{sec:Estimate-RR-OR}). 

The three causal quantities described above are not observable because the potential outcomes are subject to missingness, meaning that for each individual we observe either the potential outcome under the control, $Y(0)$, or the potential outcome under the treatment, $Y(1)$, but never both. Following the mainstream literature in causal inference, we impose the unconfoundedness, positivity, and stable unit treatment value assumptions (SUTVA) below to identify our causal parameters of interest:
\begin{assumption}[Unconfoundedness]\label{assump:unconfound}
	Conditional on $X$, the treatment assignment is as good as random, that is $T \perp Y(1), Y(0) | X$. 
\end{assumption}

\begin{assumption}[Positivity]
\label{assumption:overlap} 
 For any $x\in X$, $t\in \{0,1\}$, there exists a constant $c\in(0,1)$ such that $ c <  {P}(T = t|X = x, X\in\mathcal{A}_j) < 1-c$ and $c < {P}(\mathcal{A}_j)<1-c$, for $j= 1, \dots, d$. 
\end{assumption}

\begin{assumption}[SUTVA]\label{assumption:sutva}
If unit $i$ receives treatment $T_i$, the observed outcome $Y_i$ equals the potential outcome $Y_i(T_i)$, meaning that the potential outcome for unit $i$ under treatment $T_i$ is unrelated to the treatment received by other units. 
\end{assumption}

Under Assumption \ref{assump:unconfound}-\ref{assumption:sutva}, we are able to identify $\alpha_{t,j}$ as
$\alpha_{t,j} =  P\big( Y(1)=1|X\in\mathcal{A}_j\big) = E_{X}\big[ P(Y=1|T= t, X\in\mathcal{A}_j) \big].$
Here, by ``identify" we mean that under Assumption \ref{assump:unconfound}, the causal effect involving unobserved potential outcomes can be first written as a function of observed data. Then, within an i.i.d. sample $\{ (Y_i, T_i, X_i)\}_{i=1}^n$, under Assumption \ref{assumption:overlap} and \ref{assumption:sutva}, the causal parameter can be estimated (or point identified) at a regular parametric root-$n$ rate \citep{khan2010irregular}.

\noindent\textit{Notation}
We use $P$ to denote the probability operator and $E$ to denote the expectation operator. We use capitalized letters to denote random variables, e.g. $T$, and lower case letters to denote the realizations of random variables, e.g. $t$. For $t\in\{0,1\}$, we denote $ p_t( X)={P}(Y=1|T=t, X)$ as the conditional probability of $Y = 1$ given $T = t$ and $X$. $e_t(X) = {P}(T=t|X)$ denotes the conditional probability of $T=t$ given $X$. Lastly, we define $\text{expit}(x) = \frac{1}{1+e^{-x}}$ and $\text{logit}(x)= \text{log}(\frac{x}{1-x})$.

\section{Multiple Subgroup Targeted Learning}\label{sec:method}

In this section, to simplify presentation, we first introduce our method on estimating the conditional average risk $\bm{\alpha}_t$ for group $t\in\{0,1\}$ and defer the estimation for other causal parameters to Section \ref{sec:Estimate-RR-OR} and Web Appendix \ref{supp:simul}. We shall review the classical one-step targeted maximum likelihood estimator (TMLE) (\citealp{van2016one}) in a single subgroup case, followed by discussing its limitations when naively generalizing it to the multi-subgroup case. We then introduce the one-step TMLE that directly targets the multi-subgroup treatment effects simultaneously. 

\subsection{Limitation of the classical one-step TMLE}\label{sec:limitation-of-one-step-TMLE}

To estimate $\bm{\alpha}_t$, a natural choice is to apply the one-step TMLE in each subgroup separately. For a subgroup $j$, one-step TMLE starts with some initial estimates of $p_t(X)$ and $e_t(X)$ using the observations in the subgroup $\mathcal{A}_j$, denoted as  $\hat{p}_{tj}^{\mathtt{Init}}(X)$ and $\hat{e}_{tj}(X)$. These initial estimates can be obtained from any state-of-art machine learning methods--such as random forest, gradient boosting (\citealp{breiman2001random}), or Highly Adaptive Lasso (HAL) (\citealp{benkeser2016highly})--as long as they are not too far away from the target estimands (see Assumption \ref{assumption:regularity-nuisance} in Section \ref{sec:conditions-notations} for rigorous specifications). Within a random sample, because $\hat{p}_{tj}^{\text{Init}}(X)$ and $\hat{e}_{tj}(X)$ may substantially deviate from the truth, the targeted learning approach identifies a correction term, $\hat{\varepsilon}\cdot \hat{S}_{tj}(X)$, that pushes the initial estimates to ``concentrate/target" on the estimand:
$
 \hat{p}_{tj}(X_i) = \text{expit}\Big(\text{logit}\big(\hat{p}^{\text{Init}}_{tj}(X_i)\big) + \hat{\varepsilon}\cdot \hat{S}_{tj}(X_i)\Big), \  \hat{S}_{tj}(X_i) = 
   \frac{\mathds{1}(X_i\in\mathcal{A}_j)}{\hat{P}(\mathcal{A}_j)}\frac{\mathds{1}(T_i=t)}{\hat{e}_{tj}(X_i)}.
$
Here, $\hat{P}(\mathcal{A}_j) = \frac{\sum_{i=1}^n\mathds{1}(X_i\in\mathcal{A}_j)}{n}$, $\hat{\varepsilon}$ captures the magnitude of the correction $\hat{S}_{tj}(X_i)$ (so called ``clever covariate" in \citet{laantmle2006}), and it is the estimated coefficient of $\hat{S}_{tj}(X_i)$ in the logistic regression: 
\begin{align}\label{eq:simple-epsilon-update}
Y_i \sim \text{logit}\big(\hat{p}^{\text{Init}}_{tj}(X_i)\big) + \varepsilon \hat{S}_{tj}(X_i),\quad i \in \mathcal{A}_{tj},
\end{align}
that regresses $Y_i$ on $\text{logit}\big(\hat{p}^{\text{Init}}_{tj}(X_i)\big)$ and $\hat{S}_{tj}(X_i)$ with a fixed coefficient 1 for $\text{logit}\big(\hat{p}^{\text{Init}}_{tj}(X_i)\big)$. Here $\mathcal{A}_{tj} = \mathcal{A}_j \cap \{i: T_i=t\}$ contains the subjects with $T_i =t$ in the subgroup $\mathcal{A}_j$.
After this one-step correction, the final estimate $\hat{\alpha}^{\text{one-step}}_{t,j}$ takes the empirical average of $\hat{p}_{tj}(X_i)$:
$\hat{\alpha}^{\text{one-step}}_{t,j} = \frac{1}{n_{tj}}\sum_{i=1}^n \hat{p}_{tj}(X_i),$ where $n_{tj}$ is the cardinality of the set $\mathcal{A}_{tj}$. 

The regression problem defined in Eq~\eqref{eq:simple-epsilon-update} is the essence of the one-step TMLE. Such a regression problem adaptively learns the difference between $\hat{p}^{\text{Init}}_{tj}(\cdot)$ and $p_{tj}(\cdot)$ from the data, aiming to find an $\hat{\varepsilon}$ that locally improves the empirical fit of the initial estimator $\hat{p}^{\text{Init}}_{tj}(\cdot)$. We choose $\hat{\varepsilon}$ in a data adaptive fashion because when the initial estimate of the conditional probability is identical to the true conditional probability, we hope to set $\hat{\varepsilon}=0$.
It is only when the initial estimate $\hat{p}^{\text{Init}}_{tj}(\cdot)$ drifts away from $p_{tj}(\cdot)$, $\hat{\varepsilon}$ accounts for their difference and updates $\hat{p}^{\text{Init}}_{tj}(\cdot)$ accordingly. Furthermore, because our goal is to estimate $\alpha_{t,j}$, the clever covariate $S_{tj}(X_i)$ specifies the updating direction of the initial estimator that yields a maximal change (or maximal information gain) in the target parameter. Benefiting from such an update, the final estimator $\hat{\alpha}^{\text{one-step}}_{t,j}$ attains the semiparametric efficiency bound under the regularity conditions in Section \ref{sec:conditions-notations}.
\begin{wraptable}{r}{6.5cm}
    \centering
     \caption{Computational time (in seconds) of the conventional TMLE and the proposed method with sample size $n=228,466$ on a Lenovo NeXtScale nx360m5 node (24 cores per node) equipped with Intel Xeon Haswell processor. The core frequency is 2.3 Ghz and supports 16 floating-point operations per clock period.}
       \label{table:computation-time}
    \begin{tabular}{ccc}
         &  Classical one-step TMLE & iTMLE  \\\hline
         & 1441.36 &  924.51
    \end{tabular}
\end{wraptable}
In addition, because the one-step TMLE applies an ``expit" transformation on the sum of $\text{logit}\big(\hat{p}^{\text{Init}}_{tj}(X_i)\big)$ and the inverse propensity score, the estimated conditional risk $\hat{\alpha}^{\text{one-step}}_{t,j}$ never falls out of the range between 0 and 1 regardless of how small $\hat{e}_{tj}(\cdot)$ is (see Section \ref{Section:Simulation-comparison-with-others} for numerical verification).

Nevertheless, naively carrying out the above procedure one subgroup at a time can be computationally inefficient in the presence of many subgroups.  In a simple comparison provided in Table \ref{table:computation-time}, our proposed estimator directly targeting the multi-subgroup parameter $\bm{\alpha}_{t}$ as a whole improves the computational speed by about 35\% compared to this one-group-at-a-time approach, 
when the initial estimator $\hat{p}^{\text{Init}}_{tj}(\cdot)$ and the estimated propensity scores $\hat{e}_{tj}(\cdot)$ are obtained via GLMs.

\subsection{One-step TMLE targeting multiple subgroups}\label{subsec:proposal}

\subsubsection{Procedure overview}
To avoid the discussed potential problems of the conventional one-step TMLE, we amend the one-step TMLE estimator so that it directly targets $\bm{\alpha}_t$. A natural idea is to replace the univariate clever covariate with a multi-dimensional vector of clever covariates 
$\big( \hat{S}_{t1}(X_i), \ldots, \hat{S}_{td}(X_i) \big)^\intercal$ in the logistic regression
\begin{align}\label{eq:tmle-constraint}
Y_i \sim \text{logit}\big(\hat{p}^{\text{Init}}_t(X_i)\big) +  \sum_{j=1}^d{\varepsilon}_{t,j}\cdot \hat{S}_{tj}(X_i), \quad i\in \{i: T_i = t\},
\end{align}
where $\hat{S}_{tj}(X_i) = \frac{\mathds{1}(X_i\in\mathcal{A}_j)}{\hat{P}(\mathcal{A}_j)}\frac{\mathds{1}(T_i=t)}{\hat{e}_{t}(X_i)}$. Note that here we generate the initial estimates $\hat{p}_t^{\text{Init}}(X_i)$ and $\hat{e}_{t}(X_i)$ with the entire available sample. We then construct the estimator for $\bm{\alpha}_t$ with
\begin{align}\label{eq:one-step-alpha}
\bm{\hat{\alpha}}^{\text{one-step}}_t = \big(\frac{1}{n_{t1}}\sum_{i=1}^n \hat{p}_{t1}(X_i), \ldots, \frac{1}{n_{td}}\sum_{i=1}^n \hat{p}_{td}(X_i) \big)^\intercal,
\end{align}
where $\hat{p}_{tj}(X_i) = \text{expit}\Big(\text{logit}\big(\hat{p}^{\text{Init}}_{t}(X_i)\big) + \hat{\varepsilon}_{t,j}\cdot \hat{S}_{t,j}(X_i)\Big)$.

In the presence of multiple subgroups with large $d$, we may observe small $\hat{P}(A_j)$ or $\hat{e}_t(X_i)$ within a random sample. In this situation, given that  $\hat{P}(A_j)$ and $\hat{e}_t(X_i)$ enter the regression problem in Eq \eqref{eq:tmle-constraint} as denominators, the above procedure can potentially produce numerically unstable estimates, which may inflate the variance of $\bm{\hat{\alpha}}^{\text{one-step}}_{t}$. We hope to further robustify the above procedure by considering a simple variation, where we shall also demonstrate that the algorithm proposed below is a reparametrized dual problem of the above (primal) problem defined in Eq \eqref{eq:tmle-constraint}.

Our proposed procedure operates as follows, for each iteration $k$, 
\begin{align}\label{eq:TMLE-variant}
& Y_i \sim \text{logit}\big(\hat{p}^{(k-1)}_t(X_i)\big) + \gamma \tilde{S}^{(k-1)}_t(X_i) , \\
& \nonumber\hat{p}^{(k)}_t(X_i) = \text{expit}\Big(\text{logit}\big(\hat{p}^{(k-1)}_t(X_i)\big) + \hat{\gamma}^{(k)}\cdot \tilde{S}^{(k-1)}_t(X_i)\Big), \quad i\in\{i:T_i=t\}, \quad k=1, \ldots, K, 
\end{align}
where $\hat{\gamma}^{(k)}$ is the estimated regression coefficient obtained in the logistic regression \eqref{eq:TMLE-variant}.  $\hat{p}_t^{(1)}(X_i)$ denotes the initial estimate. $\hat{p}_t^{(k-1)}(X_i)$ denotes the estimate from the previous iteration, and $\tilde{S}^{(k-1)}_t(X_i) $ is the customized ``clever covariate" that directly targets $\bm{\alpha}_t$:
\begin{align}\label{eq:clever-covariate}
\tilde{S}^{(k-1)}_t(X_i) = 
 \frac{ \sum_{j=1}^d \frac{\mathds{1}(X_i\in\mathcal{A}_j)}{\hat{P}(\mathcal{A}_j)} \frac{\mathds{1}( T_i=t)}{\hat{e}_t(X_i)}\cdot\Big(\sum_{l=1}^n \hat{\phi}^{(k-1)}_j(Y_l, T_l,X_l)\Big)}{\sqrt{ \sum_{j=1}^d \left(  \sum_{l=1}^n \hat{\phi}^{(k-1)}_j(Y_l,T_l,X_l) \right)^2}},
\end{align}
where
$\hat{\phi}^{(k-1)}_j(Y_i, T_i,X_i) = \frac{\mathds{1}(X_i\in\mathcal{A}_j)}{\hat{P}(\mathcal{A}_j)}\frac{\mathds{1}( T_i=t)}{\hat{e}_t(X_i)}(Y_i - \hat{p}_t^{(k-1)}(X_i)).$
The intuition of $\tilde{S}^{(k-1)}_t(X_i)$ shall be explained in the next section. When the maximum number of iterations $K$ is reached or when $\hat{\gamma}$ is sufficiently close to 0, we take the final estimate $\hat{p}_t(X_i) = \hat{p}^{(K)}_t(X_i)$ and estimate $\bm{\alpha}_t$ again with: 
\begin{align}\label{eq:alpha-proposed}
    \hat{\bm{\alpha}}_t = \big(\frac{\sum_{i\in \mathcal{A}_1} \hat{p}_t(X_i) }{n_{t1}}, \ldots, \frac{\sum_{i\in \mathcal{A}_d} \hat{p}_t(X_i) }{n_{td}}\big)^\intercal,
\end{align}
where $n_{tj} =\sum_{i=1}^n \mathds{1}( T_i=t) \mathds{1}(X_i\in\mathcal{A}_j) $ denotes the subgroup $j$'s sample size in the arm $t$.

We refer to the estimator in Eq \eqref{eq:alpha-proposed}, which is obtained from Eq \eqref{eq:TMLE-variant},
as the iterative version of the one-step TMLE (iTMLE) targeting multiple subgroups of interest.

\subsubsection{Intuitive explanation of our proposal}

Note that although the proposed estimators in Eq \eqref{eq:one-step-alpha} and Eq \eqref{eq:alpha-proposed} are asymptotically equivalent as $n\rightarrow \infty$, we provide some heuristic explanations of the benefits of adopting our procedure defined in Eq \eqref{eq:TMLE-variant} compared to the procedure defined in Eq \eqref{eq:tmle-constraint} in finite samples.

First, given that the performance of the one-step TMLE defined by Eq \eqref{eq:tmle-constraint} depends on the initial estimator $\hat{p}_t^{\text{Init}}(X_i)$, our revised procedure in Eq \eqref{eq:TMLE-variant} works with an improved initial estimator in each iteration. Concretely, in Eq \eqref{eq:TMLE-variant}, the initial estimator entering each iteration is constantly being updated, leading to increased estimation efficiency and reduced estimation bias compared to the procedure defined in Eq \eqref{eq:tmle-constraint}. Such improvements can be rather prominent in finite samples (See Web Appendix \ref{appendix:sim-comparison} for simulation comparisons). 

Second, the form of the clever covariate $\tilde{S}_t(X_i)$ in Eq \eqref{eq:TMLE-variant} may have the added benefit of being robust to the presence of small estimated propensity scores, because the estimated propensity scores only enter the estimation process after being self-normalized in $\tilde{S}_t(X_i)$.
Small propensity scores are often encountered in datasets with unbalanced covariate distribution across the treatment and control groups. Such an imbalance can lead to conventional estimators having substantial biases and large variances \citep{crump2009dealing, petersen2012diagnosing}. Many numerical studies have found that similar self-normalization of propensity scores provides much more stable estimates of the treatment effects in finite samples (\citealp{hajek1971comment}). While the original formulation of the primal problem in Eq \eqref{eq:tmle-constraint} involves a sum over $d$ inverse propensity score weighted clever covariates, its performance can be sensitive to the presence of small propensity scores in finite samples. Even though the estimator obtained by Eq \eqref{eq:TMLE-variant} and the estimator obtained by Eq \eqref{eq:tmle-constraint} are asymptotically equivalent, the estimator obtained by Eq \eqref{eq:TMLE-variant} may have finite sample improvements when the estimated propensity scores are small.

Third, the estimator obtained from Eq \eqref{eq:TMLE-variant} not only remains semi-parametric efficient and ``doubly robust," but also solves the direct sample analogue of the efficient influence function. To see why it is semiparametric efficient, we set the derivative of the objective function of the logistic regression in \eqref{eq:tmle-constraint} with respect to $\varepsilon$ to zero, which reduces to (see Web Appendix \ref{appendix:multi-score-proof}
for detailed derivations)
\begin{align}\label{eq:estimating-equations}
   & \sum_{j=1}^d \left(  \frac{1}{n}\sum_{i=1}^{n} \frac{\mathds{1}(X_i\in\mathcal{A}_j)}{\hat{P}(\mathcal{A}_j)}\frac{T_i}{\hat{e}_t(X_i)}(Y_i - \hat{p}_t(X_i)) \right)^2 = 0.
\end{align}
 This indicates that our estimator $\hat{\bm{\alpha}}_t =(\hat{\alpha}_{t,1},\ldots,  \hat{\alpha}_{t,d})^\intercal$ solves the direct sample analogue of the efficient influence function:
$\frac{1}{n}\sum_{i=1}^n \frac{\mathds{1}(X_i\in\mathcal{A}_j)}{\hat{P}(\mathcal{A}_j)}\left\{ \frac{T_i}{\hat{e}_t(X_i)}(Y_i - \hat{p}_t(X_i))  + \hat{p}_t(X_i) \right\} - \hat{\alpha}_{t,j} =0 , \ j=1,\ldots, d.$
Therefore, it attains the semiparametric efficiency bound (\citealp{bickel1993onestep}) under appropriate conditions imposed on the nuisance parameter estimators (Theorem \ref{thm:multi-efficiency}). Regarding the ``doubly robustness," for any model-based estimators $\hat{e}_t(\cdot)$ and $\hat{p}_t(\cdot)$, our estimator combines regression imputation and inverse propensity score weighting, and remains consistent if either the model $e_t(\cdot)$ or $p_t(\cdot)$ is misspecified (see Section \ref{Section:Simulation-comparison-with-others} for simulation results). In Section \ref{subsec:heuristics}, we shall provide further heuristic explanations of the targeted maximum likelihood estimator from a semiparametric inference point of view.

\subsection{Extension to relative risk, odds ratio, and  absolute risk difference estimations}\label{sec:Estimate-RR-OR}
Given that $\bm{\alpha_1}$ and  $\bm{\alpha_0}$ are the building blocks of the multi-subgroup relative risk and odds ratio, estimation for these two parameters of interest largely follows our proposal in Section \ref{subsec:proposal}. The iterative version of the one-step TMLE needs a slight modification in that at each iteration $k$, we adopt the following logistic regression problem: $
 Y_i \sim \text{logit}\big(\hat{p}^{(k-1)}(T_i,X_i)\big) + \gamma_1 \tilde{S}^{(k-1)}_1(X_i) + \gamma_0 \tilde{S}^{(k-1)}_0(X_i) , \ k=1, \ldots, K, $
and perform the updating as $
 \hat{p}^{(k)}(T_i,X_i) = \text{expit}\Big(\text{logit}\big(\hat{p}^{(k-1)}(T_i,X_i)\big) + \hat{\gamma}_1^{(k)}\cdot \tilde{S}^{(k-1)}_1(X_i)+\hat{\gamma}_0^{(k)}\cdot \tilde{S}^{(k-1)}_0(X_i\Big).$
Then we estimate $\bm{{\alpha}_{\textbf{RR}}} $, $\bm{{\alpha}_{\textbf{OR}}} $, and $\bm{{\alpha}_{\textbf{ARD}}}$  with $
\bm{\hat{\alpha}_{\textbf{RR}}} = \Big(\frac{\hat{\alpha}_{1,1}}{\hat{\alpha}_{0,1}},\dots, \frac{\hat{\alpha}_{1,d}}{\hat{\alpha}_{0,d}}\Big),\ \bm{\hat{\alpha}_{\textbf{OR}}} = \Big(\frac{\hat{\alpha}_{1,1}}{1-\hat{\alpha}_{1,1}}\Big/\frac{\hat{\alpha}_{0,1}}{1-\hat{\alpha}_{0,1}},\dots, \frac{\hat{\alpha}_{1,d}}{1-\hat{\alpha}_{1,d}}\Big/\frac{\hat{\alpha}_{0,d}}{1-\hat{\alpha}_{0,d}}\Big),$ and $\bm{\hat{\alpha}_{\textbf{ARD}}} = \Big(\hat{\alpha}_{1,1}-\hat{\alpha}_{0,1},\dots, \hat{\alpha}_{1,d}-\hat{\alpha}_{0,d}\Big)$

As for constructing simultaneous confidence intervals, we apply the Delta method on $(\bm{\alpha_1},\bm{\alpha_0})$ to estimate the sample covariance matrices of the relative risk and the odds ratio estimators following a recipe similar to Section \ref{subsec:confidence-interval}. To avoid redundancy, we leave the detailed descriptions to Web Appendix \ref{supp:simul}.

\section{Theoretical Investigations}\label{sec:theory}

\subsection{Properties of the proposed estimator}
In this section, we introduce the main theoretical results and some necessary notation. We defer additional notation and regularity conditions to Section \ref{sec:conditions-notations}.
Recall that  $\{O_i\}_{i=1}^n: =\{ (Y_i, T_i, X_i)\}_{i=1}^n $ is an i.i.d. random sample defined on the space $\mathcal{O}$ with respect to a probability measure $P$. Denote $o=(y,t,x)$ as a realized data point, $o \in\mathcal{O}$. 

\begin{thm}\label{thm:multi-efficiency}
Under Assumptions \ref{assump:unconfound}-\ref{assumption:regularity-nuisance}, we define the vector of the efficient influence function $\bm{\varphi}_t = (\varphi_{t,1},\ldots,\varphi_{t,d})^{\intercal}$, where $\varphi_{t,j}$ is the efficient influence function (as given in Eq (8)) measured at a realized data point $o = (y,t,x) $ for the subgroup $j$. The proposed conditional risk estimator $ \hat{\bm{\alpha}}_{t} = ( \alpha_{t,1}, \ldots, \alpha_{t,d})^{\intercal}\in\mathbb{R}^d $, after scaling by $\sqrt{n}$, converges to a multivariate Gaussian random variable with mean 0 and covariance matrix $P[\bm{\varphi}_t\bm{\varphi}_t^{\intercal}]$ when $n\rightarrow \infty$, that is $\sqrt{n}\big(  \hat{\bm{\alpha}}_{t} -  {\bm{\alpha}}_{t} \big) \leadsto \mathcal{N}\Big(0, P[\bm{\varphi}_t\bm{\varphi}_t^{\intercal}] \Big)$. (See the precise definition of $\varphi_{t,j}$ in Section \ref{sec:conditions-notations}).
\end{thm}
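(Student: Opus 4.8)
The plan is to establish the asymptotic-linearity representation $\sqrt{n}\big(\hat{\bm{\alpha}}_t - \bm{\alpha}_t\big) = \sqrt{n}(P_n - P)\bm{\varphi}_t + \op(1)$ and then invoke the multivariate central limit theorem. Throughout, write $P_n f = n^{-1}\sum_{i=1}^n f(O_i)$ and $Pf = E[f(O)]$, let the uncentered efficient score be $D_{t,j}(O;e,p,\pi) = \frac{\mathds{1}(X\in\mathcal{A}_j)}{\pi}\big\{\frac{\mathds{1}(T=t)}{e(X)}(Y-p(X)) + p(X)\big\}$ so that $\varphi_{t,j} = D_{t,j}(\cdot\,;e_t,p_t,P(\mathcal{A}_j)) - \alpha_{t,j}$, and let $\hat{D}_{t,j} = D_{t,j}(\cdot\,;\hat{e}_t,\hat{p}_t,\hat{P}(\mathcal{A}_j))$ and $\hat{\varphi}_{t,j} = \hat{D}_{t,j} - \hat{\alpha}_{t,j}$. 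The first step is to record that the proposed estimator solves $P_n\hat{\varphi}_{t,j}=0$ for all $j=1,\ldots,d$ \emph{simultaneously}. This is exactly the sample-analogue identity $\hat{\alpha}_{t,j}=P_n\hat{D}_{t,j}$ displayed after \eqref{eq:estimating-equations}; for the iterative scheme \eqref{eq:TMLE-variant} it follows because at termination $\hat{\gamma}^{(K)}\approx 0$ forces the logistic score at $\gamma=0$ to vanish, and since the denominator of the self-normalized clever covariate \eqref{eq:clever-covariate} does not depend on $i$, substituting \eqref{eq:clever-covariate} collapses that score identity to $\sum_{j=1}^d\big(\sum_{l=1}^n\hat{\phi}^{(K-1)}_j(O_l)\big)^2=0$, which forces each $\sum_l\hat{\phi}_j(O_l)=0$. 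The self-normalization is thus the structural device that makes all $d$ efficient-score equations hold at once.

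Next I would perform the von Mises expansion. Using $\hat{\alpha}_{t,j}=P_n\hat{D}_{t,j}$ and $\alpha_{t,j}=PD_{t,j}$, and that $D_{t,j}$ and $\varphi_{t,j}$ differ by the constant $\alpha_{t,j}$, I decompose
\begin{align*}
\hat{\alpha}_{t,j} - \alpha_{t,j} = (P_n - P)\varphi_{t,j} + (P_n - P)\big(\hat{D}_{t,j} - D_{t,j}\big) + \big(P\hat{D}_{t,j} - \alpha_{t,j}\big).
\end{align*}
The third (drift) term is where double robustness enters: conditioning on $X$ and applying Assumption \ref{assump:unconfound} gives $E[\mathds{1}(T=t)(Y-\hat{p}_t(X))\mid X] = e_t(X)(p_t(X)-\hat{p}_t(X))$, so the leading part of $P\hat{D}_{t,j}-\alpha_{t,j}$ equals $E\big[\frac{\mathds{1}(X\in\mathcal{A}_j)}{\hat{P}(\mathcal{A}_j)}\frac{e_t(X)-\hat{e}_t(X)}{\hat{e}_t(X)}\big(p_t(X)-\hat{p}_t(X)\big)\big]$, a genuine product of the two nuisance errors. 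By the positivity Assumption \ref{assumption:overlap} (bounding $\hat{e}_t$ away from $0$) and Cauchy--Schwarz this is $\lesssim \|\hat{e}_t-e_t\|_{L_2(P)}\,\|\hat{p}_t-p_t\|_{L_2(P)}=\op(n^{-1/2})$ under the rate conditions of Assumption \ref{assumption:regularity-nuisance}; the discrepancy $\hat{P}(\mathcal{A}_j)-P(\mathcal{A}_j)=\Op(n^{-1/2})$ is $\sqrt{n}$-regular and is already absorbed by the $\alpha_{t,j}$-centering inside $\varphi_{t,j}$, contributing only to the stated limiting variance.

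For the second (empirical-process) term $(P_n-P)(\hat{D}_{t,j}-D_{t,j})$, I would argue that since the fluctuation in \eqref{eq:TMLE-variant} adjusts the initial estimator through only a bounded number $K$ of one-parameter $\mathrm{expit}$--$\mathrm{logit}$ logistic updates, the terminal $\hat{p}_t$ is a Lipschitz image of finitely many functions of the initial complexity and therefore stays in the same Donsker class while retaining its $L_2(P)$ rate; with $\|\hat{D}_{t,j}-D_{t,j}\|_{L_2(P)}\to 0$, asymptotic equicontinuity of the empirical process then yields $(P_n-P)(\hat{D}_{t,j}-D_{t,j})=\op(n^{-1/2})$ (a cross-fitting argument as in the cross-validated TMLE would remove the Donsker requirement). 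Stacking the $d$ coordinates gives $\sqrt{n}(\hat{\bm{\alpha}}_t-\bm{\alpha}_t)=\sqrt{n}(P_n-P)\bm{\varphi}_t+\op(1)$; since $P\bm{\varphi}_t=0$ and $P[\bm{\varphi}_t\bm{\varphi}_t^{\intercal}]$ is finite (bounded integrands under Assumption \ref{assumption:overlap}), the multivariate Lindeberg--L\'evy CLT delivers $\sqrt{n}(\hat{\bm{\alpha}}_t-\bm{\alpha}_t)\leadsto\mathcal{N}(0,P[\bm{\varphi}_t\bm{\varphi}_t^{\intercal}])$.

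I expect the main obstacle to be the interaction between the data-adaptive iteration and the remainder control in the third paragraph: one must verify that the fluctuated terminal estimator $\hat{p}_t^{(K)}$ does not inflate the entropy (Donsker) properties or degrade the convergence rate assumed of the initial estimator, so that both the empirical-process remainder and the second-order drift remain $\op(n^{-1/2})$ despite being built from a quantity that is itself tuned to the data. By contrast, the simultaneous solving of all $d$ score equations, though conceptually the crucial ingredient, is essentially handed to us by the construction \eqref{eq:clever-covariate}; the genuine technical work lies in the uniform control of the two remainder terms under Assumption \ref{assumption:regularity-nuisance}.
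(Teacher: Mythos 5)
Your proof follows the same overall skeleton as the paper's: establish that the estimator solves all $d$ efficient-influence-function equations simultaneously (your observation that the self-normalized clever covariate collapses the single score equation to $\sum_j(\sum_l\hat{\phi}_j)^2=0$, forcing each $\sum_l\hat{\phi}_j=0$, is exactly the paper's Eq \eqref{eq:estimating-equations}), then write $\hat{\alpha}_{t,j}-\alpha_{t,j}$ as a linear term plus an empirical-process remainder (killed by the Donsker condition, Assumption \ref{assumption:regularity-score}) plus a drift term, and finish with the CLT. Where you genuinely diverge is in the treatment of the drift $P\hat{D}_{t,j}-\alpha_{t,j}$: the paper introduces $f(r)=P\bm{\psi}(o;\bm{\alpha}_t,\bm{\eta}_0+r(\hat{\bm{\eta}}-\bm{\eta}_0))$, verifies Neyman orthogonality $f'(0)=0$ by a lengthy explicit Gateaux-derivative computation over all nuisance coordinates (including $\check{\mathbb{P}}(\mathcal{A}_j)$), and bounds $f''$ by the product of nuisance errors. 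You instead compute the drift in closed form via the tower property and Assumption \ref{assump:unconfound}, exposing the exact product structure $E\big[\tfrac{\mathds{1}(X\in\mathcal{A}_j)}{\hat{P}(\mathcal{A}_j)}\tfrac{e_t-\hat{e}_t}{\hat{e}_t}(p_t-\hat{p}_t)\big]$ and applying Cauchy--Schwarz. Your route is more elementary and exploits the special double-robust structure of this functional; the paper's Taylor-in-$r$ argument is heavier but generalizes to functionals whose remainder is only second order rather than an exact product.

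One imprecision needs repair before your argument delivers the stated covariance. You define $\varphi_{t,j}=D_{t,j}-\alpha_{t,j}$ with the centering \emph{outside} the factor $\mathds{1}(X\in\mathcal{A}_j)/P(\mathcal{A}_j)$, whereas the paper's Eq (8) has it inside: $\varphi_{t,j}=D_{t,j}-\tfrac{\mathds{1}(x\in\mathcal{A}_j)}{P(\mathcal{A}_j)}\alpha_{t,j}$. These differ by the mean-zero but nondegenerate term $\alpha_{t,j}\big(1-\tfrac{\mathds{1}(x\in\mathcal{A}_j)}{P(\mathcal{A}_j)}\big)$, so $P[\bm{\varphi}_t\bm{\varphi}_t^{\intercal}]$ is not the same under the two definitions. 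Correspondingly, your drift term is \emph{not} $\op(n^{-1/2})$ as written: after conditioning one gets
\begin{align*}
P\hat{D}_{t,j}-\alpha_{t,j}=\frac{P(\mathcal{A}_j)-\hat{P}(\mathcal{A}_j)}{\hat{P}(\mathcal{A}_j)}\,\alpha_{t,j}+\frac{1}{\hat{P}(\mathcal{A}_j)}E\Big[\mathds{1}(X\in\mathcal{A}_j)\frac{e_t-\hat{e}_t}{\hat{e}_t}(p_t-\hat{p}_t)\Big],
\end{align*}
and the first piece is $\Op(n^{-1/2})$, not negligible. Linearizing it as $-(P_n-P)\tfrac{\mathds{1}(X\in\mathcal{A}_j)}{P(\mathcal{A}_j)}\alpha_{t,j}+\op(n^{-1/2})$ and folding it into the leading term is exactly what produces the paper's influence function with the centering inside. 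You allude to this ("absorbed by the $\alpha_{t,j}$-centering") but do not carry it out; with that step made explicit, and with your stated control of the empirical-process term (which is where the concern about the data-adaptive $\hat{p}_t^{(K)}$ remaining in a Donsker class must be settled, as you correctly anticipate), the argument is complete and equivalent to the paper's conclusion.
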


Theorem \ref{thm:multi-efficiency} says that our conditional risk estimator converges in distribution to a multivariate Gaussian distribution. For any subgroups under consideration, the variance of our conditional risk estimator attains the semiparametric efficiency bound. Theorem \ref{thm:multi-efficiency} also justifies the validity of the simultaneous confidence interval provided in Eq to be presented \eqref{eq:multi-subgroup-CI} in Section \ref{subsec:confidence-interval}. 

Derivations of the efficient influence functions for relative risk, odds ratio and absolute risk difference estimators are provided in Web Appendix \ref{appendix:rr-or-semiparametric}. We summarize the large sample properties of $\bm{{\alpha}_{\textbf{RR}}} $, $\bm{{\alpha}_{\textbf{OR}}} $, and $\bm{{\alpha}_{\textbf{ARD}}} $ in the following Proposition \ref{thm:RR-OR}, which demonstrates that the variance of the proposed causal effect estimators attains the semiparametric efficiency bound. The proof of the proposition below can be found in Web Appendix \ref{appendix:proof-theorem-1}.

\begin{prop}\label{thm:RR-OR}
Under Assumptions \ref{assump:unconfound} - \ref{assumption:regularity-nuisance},
define the vector of the efficient influence function $\bm{\varphi}_{\text{RR}} =(\varphi_{\text{RR},1},\ldots,\varphi_{\text{RR},d})^{\intercal}$, the vector of the efficient influence function $\bm{\varphi}_{\text{OR}} = (\varphi_{\text{OR},1},\ldots,\varphi_{\text{OR},d})^{\intercal}$, and the vector of the efficient influence function $\bm{\varphi}_{\text{ARD}} = (\varphi_{\text{ARD},1},\ldots,\varphi_{\text{ARD},d})^{\intercal}$, where $\varphi_{\text{RR},j}$, $\varphi_{\text{OR},j}$,  and $\varphi_{\text{ARD},j}$ are the efficient influence functions (as given in Eq (9)-(11)) measured at a realized data point $o = (y,t,x) $.
The proposed causal effect estimators satisfy that as $n\rightarrow \infty$, $ \sqrt{n}\big(  \hat{\bm{\alpha}}_{\text{RR}} -  {\bm{\alpha}}_{\text{RR}} \big) \leadsto \mathcal{N}\Big(0, P[\bm{\varphi}_{\text{RR}}\bm{\varphi}_{\text{RR}}^{\intercal}] \Big)$,  $ \sqrt{n}\big(  \hat{\bm{\alpha}}_{\text{OR}} -  {\bm{\alpha}}_{\text{OR}} \big) \leadsto \mathcal{N}\Big(0, P[\bm{\varphi}_{\text{OR}}\bm{\varphi}_{\text{OR}}^{\intercal}] \Big)$ and $ \sqrt{n}\big(  \hat{\bm{\alpha}}_{\text{ARD}} -  {\bm{\alpha}}_{\text{ARD}} \big) \leadsto \mathcal{N}\Big(0, P[\bm{\varphi}_{\text{ARD}}\bm{\varphi}_{\text{ARD}}^{\intercal}] \Big)$ (See the precise definitions of $\varphi_{\text{RR},j}$, $\varphi_{\text{OR},j}$, and $\varphi_{\text{ARD},j}$ in Section \ref{sec:conditions-notations}).

\end{prop}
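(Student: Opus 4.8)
The plan is to reduce the proposition to Theorem \ref{thm:multi-efficiency} together with the delta method, exploiting the fact that each of the three targets is a smooth finite-dimensional transformation of the pair of building-block parameters $(\bm{\alpha}_1,\bm{\alpha}_0)$. Writing $\bm{\alpha}=(\bm{\alpha}_1^\intercal,\bm{\alpha}_0^\intercal)^\intercal\in\mathbb{R}^{2d}$, we have coordinatewise $\alpha_{\text{RR},j}=g^{\text{RR}}_j(\bm{\alpha})=\alpha_{1,j}/\alpha_{0,j}$, $\alpha_{\text{OR},j}=g^{\text{OR}}_j(\bm{\alpha})=\{\alpha_{1,j}(1-\alpha_{0,j})\}/\{\alpha_{0,j}(1-\alpha_{1,j})\}$, and $\alpha_{\text{ARD},j}=g^{\text{ARD}}_j(\bm{\alpha})=\alpha_{1,j}-\alpha_{0,j}$. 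Under Assumption \ref{assumption:overlap} the denominators $\alpha_{0,j}$ and $1-\alpha_{1,j}$ are bounded away from zero, so each $g_j$ is continuously differentiable on a neighborhood of the true $\bm{\alpha}$, and the associated Jacobian matrices $\bm{J}_{\text{RR}},\bm{J}_{\text{OR}},\bm{J}_{\text{ARD}}\in\mathbb{R}^{d\times 2d}$ are well-defined and finite.

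The first step is to upgrade Theorem \ref{thm:multi-efficiency} from the marginal statements for $\hat{\bm{\alpha}}_1$ and $\hat{\bm{\alpha}}_0$ to a single joint statement for the stacked estimator $\hat{\bm{\alpha}}$. This follows from the same argument used to prove Theorem \ref{thm:multi-efficiency}: the modified iTMLE recursion in Section \ref{sec:Estimate-RR-OR}, which places both clever covariates $\tilde{S}_1^{(k-1)}$ and $\tilde{S}_0^{(k-1)}$ in one logistic regression, drives the sample analogue of the stacked efficient influence function to zero, yielding the asymptotically linear expansion $\sqrt{n}(\hat{\bm{\alpha}}-\bm{\alpha})=n^{-1/2}\sum_{i=1}^n\bm{\varphi}(O_i)+\op(1)$ with $\bm{\varphi}=(\bm{\varphi}_1^\intercal,\bm{\varphi}_0^\intercal)^\intercal$. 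The multivariate central limit theorem then gives $\sqrt{n}(\hat{\bm{\alpha}}-\bm{\alpha})\leadsto\mathcal{N}(0,P[\bm{\varphi}\bm{\varphi}^\intercal])$, where the joint covariance captures the generally nonzero correlation between the arm-$1$ and arm-$0$ estimators induced by sharing the same sample.

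With joint normality in hand, I would apply the delta method coordinatewise to each transformation, obtaining $\sqrt{n}(\hat{\bm{\alpha}}_{\text{RR}}-\bm{\alpha}_{\text{RR}})\leadsto\mathcal{N}(0,\bm{J}_{\text{RR}}P[\bm{\varphi}\bm{\varphi}^\intercal]\bm{J}_{\text{RR}}^\intercal)$, and analogously for the odds ratio and risk difference. It then remains to verify that each delta-method variance coincides with the claimed efficient variance, i.e. $\bm{J}_{\text{RR}}P[\bm{\varphi}\bm{\varphi}^\intercal]\bm{J}_{\text{RR}}^\intercal=P[\bm{\varphi}_{\text{RR}}\bm{\varphi}_{\text{RR}}^\intercal]$, and likewise for OR and ARD. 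This is precisely the chain rule for efficient influence functions: since each target is a smooth map of a pathwise-differentiable parameter, its canonical gradient is the Jacobian applied to the canonical gradient of the base parameter, so $\bm{\varphi}_{\text{RR}}=\bm{J}_{\text{RR}}\bm{\varphi}$ (and similarly for OR, ARD). I would confirm that the $\varphi_{\text{RR},j},\varphi_{\text{OR},j},\varphi_{\text{ARD},j}$ recorded in Eq (9)--(11) agree with these Jacobian-transformed expressions by a direct computation of the partial derivatives of $g_j$, which simultaneously establishes asymptotic normality and efficiency.

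The main obstacle is not the delta method itself but justifying the joint asymptotically linear expansion with the correct stacked influence function: one must check that the two-component update in Section \ref{sec:Estimate-RR-OR} solves both arms of the efficient score equation simultaneously and that the second-order remainder is $\op(n^{-1/2})$ under Assumption \ref{assumption:regularity-nuisance}. This demands the same empirical-process and remainder control as in Theorem \ref{thm:multi-efficiency}, now carried out for the combined $t\in\{0,1\}$ system, with the cross-arm covariance terms tracked carefully so that the off-diagonal blocks of $P[\bm{\varphi}\bm{\varphi}^\intercal]$ feed correctly into the delta-method variance. Verifying the algebraic identity $\bm{\varphi}_{\text{RR}}=\bm{J}_{\text{RR}}\bm{\varphi}$ against Eq (9), and its analogues, is then routine differentiation.
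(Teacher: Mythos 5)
Your proposal is correct and follows essentially the same route as the paper: the paper also derives $\bm{\varphi}_{\text{ARD}},\bm{\varphi}_{\text{RR}},\bm{\varphi}_{\text{OR}}$ by applying the multivariate delta method (Jacobian of the transformation) to the stacked influence functions of $(\bm{\alpha}_1,\bm{\alpha}_0)$, and then confirms efficiency by checking that these Jacobian-transformed functions are gradients lying in the tangent space of the model — the explicit pathwise-derivative computation that your "chain rule for efficient influence functions" step summarizes. Your added care about establishing the joint asymptotically linear expansion for the two-arm update is a point the paper treats only implicitly, but it does not change the argument.
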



\subsection{ Regularity conditions}\label{sec:conditions-notations}

In this section, we introduce additional notation and assumptions adopted in the theoretical results. 
Recall that $\{O_i\}_{i=1}^n: =\{ (Y_i, T_i, X_i)\}_{i=1}^n $ are i.i.d. random variables defined on the space $\mathcal{O}$ with respect to a probability measure $P$. 
If $\mathcal{F}$ is a collection of real-valued functions defined on $\mathcal{O}$, we assume that $Pf = \int f \mbox{d} P$ exists for each $f\in\mathcal{F}$. Note that such a notation can be more helpful as it allows us to conveniently work with random functions. We use $E_X[f(X)]$ to denote the expectation taken with respect to the random variable $X$ when it is more convenient to simplify notation. Given the probability measure $P$, our target parameter $\bm{\alpha}_t$ can also be written as a statistical function of $P$, denoted as $\bm{\alpha}_t(P)$. Let $\mathcal{H}$ be a convex set of functions such that the true nuisance parameter 
$\bm{\eta}_0\triangleq({e}(x), {p}_1(x), {p}_0(x), {{P}}(\mathcal{A}_1), \ldots, {{P}}(\mathcal{A}_d)) \in\mathcal{H}.$
Let $\mathcal{H}_n \subset\mathcal{H} $ denote the nuisance estimator realization set, i.e., the estimator of the nuisance parameters satisfy $\hat{\bm{\eta}} =(\hat{e}_t(x), \hat{p}_1(x), \hat{p}_0(x), \hat{P}(\mathcal{A}_1), \ldots, \hat{P}(\mathcal{A}_d))\in \mathcal{H}_n$.

Let $c$, $q$, and $C$ be fixed strictly positive constants, where $q>2$. Let $(\xi_n)_{n=1}^{\infty}$ and  $(\Delta_n)_{n=1}^{\infty}$ be sequences of positive constants approaching $0$. Denote the $l_q$-norm with respect to a probability measure $P$ as $||\cdot||_{P,q}$, e.g. $||f(X)||_{P,q} := (\int |f(x)|^q dP(x))^{1/q}$. For $o\in\mathcal{O}$, we define $\bm{\varphi}_t(o;\bm{\alpha}_t, \bm{\eta}_0) \triangleq \big(\varphi_{t,1}, \ldots, \varphi_{t,d}  \big)^{\intercal}$ as the vector of the efficient influence function for estimating $\bm{\alpha}_t$, $\bm{\varphi}_\text{RR}(o;\bm{\alpha}_\text{RR}, \bm{\eta}_0) \triangleq \big(\varphi_{\text{RR},1}, \ldots, \varphi_{\text{RR},d}  \big)^{\intercal}$ as the vector of the efficient influence function for estimating $\bm{\alpha}_\text{RR}$,  $\bm{\varphi}_{\text{OR}}(o;\bm{\alpha}_\text{OR}, \bm{\eta}_0) \triangleq \big(\varphi_{\text{OR},1}, \ldots, \varphi_{\text{OR},d}  \big)^{\intercal}$ as the vector of the efficient influence function for estimating $\bm{\alpha}_\text{OR}$, 
and $\bm{\varphi}_{\text{ARD}}(o;\bm{\alpha}_\text{ARD}, \bm{\eta}_0) \triangleq \big(\varphi_{\text{ARD},1}, \ldots, \varphi_{\text{ARD},d}  \big)^{\intercal}$ as the vector of the efficient influence function for estimating $\bm{\alpha}_\text{ARD}$, 
where for $j=1, \ldots, d$,
\begin{align}
\varphi_{t,j} \triangleq \varphi_{t,j}(o;\bm{\alpha}_t, \bm{\eta}_0) & = \frac{ \mathds{1}(x \in \mathcal{A}_j) }{P(\mathcal{A}_j)}\Big[ \Big(y - p_t(x)\Big)\frac{\mathds{1}(T= t) }{e_t(x)} + p_t(x) - \alpha_{t,j}\Big], \\
    \varphi_{\text{RR},j} \triangleq \varphi_{\text{RR},j}(o;\bm{\alpha}_{\text{RR}}, \bm{\eta}_0)    & \ = \frac{ \mathds{1}(x\in \mathcal{A}_j) }{P(\mathcal{A}_j)}\Big[ \frac{1}{\alpha_{0,j}}\Big(\big(y - p_1(x)\big)\frac{t}{e_1(x)}+ p_1(x)-\alpha_{1,j}\Big)\\
     &\quad\quad\quad\quad\quad\quad\quad +\frac{\alpha_{1,j}}{\alpha_{0,j}^2}\Big(\frac{1-t}{e_0(x)}\big(y-p_0(x)\big)+p_0(x)-\alpha_{0,j}\Big)\Big],\nonumber\\
    \varphi_{\text{OR},j} \triangleq \varphi_{\text{OR},j}(o;\bm{\alpha}_{\text{OR}}, \bm{\eta}_0) & \ =  \frac{ \mathds{1}(x \in \mathcal{A}_j) }{P(\mathcal{A}_j)}\Big[ \frac{1-\alpha_{0,j}}{\alpha_{0,j}(1-\alpha_{1,j})^2}\Big(\big(y - p_1(x)\big)\frac{t}{e_1(x)}+ p_1(x)-\alpha_{1,j}\Big)\\
&\quad\quad\quad\quad\quad\quad\quad - \frac{\alpha_{1,j}}{\alpha_{0,j}^2(1-\alpha_{1,j})}\Big(\frac{1-t}{e_0(x)}\big(y-p_0(x)\big)+p_0(x)-\alpha_{0,j}\Big)\Big]. \nonumber \\
\varphi_{\text{ARD},j} \triangleq \varphi_{\text{ARD},j}(o;\bm{\alpha}_{\text{ARD}}, \bm{\eta}_0) & \ =  \frac{ \mathds{1}(x \in \mathcal{A}_j) }{P(\mathcal{A}_j)}\Big[\Big(\big(y - p_1(x)\big)\frac{t}{e_1(x)}+ p_1(x)-\alpha_{1,j}\Big)\\
&\quad\quad\quad\quad\quad\quad\quad - \Big(\frac{1-t}{e_0(x)}\big(y-p_0(x)\big)+p_0(x)-\alpha_{0,j}\Big)\Big].\nonumber
\end{align}

\begin{assumption}\label{assumption:regularity-score}
The function class $\{ \bm{\varphi}(o;\bm{\alpha}_t, \bm{\eta}), \bm{\eta}\in \mathcal{H}\}$ is a Donsker class.
\end{assumption}

\begin{assumption}\label{assumption:regularity-nuisance}
The nuisance parameter estimator $\hat{\bm{\eta}}$  satisfies that $\sup_{\bm{\eta}\in\mathcal{H}_n }|| \bm{\eta} - \bm{\eta}_0||_2 = o_P(1)$ and $||\hat{e}(X) - e(X)||_{P,2} \times  ||\hat{p}_t(X) - p_t(X)||_{P,2}\leq \xi_n n^{-1/2}$ holds with probability 1 when $n$ tends to infinity. 
\end{assumption}

Assumption \ref{assumption:overlap} assumes that all units have non-zero probabilities of being assigned to the treatment or the control arm. Such an assumption has been frequently considered in the causal inference literature. In addition, because we estimate the treatment effects across multiple subgroups, we require each subgroup to satisfy the positivity condition as well. 
Assumption \ref{assumption:regularity-score} assumes the Donsker class condition for the class of efficient influence functions. This Donsker class condition can be weakened by conducting cross-fitting (see Web Appendix \ref{appendix:cv-TMLE} for implementation details) and at the expense of more complicated proofs (see \citet{zheng2010asymptotic}, for example). Additionally, \citet{benkeser2016highly} propose the highly adaptive lasso (HAL) estimator which guarantees $\sqrt{n}$-rate of convergence in the initial estimation step. Assumption \ref{assumption:regularity-nuisance} imposes regularity conditions on the nuisance parameter estimator.  The second part in Assumption \ref{assumption:regularity-nuisance}  bounds the product of errors of the nuisance parameter estimators $\hat{p}_t(X)$ and $\hat{e}(X)$.

\subsection{Duality theory}
In this section, we provide an alternative explanation of the iterative one-step TMLE from a duality theory perspective. Recall that in the discussed method, we replace the univariate clever covariate $\hat{S}_{tj}(X_i)$ with a multi-dimensional vector of clever covariate in the logistic regression in Eq \eqref{eq:tmle-constraint}.
Because we hope to limit the sum of the squared influences of the clever covariates on updating $p_t(\cdot)$ when $d$ is a large number, we impose a constraint that $\Vert\bm{\varepsilon}\Vert_2 \leq \delta$, where $\bm{\varepsilon} = (\varepsilon_1, \dots, \varepsilon_d)^\intercal$. Thus, solving for $\bm{\varepsilon}$ can be equivalently reformulated as iteratively solving the constraint optimization problem below 
\begin{align}\label{eq:primal-problem}
 \hat{\bm{\varepsilon}}^{(k)}  = \underset{ ||\bm{\varepsilon}||\leq \delta }{\arg\min} 
-\frac{1}{n}\sum_{i:T_i=t}  \Big[ Y_i & \Big( \text{logit}\big(\hat{p}^{(k-1)}_t(X_i)\big)+ \sum_{j=1}^d\varepsilon_j \cdot  \hat{S}_{t,j}(X_i) \Big) \\
&\nonumber - \log \Big( 1 + \exp^{ \text{logit}\big(\hat{p}^{(k-1)}_t(X_i)\big)+ \sum_{j=1}^d\varepsilon_j \cdot \hat{S}_{t,j}(X_i) } \Big) \Big]. 
 \end{align}
 where $\hat{S}_{t,j}(X_i) = \frac{\mathds{1}(X_i\in\mathcal{A}_j)}{\mathds{P}(\mathcal{A}_j)}\frac{\mathds{1}(T_i=t)}{\hat{e}_t(X_i)}$. We refer to the above problem as the primal problem. We show that the above optimization problem has the following dual: 
\begin{align}\label{eqn:dual-problem}
 \hat{\lambda}^{(k)}&= \underset{\lambda\geq 0}{\arg\max} -\frac{1}{n}\sum_{i:T_i=t}   \Big[ Y_i \Big( \text{logit}\big(\hat{p}^{(k-1)}_t(X_i)\big)+ \tilde{S}_t(X_i) \cdot \frac{||\bm{\hat{\phi}}(X_i)||_2}{\lambda} \Big)\\
  &\nonumber\quad\quad\quad\quad\quad\quad\quad\quad\quad\quad\quad- \log \Big( 1 + \text{exp}^{ \text{logit}\big(\hat{p}^{(k-1)}_t(X_i)\big)+  \tilde{S}_t(X_i) \cdot \frac{||\bm{\hat{\phi}}(X_i)||_2}{\lambda} } \Big) \Big] - \lambda\delta,
 \end{align} 
 where $\tilde{S}_t(X_i)$ is defined in Section \ref{subsec:proposal}. Providing that the strong duality holds, this primal-dual relationship says that we can estimate the regression coefficient $\hat{\bm{\varepsilon}}^{(k)}$ in the primal problem by either solving the primal problem \eqref{eq:primal-problem} or by solving the dual problem \eqref{eqn:dual-problem} and then exploiting the primal-dual relationship. After reparametrizing  \eqref{eqn:dual-problem} with $\gamma = \frac{||\bm{\hat{\phi}}(X_i)||_2}{\lambda}$, in Lemma \ref{lemma:lagrangian}, we demonstrate that the  unconstrained optimization problem:
\begin{small}
\begin{align}\label{eq:optimization-dual-problem}
\hat{{\gamma}}^{(k)}  = \underset{\gamma\geq 0}{\arg\min}\  -\frac{1}{n}\sum_{i: T_i=t}   \Big[ Y_i & \Big( \text{logit}\big(\hat{p}^{(k-1)}_t(X_i)\big)+ \tilde{S}_t(X_i) \cdot \gamma \Big)- \log \Big( 1 + \text{exp}^{ \text{logit}\big(\hat{p}^{(k-1)}_t(X_i)\big)+  \tilde{S}_t(X_i) \cdot \gamma } \Big) \Big],
\end{align}
\end{small}
is the reparametrized dual problem of the primal problem, and the updated estimate obtained from primal problem with large $d$ yields the same estimate as what we propose whenever $\delta$ is sufficiently close to zero. 

 \begin{lem}[Primal and dual relationship]\label{lemma:lagrangian} The optimization problems \eqref{eq:primal-problem} and \eqref{eqn:dual-problem} form a primal-dual pair, and their solutions satisfy 
$ \varepsilon_j^{(k)} = \frac{\hat{\phi}^{(k-1)}_j(Y_i, T_i, X_i)}{\lambda} $, $j=1, \ldots,d$, where recall $\hat{\phi}^{(k-1)}_j(Y_i, T_i, X_i)$ is defined as 
$$\hat{\phi}^{(k-1)}_j(Y_i, T_i,X_i) = \frac{\mathds{1}(X_i\in\mathcal{A}_j)}{\hat{\mathds{P}}(\mathcal{A}_j)}\frac{\mathds{1}1(T_i=t)}{\hat{e}_t(X_i)}(Y_i - \hat{p}_1^{(k-1)}(X_i)).$$
Reparametrizing $\gamma = \frac{||\bm{\hat{\phi}}(X_i)||_2}{\lambda}$ in the dual problem \eqref{eqn:dual-problem}. Whenever $||\hat{\phi}(X_i)||_2\cdot\delta/\gamma \rightarrow 0$, the dual problem can be represented as
\begin{align*}
  &\arg\min_{\gamma> 0} -\frac{1}{n}\sum_{i=1}^n   \Big[ Y_i \Big( \text{logit}\big(\hat{p}^{(k-1)}_1(X_i)\big)+ \tilde{S}_1(X_i) \cdot \gamma \Big) - \log \Big( 1 + \text{exp}^{ \text{logit}\big(\hat{p}^{(k-1)}_1(X_i)\big)+  \tilde{S}_1(X_i) \cdot \gamma } \Big) \Big],\tag{Reparametrized dual problem}
 \end{align*}%
which yields the same solution as the primal problem. 
\end{lem}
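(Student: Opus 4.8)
The plan is to treat the primal problem \eqref{eq:primal-problem} as a smooth convex program over a Euclidean ball, extract its Lagrangian dual, and show that the optimal $\bm{\varepsilon}$ is forced to lie along a single direction — which is precisely what collapses the $d$-dimensional fluctuation into the one-dimensional update \eqref{eq:optimization-dual-problem}. First I would verify the structural hypotheses needed for strong duality. For fixed $\hat{p}_t^{(k-1)}$, $\hat{e}_t$, and $\hat{P}(\mathcal{A}_j)$, the objective $f(\bm{\varepsilon})$ in \eqref{eq:primal-problem} is the negative log-likelihood of a logistic regression and hence convex in $\bm{\varepsilon}$ (a sum of affine terms and the convex log-sum-exp function), while the feasible set $\{\bm{\varepsilon}:\|\bm{\varepsilon}\|_2\le\delta\}$ is convex and compact with nonempty interior, so Slater's condition holds (take $\bm{\varepsilon}=0$). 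Consequently strong duality holds and the KKT system is both necessary and sufficient for optimality.

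Next I would write the Lagrangian $L(\bm{\varepsilon},\lambda)=f(\bm{\varepsilon})+\lambda(\|\bm{\varepsilon}\|_2-\delta)$ with $\lambda\ge 0$ and differentiate. A direct computation gives $\partial f/\partial\varepsilon_j=-\tfrac{1}{n}\sum_{i:T_i=t}\hat{S}_{t,j}(X_i)(Y_i-\hat{p}_t^{(k)}(X_i))$, which at the active constraint, and in the small-step regime where $\hat{p}_t^{(k)}\approx\hat{p}_t^{(k-1)}$, equals $-\tfrac{1}{n}\Phi_j^{(k-1)}$ with $\Phi_j^{(k-1)}=\sum_{l=1}^n\hat{\phi}_j^{(k-1)}(Y_l,T_l,X_l)$. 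The stationarity condition $\nabla f(\bm{\varepsilon}^{(k)})+\lambda\,\bm{\varepsilon}^{(k)}/\|\bm{\varepsilon}^{(k)}\|_2=0$ then forces $\bm{\varepsilon}^{(k)}$ to be parallel to the aggregated-influence vector $\bm{\Phi}^{(k-1)}=(\Phi_1^{(k-1)},\ldots,\Phi_d^{(k-1)})^\intercal$, yielding $\varepsilon_j^{(k)}\propto\Phi_j^{(k-1)}/\lambda$ up to a scalar absorbed into the multiplier; this is the claimed primal--dual relationship.

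The key simplification then follows by substituting this parallel structure back. Because $\bm{\varepsilon}^{(k)}$ points along $\bm{\Phi}^{(k-1)}$, the linear combination $\sum_{j=1}^d\varepsilon_j^{(k)}\hat{S}_{t,j}(X_i)$ collapses to $\gamma\,\tilde{S}_t^{(k-1)}(X_i)$, where $\tilde{S}_t^{(k-1)}$ is exactly the normalized clever covariate of \eqref{eq:clever-covariate} and $\gamma=\|\bm{\Phi}^{(k-1)}\|_2/\lambda$. Plugging this into $\min_{\bm{\varepsilon}}L(\bm{\varepsilon},\lambda)$ and maximizing over $\lambda$ reproduces the dual \eqref{eqn:dual-problem}, in which the residual penalty is $\lambda\delta=\|\bm{\Phi}^{(k-1)}\|_2\,\delta/\gamma$. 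Finally, reparametrizing by $\gamma$ and letting $\|\bm{\Phi}^{(k-1)}\|_2\,\delta/\gamma\to 0$ (equivalently $\delta\to 0$) removes this penalty term, leaving the unconstrained one-dimensional program \eqref{eq:optimization-dual-problem}; strong duality together with a continuity argument then identifies its minimizer with the primal solution.

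I expect the main obstacle to be this limiting step: dropping the $-\lambda\delta$ term genuinely changes the optimization, so I need a careful continuity/stability argument — using convexity and the continuous dependence of the constrained minimizer on $\delta$ — to justify that the penalty-free one-dimensional solution coincides with the constrained primal solution as $\delta\to 0$. A secondary bookkeeping issue is reconciling the iteration index produced by $\nabla f$ (which naturally involves $\hat{p}_t^{(k)}$) with the definition of $\hat{\phi}_j^{(k-1)}$ (which uses $\hat{p}_t^{(k-1)}$); this discrepancy is harmless in the small-$\delta$ regime but must be stated cleanly so the identification $\varepsilon_j^{(k)}\propto\Phi_j^{(k-1)}/\lambda$ is exact in the relevant limit.
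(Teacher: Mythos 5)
Your proposal follows essentially the same route as the paper's own proof: form the Lagrangian for the $\ell_2$-ball-constrained logistic fluctuation, use stationarity to force $\bm{\varepsilon}^{(k)}$ to be proportional to the vector of aggregated influences $\hat{\phi}^{(k-1)}_j/\lambda$, observe that this collapses $\sum_j \varepsilon_j \hat{S}_{t,j}$ onto the normalized clever covariate $\tilde{S}_t$, and then reparametrize by $\gamma = \|\bm{\hat{\phi}}\|_2/\lambda$ and drop the $\lambda\delta$ penalty in the limit. If anything you are more careful than the paper on two points it glosses over — the correct subgradient $\lambda\bm{\varepsilon}/\|\bm{\varepsilon}\|_2$ of the norm constraint (the paper differentiates as if the penalty were quadratic) and the mismatch between the $\hat{p}_t^{(k)}$ appearing in the logistic gradient and the $\hat{p}_t^{(k-1)}$ in the definition of $\hat{\phi}_j^{(k-1)}$ — so no changes are needed.
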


The dual formulation in the discussed method has several benefits. In the reparametrized dual problem \eqref{eq:optimization-dual-problem}, because the estimated propensity score only enters the estimation process after being self-normalized in $\tilde{S}_t(X_i)$,  the discussed approach has the added benefit of being robust to the presence of small estimated propensity scores. Small propensity scores are often encountered in datasets with unbalanced covariate distribution across the treatment and control groups, and such unbalancedness can lead to conventional estimators having substantial bias and large variances. Many numerical studies have found that similar self-normalization of propensity scores provides much more stable estimates of the treatment effects in finite samples. While the original formulation of the primal problem involves a sum over $d$ inverse propensity score weighted clever covariates, its performance can be sensitive to the presence of small propensity score in finite samples.

\subsection{Heuristics of TMLE from a semiparametric inference perspective}\label{subsec:heuristics}

To fully appreciate the one-step targeted maximum likelihood estimator introduced in Section \ref{sec:limitation-of-one-step-TMLE} from a semiparametric perspective, we start with introducing some basics of the classical semiparametric inference framework. To facilitate the discussion, we slightly abuse notations in using $\alpha_0 = E_{X}[ E_{Y|X, T=t}[Y| X , T=t] ]$ (the mean outcome under the treatment arm $t$) to denote the target parameter and in using $P_0$ to denote the true probability measure. Specifically, in a semiparametric model, we observe an i.i.d. sample $ \{O_i\}_{i=1}^n =\{ (Y_i, T_i, X_i)\}_{i=1}^n $ defined on the space $\mathcal{O}$ with a probability measure $P_0$ that possesses a density. The density belongs to the class $\mathcal{M}=\Big\{ p\big(o; \alpha, \eta\big), \ {\alpha}\in \mathcal{A}, \ {\eta}\in\mathcal{H}\Big\}$, 
with respect to some dominating measure $\nu$, where $\mathcal{A}\subset\mathbb{R}$, and $\mathcal{H}$ is an infinite-dimensional set. We denote the true density that generates the data by $p_0\big(o; \alpha_0, \eta_0\big) \in  \mathcal{M}$. Then, 
the parameter of interest is the finite-dimensional parameter $\alpha_0 $, and the nuisance parameter is the infinite-dimensional parameter $\eta_0 = \big(p_t(x), e(x) , f(x) \big)$, where $p_t(x)$ is the conditional density function of $Y$ given $T =t $ and $X= x$, $e(x)$ is the propensity score, and $f(x)$ is the marginal density of $X$ evaluated at $x$. Given the probability measure $P_0$, we can also write the target parameter ${\alpha}_0$ as a statistical function of $P_0$, denoted as ${\alpha}_0(P_0)$.

Under the above semiparametric statistics framework, a natural question raised here: how to evaluate the statistical efficiency of estimators for ${\alpha}_0$ in a semiparametric model? As is often the case in semiparametric statistics, infinite-dimensional problems are tackled by first working with a finite-dimensional problem as an approximation and then taking limit to infinity (\citealp{vaart2000asymptotic}). Therefore, the first step in a semiparametric model is to consider a simpler finite-dimensional ``parameteric submodel" contained in $\mathcal{M}$, and use the theory and methods developed in classical parametric models to obtain an efficient estimator (that typically attains the Cram\`er-Rao lower bound) of $\alpha_0$. Similar to the Cram\`er-Rao lower bound in parametric models, we use ``semiparametric efficiency lower bound" as a metric for evaluating the asymptotic behavior of the semiparametric estimators. Heuristically, the semiparametric efficiency lower bound is simply the supremum of the Cram\`er–Rao bounds for all ``parametric submodels" for estimating $\alpha_0$ (\citealp{tsiatis2006semi}). 

Formally, we define a parametric submodel as 
$\mathcal{M}_{\alpha, \eta_{\varepsilon, S_h}}=\Big\{ p\big(o; \alpha,  \eta_{\varepsilon, S_h} \big), \ {\alpha}\in \mathcal{A}, \  \varepsilon\in \mathcal{E}\subset \mathbb{R} \Big\},$
where $S_h$ is the score function indexed by $h$ which is, intuitively, the direction we perturb the nuisance parameter, and $\varepsilon$ is the perturbing magnitude. In this parametric submodel, the true density is obtained by setting $\alpha = \alpha_0$ and $\varepsilon = 0$.  
Following the above definition, \citet{van2016one} have shown that in a parametric submodel with sufficient smoothness, the Cram\`er-Rao lower bound $\text{CR}(\alpha_0, S_h ) $ for estimating $\alpha_0$ in the submodel $\mathcal{M}_{\alpha, \eta_{\varepsilon, S_h}}$ satisfies 
\begin{align}\label{eq:CR-lower-bound}
\text{CR}(\alpha_0, S_h ) =\lim_{\varepsilon\rightarrow 0} \frac{ \big(\alpha(P_{\varepsilon, S_h}) - \alpha(P_0)\big)^2   }{  -2 {E}_O \big[ \log p(O; \alpha_0,  \eta_{\varepsilon,S_h} ) - \log p (O; \alpha_0, \eta_0)\big] }.
\end{align}
This suggests that $\text{CR}(\alpha_0, S_h ) $ captures the square change in the target parameter divided by the change in the log-likelihood at an infinitesimal $\varepsilon$. 

Because the semiparametric efficiency bound (SPEB) is the supremum of the Cram\`er-Rao lower bound for all parametric submodels, in order to find an estimator of $\alpha_0$ that attains the SPEB, the targeted learning approach looks for a submodel in which a small change in the log-likelihood yields the maximal change in target parameter (\citealp{laantmle2006}). Such a parametric submodel, which maximizes the Cram\`er-Rao lower bound defined in Eq (\ref{eq:CR-lower-bound}), is known as the ``least favorable submodel" in the semiparametric literature (\citealp{vaart2000asymptotic,van2011targeted}). One can thus view the classical targeted learning approach as a principled approach of constructing the least favorable submodel. The one-dimensional least favorable submodel constructed by TMLE allows us to directly work with the conditional likelihood related to the target parameter and perform the usual updating step as in MLE. While the classical MLE method cannot be extended to semiparametric models due to the infinite-dimensional nuisance parameter component, TMLE makes the MLE method feasible in semiparametric models. 

Nevertheless, the least favorable submodel satisfying Eq (\ref{eq:CR-lower-bound}) is only ``locally least favorable" because the density $p(O;\alpha_0,\eta_{\varepsilon,S_h})$ maximizes the Cram\`er-Rao lower bound locally at $\varepsilon=0$. This suggests a practical drawback presents if an initial estimate $\hat{p}_0^{\text{Init}}(o)$ is far away from $p_0(o)$, yielding a large $\hat{\varepsilon}$ as more calibration is needed to push $\hat{p}_0^{\text{Init}}(o)$ towards the truth. A larger calibration yields a larger denominator in Eq (\ref{eq:CR-lower-bound}) and thus smaller change in the target parameter. The consequence is that even though we can iteratively update the initial estimate until $\hat{\varepsilon}\approx 0$, the sample variance has been inflated because each updating step fails to maximize the change in the target parameter while maintaining minimal change in the log-likelihood. 

To resolve the issue that the maximal change in the target parameter is only attained at $\varepsilon=0$, \citet{van2016one} introduce the concept of ``the universal least favorable submodel,"
denoted as   $\mathcal{M}_{\alpha, \eta_{\varepsilon, S_h}}=\Big\{ p\big(o; \alpha,  \eta_{\varepsilon,S_h} \big), \ {\alpha}\in \mathcal{A}, \  \varepsilon\in \mathcal{E}\subset \mathbb{R} \Big\}$.  $\mathcal{M}_{\alpha, \eta_{\varepsilon}}$ is the universal least favorable submodel if
\begin{align*}
      \frac{\partial}{\partial \varepsilon } \Big[  \log p(o; \alpha_0,  \eta_{\varepsilon,S_h} ) - \log p (o; \alpha_0, \eta_0) \Big] = \varphi(o;{\alpha}_0, \eta_{\varepsilon, S_h}),\quad \forall\varepsilon\in \mathcal{E}\subset \mathbb{R}.
\end{align*}
The universal least favorable submodel defined above achieves maximal change in the target parameter as long as $\varepsilon \in \mathcal{E}\subset \mathbb{R}$. The practical benefit of adopting the universal least favorable submodel is that we can avoid inflating the sample variance while performing TMLE updates, especially when the sample size is small. 

 In our example for estimating $\alpha_0$, the universal least favorable submodel takes the form $\mathcal{M}_{\alpha, \eta_{\varepsilon, S_h}}=\Big\{ p\big(o; \alpha,  \eta_{\varepsilon, S_h} \big), \ {\alpha}\in \mathcal{A}, \  \varepsilon\in \mathcal{E}\subset \mathbb{R} \Big\},$
where the nuisance parameter is
\begin{align}\label{eq:nuisance}
\eta_{\varepsilon,S_h}  =  \big(p_t^{\varepsilon,S_h}(x), e(x) , f(x) \big), \quad p_t^{\varepsilon,S_h}(x) =\text{expit}\Big( \text{logit}\big( p_t(x) \big)+ 
\varepsilon\cdot S_h\Big), \quad S_h = t/e_t(x).
\end{align}
Therefore, provided with an initial estimate of the nuisance parameter  $\hat{\eta}_{\varepsilon,S_h}  = (\hat{p}^{\text{Init}}_t(x), \hat{e}_t(x), \hat{f}(x))$ obtained from an i.i.d. sample, the one-step TMLE tries to find $\hat{\varepsilon}$ that minimizes the denominator of the ratio defined in \eqref{eq:CR-lower-bound}, or equivalently, maximizes the likelihood based on the observed data: 
\begin{align*}
\hat{\varepsilon} =&\ \underset{\varepsilon}{\arg\min}\ - \mathbb{E}_n \big[ \log p(O; \alpha_0,  \hat{\eta}_{\varepsilon,S_h} ) - \log p (O; \alpha_0, \eta_0)\big] \\
 =&\ \underset{\varepsilon}{\arg\min}\  -\frac{1}{n}\sum_{i:T_i=t}  \Big[ Y_i \Big( \text{logit}\big(\hat{p}_t(X_i)\big)+ \varepsilon \cdot  \hat{S}_{t}(X_i) \Big)- \log \Big( 1 + \exp^{ \text{logit}\big(\hat{p}_t(X_i)\big)+ \varepsilon \cdot \hat{S}_{t}(X_i) } \Big) \Big],
\end{align*}
where $\hat{S}_t(X_i) = \mathds{1}(T_i=t)/\hat{e}_t(X_i)$. Then, we see that $\hat{\varepsilon}$ is equivalent to the estimated coefficient from running the logistic regression discussed in Section \ref{sec:limitation-of-one-step-TMLE}.

\section{Simultaneous Confidence Intervals }\label{subsec:confidence-interval}

To construct a level-$q$ confidence interval for a single subgroup $j$, we work with
$\hat{\alpha}_{t,j} \pm \Phi^{-1}(1-q/2) \cdot\Big(\frac{\hat{\Sigma}_{t,jj}}{n}\Big)^{1/2}$, 
where $\hat{\Sigma}_t$ is the estimated covariance matrix with
\begin{align}\label{eq:estimated-covariance-matrix}
& \nonumber\hat{\Sigma}_t = \big( \hat{\Sigma}_{t, jk}\big)_{j,k=1}^d = \frac{1}{n}\sum_{i=1}^n \bm{\hat{\varphi}}_{t,i}\bm{\hat{\varphi}}_{t,i}^\intercal , \quad \bm{\hat{\varphi}}_{t,i} = \big(\hat{\varphi}_{t,1}(Y_i,T_i,X_i), \ldots, \hat{\varphi}_{t,d}(Y_i,T_i,X_i)\big)^\intercal,\\
& \hat{\varphi}_{t,j}(O_i) = \frac{1}{n}\sum_{i=1}^n \frac{ \mathds{1}(X_i \in \mathcal{A}_j)}{\hat{P}(\mathcal{A}_j)}\Big[\Big(\frac{T_i}{\hat{e}_t(X_i)}\big(Y_i-\hat{p}_t(X_i)\big)+\hat{p}_t(X_i)-\alpha_{t,j}\Big).
\end{align}

To construct a simultaneous level-$q$ confidence interval though, let $\hat{\kappa}(q,  \tilde{\Sigma}_{t} )$ be a consistent estimate of the $(1-q)$-th quantile of  $\max_{j\in 1, \ldots, d} |Z_j|$, where $(Z_1, \ldots, Z_d)^\intercal \sim N\big(0,\tilde{\Sigma}_t \big) $ with $\tilde{\Sigma}_t = \big( \tilde{\Sigma}_{t, jk}\big)_{j,k=1}^d  $ and $ \tilde{\Sigma}_{t, jk} = \frac{ \hat{\Sigma}_{t, jk} }{\sqrt{ \hat{\Sigma}_{t, jj}\hat{\Sigma}_{t, kk} }}$. Then, the constructed simultaneous confidence interval satisfies 
\begin{align}\label{eq:multi-subgroup-CI}
\lim_{n\rightarrow \infty} {P}\left(\hat{\alpha}_{t,j} \pm\hat{\kappa}(q,  \tilde{\Sigma}_{t} ) \cdot\Big(\frac{\hat{\Sigma}_{t,jj}}{n}\Big)^{1/2}, j =1, \ldots, d\right) = 1-q. 
\end{align}
Such a simultaneous confidence interval ensures that all the confidence intervals cover the corresponding true subgroup parameter at the same time.

\section{Simulation Studies}\label{sec:simulation}

To demonstrate the merit of the proposed method (iTMLE), we compare it with some conventional estimators under overlapping and non-overlapping subgroups cases. We compare the proposed method with a doubly robust (DR) estimator and a generalized linear model estimator (GLM), and we compare the cross-fitted version of iTMLE with the double machine learning (DML) method, since DML also utilizes cross-fitting. Before we present our simulation results, we summarize two main takeaways from the simulation studies for our readers: (1) The proposed method has smaller bias, smaller variance, and lower family-wise error rate (FWER) compared to the considered estimators in finite samples. Recall that FWER refers to the probability of at least one constructed simultaneous confidence interval excluding the truth; (2) With cross-fitting, the proposed method shows enhanced finite sample performance in terms of smaller bias than the implementation without cross-fitting. 

We measure the performance of various estimators according to their  $\sqrt{n}$-scaled biases (computed as the root-$n$ sum of mean differences between the Monte Carlo estimates and the true parameter across multiple subgroups), standard deviations (computed as the root-$n$ sum of standard deviations of the Monte Carlo estimates across multiple subgroup), and FWER (computed as the proportion of Monte Carlo samples in which at least one constructed confidence interval for multiple subgroups excluding the truth). 
We scale the bias and variance by the sample size as they converge to zero as  $n$ goes to infinity. 

\subsection{Simulation design}
Our simulation design mimics observational studies where treatments are assigned based on covariates. 
We simulate 1000 random Monte Carlo samples from: 
 $X = (X_1, \ldots, X_5)^{\intercal} \sim N(0, \Sigma), \Sigma_{ij} = 0.5^{|i-j|}, T \sim \text{Bernoulli}\Big(\text{expit}(X_{1} - 0.5\cdot X_{2} + 0.25 \cdot X_{3} + 0.1 \cdot X_{4})\Big)$, and $Y | T, X \sim \text{Bernoulli}\Big( \text{expit} (21 + T + 27.4\cdot X_{1} + 13.7\cdot X_{2} + 13.7\cdot X_{3} + 13.7\cdot X_{4}) \Big)$. We consider this specific simulation design because the design has been frequently adopted in the causal inference literature \citep[see][for example]{kang2007demystifying, imai2014covariate}. This enables us to better compare our approach with existing methods. Kindly pointed out by an anonymous reviewer, the above simulation design produces rather deterministic outcomes, and we thus provide additional simulation results under an alternative simulation design in Web Appendix \ref{appendix:alternative-sim}.

We consider two types of subgroups: overlapping subgroups and non-overlapping subgroups. Overlapping subgroups with moderate $d$, $d=4$, are generated by
$\mathcal{A}_1 = \{X_1> \Phi^{-1}(0.1)\},  \mathcal{A}_2 = \{ \Phi^{-1}(0.1)< X_2< \Phi^{-1}(0.9)\},  \mathcal{A}_3 = \{X_3 + X_4 >-2 \} , \mathcal{A}_4 = \{ \mathds{1}_{X_4>0.5}>-1\}.$
Non-overlapping subgroups with large $d$, $d=10$, are generated by $\mathcal{A}_{j} = \big\{Q_{X_1}(j/10) <X_1 < Q_{X_1}((j+1)/10 ) \big\}, j = 1, \ldots, 10. $
For simplicity, in the following simulation studies, the considered parameter is  $\bm{\alpha}_1 = (\alpha_{1,1},\ldots,\alpha_{1,d})^\intercal$.

\subsection{Comparison with conventional estimators}\label{Section:Simulation-comparison-with-others}

We generate initial estimates of $e_t(\cdot)$ and $p_t(\cdot)$ through logistic regression, random forest, or gradient boosting, implemented in \texttt{R} packages \texttt{stats}, \texttt{ranger} (\citealp{ranger2020}), and \texttt{xgboost} (\citealp{xgboost2020}). We compare the proposed iterative one-step TMLE method (iTMLE) with the doubly robust estimator, a simple regression adjustment estimator, and the inverse propensity score estimator, which are defined as 
$ \hat{\alpha}_{t,j}^{\text{DR}}  = \frac{1}{n_j}\sum_{i\in\mathcal{A}_j} \Big[ \frac{T_i}{\hat{e}_t(X_i)}(Y_i-\hat{p}^{\text{Init}}_t(X_i)) + \hat{p}^{\text{Init}}_t(X_i)\Big], \ 
     \hat{\alpha}_{t,j}^{\text{GLM}} = \frac{1}{n_j}\sum_{i\in\mathcal{A}_j}  \hat{p}_t^{\text{Init}}(X_i), \ \hat{\alpha}_{t,j}^{\text{IPW}} = \frac{1}{n_j}\sum_{i\in\mathcal{A}_j}  \frac{T_i}{\hat{e}_t(X_i)}Y_i.
$
Simultaneous confidence intervals for these estimators are constructed using standard large sample theory adopted in the literature (see 
\citet{hahn1998propensity} for the doubly robust estimator and  \citet{van2011ipw} for the IPW estimator). We provide finite-sample comparisons through Figure \ref{fig:sim1-comparison}(A) -- (C) for overlapping subgroups, and Figure \ref{fig:sim1-comparison}(D) -- (E) for non-overlapping subgroups. As the IPW estimator has much larger variance than the other estimators, we exclude its results from these figures. From Figure \ref{fig:sim1-comparison}, we observe that the iTMLE estimator outperforms the others for bias, standard deviation, and FWER, regardless of how $e_1(\cdot)$ and $p_1(\cdot)$ are estimated in the first stage. This is in-line with our theoretical results because the proposed estimator consists of a data-adaptive bias correction term (Section \ref{sec:method}), which largely improves its finite sample performance. In addition, among all three initial estimators, random forest overall seems to be a winner.

\begin{figure}[!p]
\centering
\includegraphics[width=\textwidth]{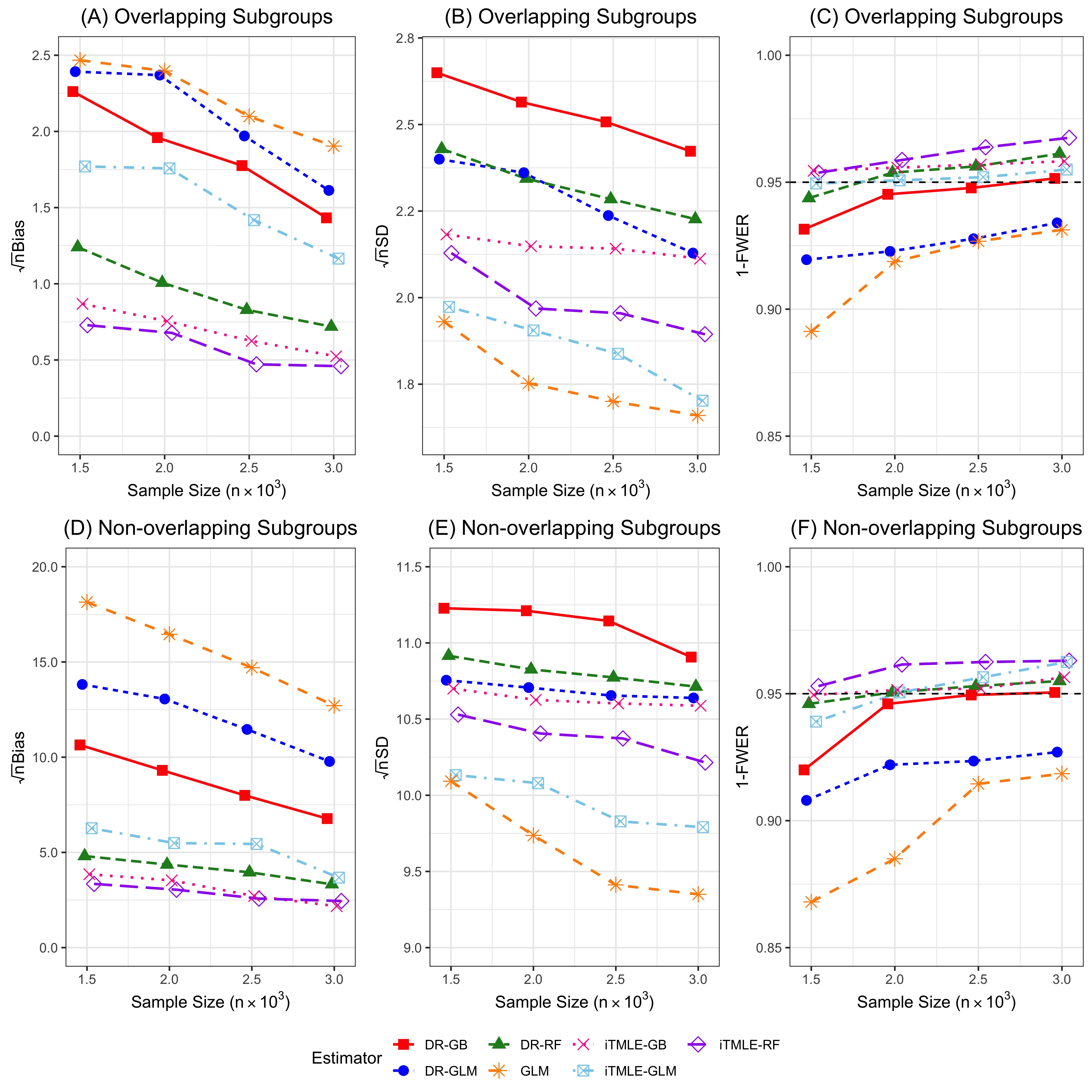}
\caption{Comparison of bias, standard deviation (scaled by root-$n$), and (1-FWER) in overlapping and non-overlapping subgroups. ``iTMLE" denotes the proposed estimator. ``DR" denotes the doubly robust estimator. ``GLM" denotes the generalized linear models. The maximum Monte Carlo standard error of (1-FWER) is 0.026 for iTMLE, 0.028 for DR, and 0.022 for GLM. ``The maximum Monte Carlo standard error of (1-FWER)" refers to the largest standard error of (1-FWER) (out of all three considered estimators for the propensity score and the conditional expectation of the outcome based on logistic regression, random forest, and gradient boosting) computed from Monte Carlo samples.}
\label{fig:sim1-comparison}
\end{figure}




\subsection{Comparison with the double machine learning}

In this part of the simulation study, we compare the performance of the cross-validated version of iterated one-step TMLE for multiple parameters with the double machine learning (DML) method (\citealp{chernozhukov2017double}). DML also involves the estimations of the propensity score model and the conditional mean model, and it is a meta-learning method that relies on Neyman orthogonal score and cross-fitting to generate debiased estimates for the causal estimands. The simulation results of the three-fold cross-validated iTMLE and DML (implemented with the \texttt{R} package \texttt{DoubleML} (\citealp{DoubleML2021R}))
 are presented in Figure \ref{fig:sim2-cv-comparison}. There are two takeaways from the summarized results in Figure \ref{fig:sim2-cv-comparison}. First, the performance of CV-iTMLE surpasses DML. Although DML is rather robust compared to the doubly robust estimator, it still yields larger bias and variance than CV-iTMLE. Second, compared to the iTMLE implementation without cross-fitting (Figure \ref{fig:sim1-comparison}), CV-iTMLE shows a faster convergence rate. We conjecture that the sample splitting step allows the non-parametric estimators in the initial stage to converge faster and thus shows more robust performance (smaller bias, smaller standard deviation, and smaller FWER).

\begin{figure}[!p]
\centering
\includegraphics[width=\textwidth]{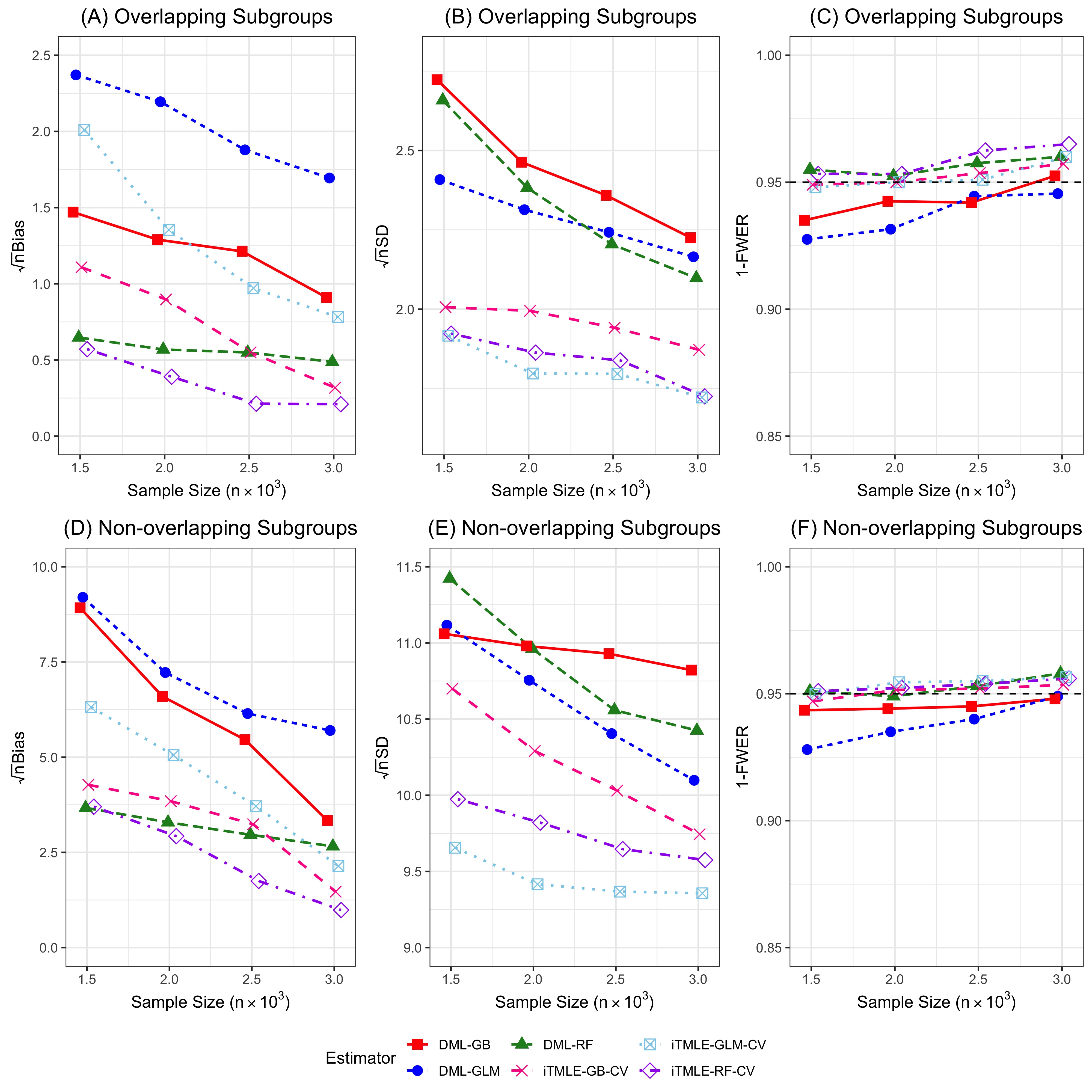}
\caption{Comparison of the cross-validated iTMLE implementation and the double machine learning method. ``iTMLE-CV" denotes the proposed method with cross-fitting. ``DML" denotes the double machine learning method. The maximum Monte Carlo standard error of (1-FWER) is 0.024 for CV-iTMLE and 0.026 for DML. ``The maximum Monte Carlo standard error of (1-FWER)" refers to the largest standard error of (1-FWER) (out of all three considered estimators for the propensity score and the conditional expectation of the outcome based on logistic regression, random forest, and gradient boosting) computed from Monte Carlo samples.}
\label{fig:sim2-cv-comparison}
\end{figure}

\section{Case Study in UK Biobank Data}\label{sec: real-data}


Statins are the most commonly prescribed  cholesterol-lowering medications in the United States. Cholesterol's role in $\beta$-amyloid processing and the potential link between serum cholesterol levels and AD pathology (\citealp{reed2014associations}) have led to the argument that cholesterol-moderating drugs such as statins could reduce the risk of AD onset and progression. However, this argument is controversial by current evidence. Several cohort studies found a negative association between statin usage and AD (\citealp{zissimopoulos2017sex}), while others have failed to replicate those findings. These inconsistent findings might be due to the effect of statins on AD varying across gender, age, and other subgroups (\citealp{zissimopoulos2017sex}). Thus, we hypothesize that statin usage has significant benefits of reducing AD risk in some (but not all) subgroups. To test this hypothesis, we analyzed data in the UK Biobank to investigate the heterogeneous treatment effect of inheriting rs12916-T allele, a proxy for statin usage, on AD risk in the White British subpopulations. We considered a cross-sectional study design by looking at the disease prevalence at the end of year 2021. 

\subsection{Study design}
 The UK Biobank study recruited 502,536 participants aged from 40 to 69 in the United Kingdom from 2006 to 2010.
We defined AD status by integrating information provided by Hospital Episode Statistics, death registries, and self-reported diagnoses (see details in Web Appendix \ref{appendix:data-details}). We restricted our study to 293,929 White British individuals. These individuals are unrelated and had passed standard quality control steps.



Instead of directly adopting statin usage as a treatment variable, we adopted a genetic variant rs12916-T as a surrogate treatment variable. This means that if the subject carries the variant rs12916-T, the treatment indicator variable is set to be $T = 1$;  otherwise, $T$ is set to be zero. We adopted this genetic surrogate biomarker as the treatment variable for two reasons. On the one hand, the rs12916-T allele only affects the LDL cholesterol concentration through HMGCR inhibition, and it is thus functionally equivalent to statin usage (\citealp{swerdlow2015hmg,guo2022assessing}). More specifically, the decreased LDL cholesterol level associated with statin usage is similar to the association pattern with \textit{\textit{rs12916-T}} ($R^2 = 0.94$) \citep{wurtz2016metabolomic}, thus rs12916-T is a sensible surrogate treatment variable for statin usage. On the other hand, given that genetic variants are randomly inherited from parents, our treatment variable (whether or not the individual carries rs12916-T) is thus independent of unmeasured confounding factors such as lifestyle modifications after statin usage, potentially making Assumption \ref{assump:unconfound} more plausible.



To account for genetic pleiotropy, we adjusted for SNPs that are associated with LDL. Briefly, we selected $385$ independent genome-wide significant SNPs (with $p$-values less than $5\times 10^{-8}$ and $R^2 < 0.01$) associated with LDL according to the published genome-wide association study (GWAS) results harmonized in GWAS Catalogue (\citealp{gwascatalogue2016}). We further adjusted for age and gender variables, which may improve estimation efficiency given their associations with the outcome. It was our hope that this study design could increase the plausibility of Assumption \ref{assump:unconfound}. 

We investigated the effect of inheriting rs12916-T allele on AD risk in the following subgroups: (1) males, (2) females, (3) age $<65$, (4)age $\geq 65$, (5) individuals with high AD genetic risk, and (6) individuals with low AD genetic risk. Notably, ``high AD genetic risk" was defined as either a subject's parents or siblings being diagnosed with AD, while ``Low AD genetic risk" was defined as neither a subject's parents nor siblings being diagnosed with AD. We compared the performance of the proposed method (CV-iTMLE) with the double machine learning (DML) method and the widely used generalized linear models (GLM). We used the random forest as our first stage estimator as it provides the most robust results in our simulation studies.

Because statin usage may increase the risk of T2D (\citealp{swerdlow2015hmg}), as a secondary analysis, we further investigated the  effect  of  inheriting rs12916-T allele  on  T2D to evaluate the potential heterogeneous side effects. The study design and study results of this secondary analysis can be found in Web Appendix \ref{appendix:case-study-t2d}.


\subsection{Results}

\begin{figure}[!p]
\centering
\includegraphics[width=\textwidth]{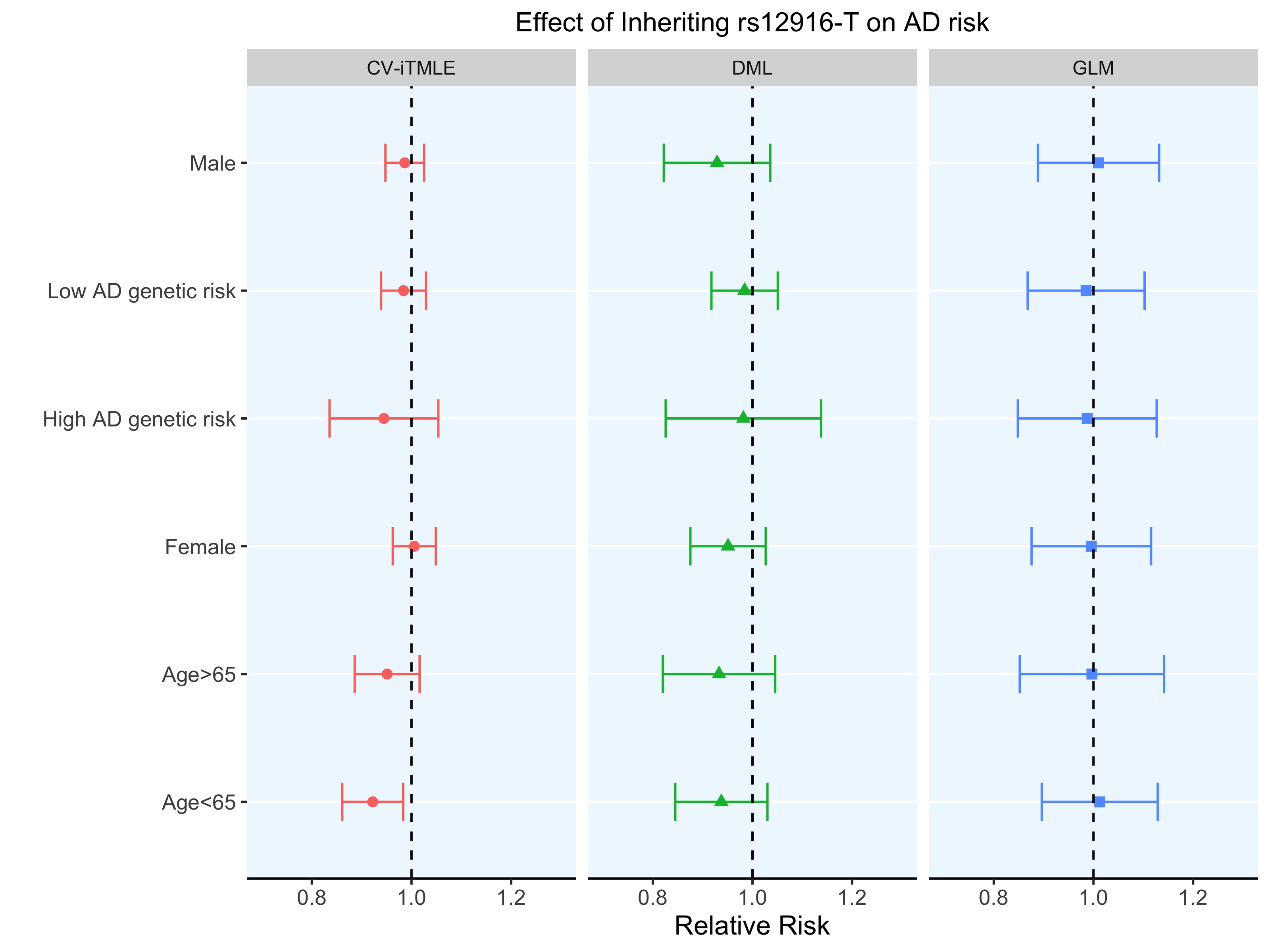}
\caption{The effect of inheriting rs12916-T allele (a proxy for statin usage) on the risk of developing Alzheimer's disease (AD) in the UK Biobank white British population ($n=293,929$). ``DML" denotes the double machine learning method. ``GLM" denotes the generalized linear models. GLM is used for association test and does not imply causal relationships. ``CV-iTMLE" denotes the cross-validated iTMLE method.}
\label{fig:real-data-all-high}
\end{figure}

Figure \ref{fig:real-data-all-high} summarizes the effect of inheriting rs12916-T (a proxy for statin usage) on AD risk in considered subgroups. As the GLM was applied to each subgroup separately and the sample size was much smaller, leading to non-significant associations for all the subgroups. The DML method also did not find any significant effects in all subgroups. This might be caused by small estimated propensity scores, leading to large variability in finite samples. In contrast, by targeting all subgroups simultaneously, the proposed method suggested that carrying rs12916-T allele is protective against AD in the subgroup younger than $65$ (RR: 0.92, 95\% CI: 0.86--0.98). In sum, our proposed method showed shortened confidence intervals with improved statistical power in detecting significant subgroups, while the GLM and DML methods tend to lose power.

Lastly, we acknowledge that the current study design has several potential limitations. First, our study only investigated the treatment effect of carrying rs12916-T allele or not. Although this genetic variant is a sensible proxy for statin usage, the findings from this study need to be interpreted cautiously.
Second, our study was based on UK Biobank data and only focused on White British population. UK Biobank participants were healthier (e.g., fewer self-reported health conditions) than the general population. Thus, our findings may not be generalizable to other populations.

\section{Concluding Remarks and Discussions}
In this manuscript, we propose a semiparametric efficient method for simultaneous heterogeneous treatment effect estimation across multiple subgroups. The proposed method allows us to construct a powerful multiple testing procedure leveraging the subgroup dependence structure. In our empirical studies, the proposed method demonstrates finite sample improvements compared to other conventional methods, including the doubly robust estimator, the classical one-step TMLE, and the double machine learning method. This manuscript opens a variety of possibilities for future research. From a methodological perspective, 
our current method can be extended to work with other types of outcomes. For example, if the outcome is continuous, one can either modify the updating step \citep{gruber2010targeted}, or dichotomize a continuous outcome into binary values (more details can be found in Web Appendix \ref{appendix:continuous-outcome}).  From an application perspective, the proposed method can be flexibly adapted to various clinical studies and assist the evaluation of other subpopulations of interest.

\bigskip

\noindent\textbf{Acknowledgements.} We thank the editor, the associate editor and the reviewer for their insightful comments and suggestions. We thank the individuals involved in the UK Biobank study for their participation and the research teams for their work on collecting, processing, and sharing these datasets. This research is supported in part by NIH awards R01AI074345, 1R03AG070669, 1R01CA263494 and R01MH125746, and NSF award DMS-2015325. 

\bigskip

\noindent\textbf{Data Availability Statement.}
The UK Biobank data can be requested from \url{https://www.ukbiobank.ac.uk/}. This research has been conducted using the UK Biobank Resource (application number 48240), subject to a data transfer agreement.

\bigskip

\noindent\textbf{Supporting Information.} Web Appendices, Tables, and Figures referenced in Sections 1,3,4,5,6, and 7 are available in Appendices. Software in the form of R code is available on the Github repository \url{https://github.com/WaverlyWei/iTMLE}.


\clearpage

\bibliographystyle{jasa} 
	\bibliography{reference}

	\clearpage

\appendix
\addcontentsline{toc}{section}{Appendix} 
\part{Appendix} 
\parttoc 
\onehalfspacing
\setcounter{page}{1}

\section{Proof of Theorems}\label{appendix:proof-theorem-1}
\subsection{Proof of Theorem 1}	

Suppose that $\{O_i\}_{i=1}^n: =\{ (Y_i, T_i, X_i)\}_{i=1}^n $ are i.i.d. random variables defined on the space $\mathcal{O}$ with probability measure $P$. For a real-valued function $f$ on $\mathcal{X}$, the empirical measure $\mathbb{P}_n$ is defined by $\mathbb{P}_n (f):=  \frac{1}{n}\sum_{i=1}^n f(O_i)$, and the empirical mean is defined as 
$\mathbb{E}_n [f(X)]:=  \frac{1}{n}\sum_{i=1}^n f(X_i)$. 
If $\mathcal{F}$ is a collection of real-valued functions defined on $\mathcal{X}$, then $\{ \mathbb{P}_n f: f \in \mathcal{F}\}$ is the empirical measure indexed by $\mathcal{F}$. We assume that $Pf = \int f \mbox{d} P$ exists for each $f\in\mathcal{F}$. Note that such a notation can be more helpful as we can treatment random functions. We use $E_X[f(X)]$ to denote expectation taken with respect to the random variable $X$ when it is more convenient to simplify notations.  Our procedure estimates the joint distribution of $(Y_i, T_i, X_i)$, and the estimated density evaluated at $(y,t, x)\in\mathcal{X}$ is  $\hat{p}_t(Y_i = y | X_i = x)\cdot \hat{e}_t(x)\cdot  \hat{p}(x)$. We denote the joint density of $(Y_i, T_i, X_i)$ as $p(y,t,x)$ which can be decomposed into the product $p(y,t,x) = p_t(x) e_t(x) p(x)$. Here $e_t(x) = te(x) + (1-t)(1-e(x))$. We denote the probability measure defined by such an estimated density as $\mathbb{P}_n^*$. Given the probability measure $P$, our target parameter $\bm{\alpha}_t$ can also be written as a statistical function of $P$, denoted as $\bm{\alpha}_t(P)$. Similarly, our proposed estimator can be written as $\bm{\alpha}_t(\mathbb{P}_n^*)$. The vector of efficient influence function of our target parameter $\bm{\alpha}_t(P)$ is denoted as 
\begin{align*}
\bm{\psi}\big(o; \bm{\alpha}_t(P), \bm{\eta}(P)\big) := \Big( \psi_1\big(o; \bm{\alpha}_t(P), \bm{\eta}(P) \big),\ldots, \bm{\psi}_J\big(o; \bm{\alpha}_t(P), \bm{\eta}(P)  \big)\Big)^{\top},
\end{align*}
where $\bm{\eta}$ contains the nuisance parameters. 
We decompose the efficient influence function into two parts: 
\begin{align*}
& \psi_j^{(1)}\big(o;  \bm{\eta}(P) \big) = \frac{ \mathds{1}\{x \in \mathcal{A}_j\}}{P(\mathcal{A}_j)}\Big[\frac{\mathds{1}\{t = 1\} }{e(x)} + \frac{\mathds{1}\{t = 0\} }{1 - e(x)}\Big]\big(y-p_t(x)\big), \\
& \psi_j^{(2)}\big(o; \bm{\alpha}_t(P), \bm{\eta}(P) \big) =  \frac{ \mathds{1}\{x \in \mathcal{A}_j\}}{P(\mathcal{A}_j)} \big(p_t(x)- \alpha_t \big), 
\end{align*}
where the first part does not depend on the target parameter, and $\bm{\alpha}_t = \int p_t \mbox{d}P = \mathbb{E}_X[p_t(X)]$ is a scalar. Similarly, we then decompose the vector of efficient influence into 
\begin{align*}
\bm{\psi}\big(o; \bm{\alpha}_t(P), \bm{\eta}(P)\big) = \bm{\psi}^{\text(1)}\big(o; \bm{\eta}(P)\big) + \bm{\psi}^{(2)}\big(o; \bm{\alpha}_t(P), \bm{\eta}(P)\big). 
\end{align*}

Our estimator satisfies the following expansion:
\begin{align}\label{eq:expansion-empirical-process}
\bm{\hat{\alpha}}_t(\mathbb{P}_n^*) - \bm{\alpha}_t(P) =  (\mathbb{P}_n-P) \bm{\psi}(o; \bm{\alpha}_t(P) , \bm{\eta}(P) ) + I_{n1} + I_{n2},
\end{align}
where the remainder terms are derived in Section \ref{Sec:empirical-process-remainder}
\begin{align*}
I_{n1} & = R_{n2} + R_{n3} = ( \mathbb{P}_n  - P)\big( \bm{\psi}(o; \bm{\alpha}_t(P), \bm{\eta}(\mathbb{P}_n^*) )  - \bm{\psi}(o;\bm{\alpha}_t(P), \bm{\eta}( P) )  \big) ,\\
I_{n2} & = R_{n4} = P \bm{\psi}\big(o; \bm{\alpha}_t(P), \bm{\eta}(\mathbb{P}_n^* )\big) = P \big[\bm{\psi}\big(o; \bm{\alpha}_t(P), \bm{\eta}(\mathbb{P}_n^* )\big)-\bm{\psi}\big(o; \bm{\alpha}_t(P), \bm{\eta}({P} )\big)\big].
\end{align*}
Let $\bm{\alpha}_t(P) =\bm{\alpha}_t $ and $\bm{\eta}(P) = \bm{\eta}_0$, and let the estimated parameters $\hat{\bm{\alpha}}_t(\mathbb{P}_n^*) = \hat{\bm{\alpha}}_t $ and $\bm{\eta}(\mathbb{P}_n^*) =\hat{\bm{\eta}}$. The above remainder terms simplify to 
\begin{align*}
   I_{n1} & =  \ ( \mathbb{P}_n  - P)\big( \bm{\psi}(o; \bm{\alpha}_t, \hat{\bm{\eta}} )  - \bm{\psi}(o;\bm{\alpha}_t, \bm{\eta}_0 )  \big),\\
   I_{n2} & = P \bm{\psi}\big(o; \bm{\alpha}_t, \hat{\bm{\eta}}\big) = P \big[\bm{\psi}\big(o; \bm{\alpha}_t,\hat{\bm{\eta}} \big)-\bm{\psi}\big(o; \bm{\alpha}_t, \bm{\eta}_0\big)\big].
\end{align*}
For any $\bm{\eta}$ in the nuisance estimator realization set $\mathcal{H}_n$, the term $I_{n1} = o_P(1)$ uniformly over $P\in\mathcal{P}_n$ under Assumption 4.2. To bound $I_{n2}$, we introduce the function 
\begin{align*}
f(r) = P\Big[ \bm{\psi}\big(o; \bm{\alpha}_t, \bm{\eta}_0 + r( \hat{\bm{\eta}}-\bm{\eta}_0)\big)\Big] .
\end{align*}
By Taylor expansion and the fact that $ f(0) = 0$, we have 
\begin{align*}
f(1) = f'(0) + f''(\tilde{r})/2, \quad \text{for some }\tilde{r}\in(0,1).
\end{align*}
We will next verify in Step 1 that $f'(0) = 0 $ and verify in Step 2 that $\sup_{r\in [0,1)} ||f''(r)|| = o_P(1/\sqrt{n})$. This finishes our proof. 

\noindent\textbf{Step 1.} For any $\bm{\eta}\in \mathcal{H}_n $, the first order derivative is equal to
\begin{align*}
f'(0)  = &\ \partial_{\eta} P \bm{\psi}\big(o; \bm{\alpha}_t, \bm{\eta}_0\big) [\bm{\eta}-\bm{\eta}_0] \\
=&\ \partial_r \{ P\big[  \bm{\psi}\big(o; \bm{\alpha}_t, \bm{\eta}_0 + r( \bm{\eta}-\bm{\eta}_0\big)\big] \}\big|_{r=0},
\end{align*}
First, we want to show $\bm{\psi}$ is Neyman orthogonal, such that $f'(0) = 0$. 

We denote $\bm{\eta} = (\check{e}(X), \check{p}_1(X), \check{p}_0(X), \check{\mathbb{P}}(\mathcal{A}_1), \ldots, \check{\mathbb{P}}(\mathcal{A}_J))$. The Gateaux derivative in the direction of $\bm{\eta} - \bm{\eta}_0 = \big(\check{e}(X)-e(X), \check{p}_1(X)-p_1(X), \check{p}_0(X)-p_0(X), \check{\mathbb{P}}(\mathcal{A}_1)-P(\mathcal{A}_1), \ldots, \check{\mathbb{P}}(\mathcal{A}_J) - P(\mathcal{A}_J)\big)$ is
\begin{align*}
   &\quad \partial_r \{ E\big[  \bm{\psi}\big(o; \bm{\alpha}_t, \bm{\eta}_0 + r( \bm{\eta}-\bm{\eta}_0\big)\big] \}\\
 &=\begin{pmatrix}
   & \partial_r E \Bigg[\frac{ \mathds{1}\{x \in \mathcal{A}_1\}}{P(\mathcal{A}_1) + r\big(\check{\mathbb{P}}(\mathcal{A}_1) -P(\mathcal{A}_1)\big)}\Big[\Big(\frac{\mathds{1}\{t = 1\} }{e(x)+r\big(\check{e}(x)-e(x)\big)} + \frac{\mathds{1}\{t = 0\} }{1 - e(x)-r\big(\check{e}(x)-e(x)\big)}\Big)\\
     &\quad\quad\quad\quad\cdot\Big(y- p_t(x) - r\big(\check{p}_t(x) - p_t(x)\big)\Big)+\big(p_t(x) + r(\check{p}_t(x)-p_t(x))- \alpha_t \big)\Big]\Bigg],\\
    &\vdots\\
    &\partial_r E \Bigg[\frac{ \mathds{1}\{x \in \mathcal{A}_J\}}{P(\mathcal{A}_J) + r\big(\check{\mathbb{P}}(\mathcal{A}_J) -P(\mathcal{A}_J)\big)}\Big[\Big(\frac{\mathds{1}\{t = 1\} }{e(x)+r\big(\check{e}(x)-e(x)\big)} + \frac{\mathds{1}\{t = 0\} }{1 - e(x)-r\big(\check{e}(x)-e(x)\big)}\Big)\\
   &\quad\quad\quad\quad\cdot\Big(y- p_t(x) - r\big(\check{p}_t(x) - p_t(x)\big)\Big)+\big(p_t(x) + r(\check{p}_t(x)-p_t(x))- \alpha_t \big)\Big]\Bigg]
 \end{pmatrix}
\end{align*}
The Gateaux derivative for each subgroup $j$ is
\begin{align*}
   &= \partial_r E \Bigg[\frac{ \mathds{1}\{x \in \mathcal{A}_j\}}{P(\mathcal{A}_j) + r\big(\check{\mathbb{P}}(\mathcal{A}_j) -P(\mathcal{A}_j)\big)}\Big[\Big(\frac{\mathds{1}\{t = 1\} }{e(x)+r\big(\check{e}(x)-e(x)\big)} + \frac{\mathds{1}\{t = 0\} }{1 - e(x)-r\big(\check{e}(x)-e(x)\big)}\Big)\\
   &\quad\quad\quad\quad\cdot\Big(y- p_t(x) - r\big(\check{p}_t(x) - p_t(x)\big)\Big)+\big(p_t(x) + r(\check{p}_t(x)-p_t(x))- \alpha_t \big)\Big]\Bigg],\\ 
   &= \partial_r E \Bigg[\frac{ \mathds{1}\{x \in \mathcal{A}_j\}}{P(\mathcal{A}_j) + r\big(\check{\mathbb{P}}(\mathcal{A}_j) - P(\mathcal{A}_j)\big)}\cdot (p_t(x)+r(\check{p}_t(x)-p_t(x))-\alpha_t)\\ &\quad\quad+\frac{ \mathds{1}\{x \in \mathcal{A}_j\}}{P(\mathcal{A}_j) + r\big(\check{\mathbb{P}}(\mathcal{A}_j) - P(\mathcal{A}_j)\big)}\cdot\Big( y-p_t(x)-r(\check{p}_t(x) - p_t(x))\Big)\\
   &\quad\quad\cdot\Big[\frac{\mathds{1}\{t = 1\} }{e(x) + r\big(\check{e}(x)-e(x)\big)} + \frac{\mathds{1}\{t = 0\} }{1 - e(x)-r\big(\check{e}(x)-e(x)\big)}\Big]\Bigg],\\ 
 &=E\Bigg[ \frac{ \mathds{1}\{x \in \mathcal{A}_j\}}{P(\mathcal{A}_j) + r\big(\check{\mathbb{P}}(\mathcal{A}_j) - P(\mathcal{A}_j)\big)} (\check{p}_t(x)-p_t(x)) \\
 &\quad\quad- \frac{ \mathds{1}\{x \in \mathcal{A}_j\}\big(\check{\mathbb{P}}(\mathcal{A}_j) - P(\mathcal{A}_j)\big)}{\big(P(\mathcal{A}_j) + r\big(\check{\mathbb{P}}(\mathcal{A}_j) - P(\mathcal{A}_j)\big)\big)^2} \big(p_t(x)+r(\check{p}_t(x)-p_t(x))-\alpha_t\big)\\
 &\quad\quad+\frac{ \mathds{1}\{x \in \mathcal{A}_j\}}{P(\mathcal{A}_j) + r\big(\check{\mathbb{P}}(\mathcal{A}_j) - P(\mathcal{A}_j)\big)}\cdot\Big( y-p_t(x)-r(\check{p}_t(x) - p_t(x))\Big)\cdot\\
 &\quad\quad\cdot\Big[-\frac{\mathds{1}\{t = 1\} \big(\check{e}(x)-e(x)\big)}{\big(e(x) + r\big(\check{e}(x)-e(x)\big)\big)^2} + \frac{\mathds{1}\{t = 0\} \big(\check{e}(x)-e(x)\big)}{\big(1 - e(x)-r\big(\check{e}(x)-e(x)\big)\big)^2}\Big]\\
 &\quad\quad+ \Big[\frac{ -\mathds{1}\{x \in \mathcal{A}_j\}(\check{p}_t(x) - p_t(x))}{P(\mathcal{A}_j) + r\big(\check{\mathbb{P}}(\mathcal{A}_j) - P(\mathcal{A}_j)\big)}- \frac{ \mathds{1}\{x \in \mathcal{A}_j\}\big(\check{\mathbb{P}}(\mathcal{A}_j) - P(\mathcal{A}_j)\big)}{\big(P(\mathcal{A}_j) + r\big(\check{\mathbb{P}}(\mathcal{A}_j) - P(\mathcal{A}_j)\big)\big)^2}\cdot\Big( y-p_t(x)-r(\check{p}_t(x) - p_t(x))\Big)\Big]\\
&\quad\quad\cdot\Big[\frac{\mathds{1}\{t = 1\} }{e(x) + r\big(\check{e}(x)-e(x)\big)} + \frac{\mathds{1}\{t = 0\} }{1 - e(x)-r\big(\check{e}(x)-e(x)\big)}\Big]\Bigg].
\end{align*}
Set $r=0$,
\begin{align*}
   &= E\Bigg[ \frac{ \mathds{1}\{x \in \mathcal{A}_j\}}{P(\mathcal{A}_j)} (\check{p}_t(x)-p_t(x)) - \frac{ \mathds{1}\{x \in \mathcal{A}_j\}\big(\check{\mathbb{P}}(\mathcal{A}_j) - P(\mathcal{A}_j)\big)}{\big(P(\mathcal{A}_j)\big)^2} \big(p_t(x)-\alpha_t\big)\\
 &\quad+\frac{ \mathds{1}\{x \in \mathcal{A}_j\}}{P(\mathcal{A}_j) }\Big( y-p_t(x)\Big)\Big[-\frac{\mathds{1}\{t = 1\} \big(\check{e}(x)-e(x)\big)}{\big(e(x)\big)^2} + \frac{\mathds{1}\{t = 0\} \big(\check{e}(x)-e(x)\big)}{\big(1 - e(x)\big)^2}\Big]\\
 &\quad+ \Big[\frac{ -\mathds{1}\{x \in \mathcal{A}_j\}(\check{p}_t(x) - p_t(x))}{P(\mathcal{A}_j)}- \frac{ \mathds{1}\{x \in \mathcal{A}_j\}\big(\check{\mathbb{P}}(\mathcal{A}_j) - P(\mathcal{A}_j)\big)}{\big(P(\mathcal{A}_j)\big)^2}\cdot\Big( y-p_t(x)\Big)\Big]\cdot\Big[\frac{\mathds{1}\{t = 1\} }{e(x)} + \frac{\mathds{1}\{t = 0\} }{1 - e(x)}\Big]\Bigg].
\end{align*}
Given that
\begin{align*}
    &E[p_t(x)-\alpha_t|x] = 0,\quad E[t=1|X] = e(x), \\
    &E[t(y-p_t(x))|x] = 0, \quad E[(1-t)(y-p_t(x))|x] = 0,
\end{align*}
the first-order Gateaux derivative for each subgroup $j$ is $0$. Thus $f'(0) = 0$ for all the subgroups.

\noindent\textbf{Step 2.} The second order remainder term satisfies 
\begin{align*}
||f''(\tilde{r})/2|| \leq \sup_{r\in (0,1)} || f''(r)/2 ||.
\end{align*}
For each subgroup $j$, 
\begin{align*}
    &\partial^2_r f_j(r) \\
    &=E\Bigg[ -\frac{ \mathds{1}\{x \in \mathcal{A}_j\}(\check{p}_t(x)-p_t(x))\big(\check{\mathbb{P}}(\mathcal{A}_j)- P(\mathcal{A}_j)\big)}{\big(P(\mathcal{A}_j) + r\big(\check{\mathbb{P}}(\mathcal{A}_j)- P(\mathcal{A}_j)\big)\big)^2}- \frac{ \mathds{1}\{x \in \mathcal{A}_j\}\big(\check{\mathbb{P}}(\mathcal{A}_j) - P(\mathcal{A}_j)\big)}{\big(P(\mathcal{A}_j) + r\big(\check{\mathbb{P}}(\mathcal{A}_j) - P(\mathcal{A}_j)\big)\big)^2} \big(\check{p}_t(x)-p_t(x)\big)\\
     &\quad\quad+ 2\frac{ \mathds{1}\{x \in \mathcal{A}_j\}\big(\check{\mathbb{P}}(\mathcal{A}_j) - P(\mathcal{A}_j)\big)^2}{\big(P(\mathcal{A}_j) + r\big(\check{\mathbb{P}}(\mathcal{A}_j) - P(\mathcal{A}_j)\big)\big)^3} \big(p_t(x)+r(\check{p}_t(x)-p_t(x))-\alpha_t\big)\\
 &\quad\quad+\frac{ \mathds{1}\{x \in \mathcal{A}_j\}}{P(\mathcal{A}_j) + r\big(\check{\mathbb{P}}(\mathcal{A}_j) - P(\mathcal{A}_j)\big)}\cdot\Big( y-p_t(x)-r(\check{p}_t(x) - p_t(x))\Big)\cdot\\
 &\quad\quad\quad\cdot2\Big[\frac{\mathds{1}\{t = 1\} \big(\check{e}(x)-e(x)\big)^2}{\big(e(x) + r\big(\check{e}(x)-e(x)\big)\big)^3} - \frac{\mathds{1}\{t = 0\} \big(\check{e}(x)-e(x)\big)^2}{\big(1 - e(x)-r\big(\check{e}(x)-e(x)\big)\big)^3}\Big]\\
&\quad\quad+\Big[\frac{ -\mathds{1}\{x \in \mathcal{A}_j\}(\check{p}_t(x) - p_t(x))}{P(\mathcal{A}_j) + r\big(\check{\mathbb{P}}(\mathcal{A}_j) - P(\mathcal{A}_j)\big)}-\frac{ \mathds{1}\{x \in \mathcal{A}_j\}\big(\check{\mathbb{P}}(\mathcal{A}_j) - P(\mathcal{A}_j)\big)}{\big(P(\mathcal{A}_j) + r\big(\check{\mathbb{P}}(\mathcal{A}_j) - P(\mathcal{A}_j)\big)\big)^2}\cdot\Big( y-p_t(x)-r(\check{p}_t(x) - p_t(x))\Big)\Big]\\
 &\quad\quad\quad\cdot\Big[-\frac{\mathds{1}\{t = 1\} \big(\check{e}(x)-e(x)\big)}{\big(e(x) + r\big(\check{e}(x)-e(x)\big)\big)^2} + \frac{\mathds{1}\{t = 0\} \big(\check{e}(x)-e(x)\big)}{\big(1 - e(x)-r\big(\check{e}(x)-e(x)\big)\big)^2}\Big]\\
 &\quad\quad+ \Big[\frac{ -\mathds{1}\{x \in \mathcal{A}_j\}(\check{p}_t(x) - p_t(x))}{P(\mathcal{A}_j) + r\big(\check{\mathbb{P}}(\mathcal{A}_j) - P(\mathcal{A}_j)\big)}- \frac{ \mathds{1}\{x \in \mathcal{A}_j\}\big(\check{\mathbb{P}}(\mathcal{A}_j) - P(\mathcal{A}_j)\big)}{\big(P(\mathcal{A}_j) + r\big(\check{\mathbb{P}}(\mathcal{A}_j) - P(\mathcal{A}_j)\big)\big)^2}\cdot\Big( y-p_t(x)-r(\check{p}_t(x) - p_t(x))\Big)\Big]\\
&\quad\quad\quad\cdot\Big[\frac{-\mathds{1}\{t = 1\}\big(\check{e}(x)-e(x)\big) }{\big(e(x) + r\big(\check{e}(x)-e(x)\big)\big)^2} + \frac{\mathds{1}\{t = 0\} \big(\check{e}(x)-e(x)\big)}{\big(1 - e(x)-r\big(\check{e}(x)-e(x)\big)\big)^2}\Big].\\
 &\quad\quad+ \Big[\frac{ \mathds{1}\{x \in \mathcal{A}_j\}(\check{p}_t(x) - p_t(x))\big(\check{\mathbb{P}}(\mathcal{A}_j) - P(\mathcal{A}_j)\big)}{\big(P(\mathcal{A}_j) + r\big(\check{\mathbb{P}}(\mathcal{A}_j) - P(\mathcal{A}_j)\big)\big)^2}+ \frac{ \mathds{1}\{x \in \mathcal{A}_j\}\big(\check{\mathbb{P}}(\mathcal{A}_j) - P(\mathcal{A}_j)\big)\cdot(\check{p}_t(x) - p_t(x))}{\big(P(\mathcal{A}_j) + r\big(\check{\mathbb{P}}(\mathcal{A}_j) - P(\mathcal{A}_j)\big)\big)^2}\\
 &\quad\quad+ 2\frac{ \mathds{1}\{x \in \mathcal{A}_j\}\big(\check{\mathbb{P}}(\mathcal{A}_j) - P(\mathcal{A}_j)\big)^2}{\big(P(\mathcal{A}_j) + r\big(\check{\mathbb{P}}(\mathcal{A}_j) - P(\mathcal{A}_j)\big)\big)^3}\cdot\Big( y-p_t(x)-r(\check{p}_t(x) - p_t(x))\Big)\Big]\\
 &\quad\quad\quad\cdot\Big[\frac{\mathds{1}\{t = 1\} }{e(x) + r\big(\check{e}(x)-e(x)\big)} + \frac{\mathds{1}\{t = 0\} }{1 - e(x)-r\big(\check{e}(x)-e(x)\big)}\Big]\Bigg].
\end{align*}
Simplify the second-order Gateaux derivative we have
\begin{align*}
    \partial^2_r f(r) &=E\Bigg[ -2\frac{ \mathds{1}\{x \in \mathcal{A}_j\}\big(\check{\mathbb{P}}(\mathcal{A}_j)- P(\mathcal{A}_j)\big)}{\big(P(\mathcal{A}_j) + r\big(\check{\mathbb{P}}(\mathcal{A}_j)- P(\mathcal{A}_j)\big)\big)^2} (\check{p}_t(x)-p_t(x))\Bigg]\\
     &+ E\Bigg[2\frac{ \mathds{1}\{x \in \mathcal{A}_j\}\big(\check{\mathbb{P}}(\mathcal{A}_j) - P(\mathcal{A}_j)\big)^2}{\big(P(\mathcal{A}_j) + r\big(\check{\mathbb{P}}(\mathcal{A}_j) - P(\mathcal{A}_j)\big)\big)^3} \big(p_t(x)+r(\check{p}_t(x)-p_t(x))-\alpha_t\big)\Bigg]\\
 &+E\Bigg[\frac{ \mathds{1}\{x \in \mathcal{A}_j\}}{P(\mathcal{A}_j) + r\big(\check{\mathbb{P}}(\mathcal{A}_j) - P(\mathcal{A}_j)\big)}\cdot\Big( y-p_t(x)-r(\check{p}_t(x) - p_t(x))\Big)\cdot\\
 &\quad\cdot2\Big[\frac{\mathds{1}\{t = 1\} \big(\check{e}(x)-e(x)\big)^2}{\big(e(x) + r\big(\check{e}(x)-e(x)\big)\big)^3} - \frac{\mathds{1}\{t = 0\} \big(\check{e}(x)-e(x)\big)^2}{\big(1 - e(x)-r\big(\check{e}(x)-e(x)\big)\big)^3}\Big]\Bigg]\\
&+E\Bigg[2\Big[\frac{ -\mathds{1}\{x \in \mathcal{A}_j\}(\check{p}_t(x) - p_t(x))}{P(\mathcal{A}_j) + r\big(\check{\mathbb{P}}(\mathcal{A}_j) - P(\mathcal{A}_j)\big)}-\frac{ \mathds{1}\{x \in \mathcal{A}_j\}\big(\check{\mathbb{P}}(\mathcal{A}_j) - P(\mathcal{A}_j)\big)}{\big(P(\mathcal{A}_j) + r\big(\check{\mathbb{P}}(\mathcal{A}_j) - P(\mathcal{A}_j)\big)\big)^2}\\
&\cdot\Big( y-p_t(x)-r(\check{p}_t(x) - p_t(x))\Big)\Big]\\
 &\quad\cdot\Big[-\frac{\mathds{1}\{t = 1\} \big(\check{e}(x)-e(x)\big)}{\big(e(x) + r\big(\check{e}(x)-e(x)\big)\big)^2} + \frac{\mathds{1}\{t = 0\} \big(\check{e}(x)-e(x)\big)}{\big(1 - e(x)-r\big(\check{e}(x)-e(x)\big)\big)^2}\Big]\Bigg]\\
 &+E\Bigg[\Big[ \frac{ 2\mathds{1}\{x \in \mathcal{A}_j\}\big(\check{\mathbb{P}}(\mathcal{A}_j) - P(\mathcal{A}_j)\big)}{\big(P(\mathcal{A}_j) + r\big(\check{\mathbb{P}}(\mathcal{A}_j) - P(\mathcal{A}_j)\big)\big)^2}(\check{p}_t(x) - p_t(x))\\
 &\quad\quad+  \frac{ 2\mathds{1}\{x \in \mathcal{A}_j\}\big(\check{\mathbb{P}}(\mathcal{A}_j) - P(\mathcal{A}_j)\big)^2}{\big(P(\mathcal{A}_j) + r\big(\check{\mathbb{P}}(\mathcal{A}_j) - P(\mathcal{A}_j)\big)\big)^3}\cdot\Big( y-p_t(x)-r(\check{p}_t(x) - p_t(x))\Big)\Big]\\
&\quad\cdot\Big[\frac{\mathds{1}\{t = 1\} }{e(x) + r\big(\check{e}(x)-e(x)\big)} + \frac{\mathds{1}\{t = 0\} }{1 - e(x)-r\big(\check{e}(x)-e(x)\big)}\Big]\Bigg]. 
\end{align*}

Set $r = \tilde{r}$, 
by Assumption 4.3, for some constants $C$ and $\xi_n$,
\begin{align*}
&|f''_j(\tilde{r})| \leq C ||\check{e}(x)-e(x)||_2  ||\check{p}_t(x)-p_t(x)||_2  \leq \delta_n n^{-1/2},\quad j = 1,\ldots, J,\\
&||f''(\tilde{r})||_{\infty} \leq \sup_{r\in[0,1)}|| f''(r)||_{\infty} = o_P(1/\sqrt{n}).
\end{align*}

In sum, 
\begin{align*}
f(1) &= f'(0) + f''(\tilde{r})/2 = o_P(1/\sqrt{n}), \quad r\in[0,1).
\end{align*}

\subsection{Derivation of the expansion (\ref{eq:expansion-empirical-process})}\label{Sec:empirical-process-remainder}

It holds trivially true that our estimator satisfies 
\begin{align*}
 \bm{\hat{\alpha}}_t - \bm{\alpha}_t =  &\ - \int \bm{\psi}^{\text(1)}\big(o; \bm{\hat{\eta}} \big) \mbox{d}P + \Big(   \bm{\hat{\alpha}}_t - \bm{\alpha}_t + \int \bm{\psi}^{\text(1)}\big(o; \bm{\hat{\eta}} \big) \mbox{d}P \Big)\\
 := &\ \int \bm{\psi}^{\text(1)}\big(o;  \bm{\hat{\eta}} \big) \mbox{d}\mathbb{P}_n - \int \bm{\psi}^{\text(1)}\big(o; \bm{\hat{\eta}} \big) \mbox{d}P + R_{n1}\\
 =&\ \mathbb{P}_n \bm{\psi}^{\text(1)}(o; \bm{\hat{\eta}} ) - P\bm{\psi}^{\text(1)}(o; \bm{\hat{\eta}} ) + R_{n1}\\
 =&\  (\mathbb{P}_n-P) \bm{\psi}^{\text(1)}(o; \bm{\eta}_0 ) + ( \mathbb{P}_n  - P) \big( \bm{\psi}^{\text(1)}(o; \bm{\hat{\eta}} )  - \bm{\psi}^{\text(1)}(o;\bm{\eta}_0 )  \big) + R_{n1}\\
 =& \ (\mathbb{P}_n-P) \bm{\psi}^{\text{(1)}}(o; \bm{\eta}_0 ) + R_{n2}+ R_{n1},
\end{align*}
where the second equality is guaranteed by our proposed procedure, and 
\begin{align*}
R_{n1} = &\  \bm{\hat{\alpha}}_t - \bm{\alpha}_t + \int \bm{\psi}^{\text(1)}\big(o;  \bm{\hat{\eta}} \big) \mbox{d}P\\
=& \ \mathbb{E}_n \big[ \hat{p}_t(X) \big] - E_X\big[p_t(X)\big] + E_{Y,T,X}\left[ \frac{ \mathds{1}\{X \in \mathcal{A}_j\}}{\hat{\mathbb{P}}(\mathcal{A}_j)}\Big(\frac{\mathds{1}\{T = 1\} }{\hat{e}(X)} + \frac{\mathds{1}\{T = 0\} }{1 - \hat{e}(X)}\Big)\big(Y-\hat{p}_T(X)\big) \right]\\
=& \mathbb{E}_n \big[ \hat{p}_t(X) \big] - E_X\big[\hat{p}_t(X)\big] + E_X\big[\hat{p}_t(X)\big] - E_X\big[p_t(X)\big] \\
& +E_{Y,T,X}\left[ \frac{ \mathds{1}\{X \in \mathcal{A}_j\}}{\hat{\mathbb{P}}(\mathcal{A}_j)}\Big(\frac{\mathds{1}\{T = 1\} }{\hat{e}(X)} + \frac{\mathds{1}\{T = 0\} }{1 - \hat{e}(X)}\Big)\big(Y-\hat{p}_T(X)\big) \right]\\
=& (\mathbb{P}_n - P) \hat{p}_t + P \bm{\psi}\big(o; \bm{\alpha}_t, \bm{\hat{\eta}}\big)\\
=& (\mathbb{P}_n - P) (\hat{p}_t  - p_t) + (\mathbb{P}_n - P) p_t +P \bm{\psi}\big(o; \bm{\alpha}_t, \bm{\hat{\eta}}\big)\\
=& (\mathbb{P}_n-P) \bm{\psi}^{\text{(2)}}(o;\bm{\alpha},\bm{\eta}_0 )  + (\mathbb{P}_n - P) (\hat{p}_t  - p_t) + (\mathbb{P}_n - P) p_t +P \bm{\psi}\big(o; \bm{\alpha}_t, \bm{\hat{\eta}}\big).
\end{align*}
To this end, we have the final expansion: 
\begin{align*}
\bm{\hat{\alpha}}_t - \bm{\alpha}_t =  (\mathbb{P}_n-P) \bm{\psi}(o;\bm{\alpha} ,\bm{\eta}_0 ) + R_{n2} + R_{n3} + R_{n4}, 
\end{align*}
where 
\begin{align*}
R_{n2} & = ( \mathbb{P}_n  - P)\big( \bm{\psi}^{\text(1)}(o; \bm{\hat{\eta}} )  - \bm{\psi}^{\text(1)}(o;\bm{\eta}_0 )  \big) ,\\
R_{n3} & =  (\mathbb{P}_n - P) (\hat{p}_t  - p_t) \\
R_{n4} &= P \bm{\psi}\big(o; \bm{\alpha}_t, \bm{\hat{\eta}}\big).
\end{align*}



\subsection{Proof of semiparametric efficiency results with delta method}\label{appendix:rr-or-semiparametric}

\cite{vaart2000asymptotic} shows that the asymptotic variance obtained by the delta method indeed achieves the semiparametric efficiency bound (Chapter 25.7). Since the results in \cite{vaart2000asymptotic} are formulated using notation different from our manuscript, in what follows, 
we provide a justification to show why the asymptotic variances of  $\hat{\alpha}_{\text{RR}}$ and $\hat{\alpha}_{\text{OR}}$ attain the semiparametric efficiency
 In what follows, 
we justify why the asymptotic variances of  $\hat{\alpha}_{ARD}$, $\hat{\alpha}_{RR}$, and $\hat{\alpha}_{OR}$ attain the semiparametric efficiency
bound. For simplicity, we shall not discuss the subgroup case, but the justifications below can be easily extended to subgroup $\hat{\alpha}_{ARD}$, $\hat{\alpha}_{RR}$, and $\hat{\alpha}_{OR}$. 

First, denote the joint density of (Y,T,X) as
\begin{align*}
    f(y,t,x) = \big(f_1(y|x)e(x)\big)^t \big(f_0(y|x)(1-e(x))\big)^{1-t} f(x),
\end{align*}
where $e(x):= f(t|x)$ denotes the propensity score,  $f_1(y|x)$ denotes the conditional density of $y$ given $x$ under treatment, and $f_0(y|x)$ denotes the conditional density of $y$ given $x$ under control.
Assume the parametric submodel indexed by parameter $\theta$ is
\begin{align*}
  f(y,t,x;\theta) = \big(f_1(y|x;\theta)e(x;\theta)\big)^t \big(f_0(y|x;\theta)(1-e(x;\theta))\big)^{1-t} f(x;\theta), 
\end{align*}
where $\theta$ is indexed a finite-dimensional parameter $\beta$ and an infinite-dimensional parameter $\eta$, i.e. $\theta = (\beta, \eta)$. The score function of the above parametric submodel is
\begin{align}\label{eq:score}
    s(y,t,x|\theta) = t s_1(y|x;\theta) + (1-t)s_0(y|x;\theta) + \frac{t-e(x;\theta)}{e(x;\theta)(1-e(x;\theta))}e'(x;\theta) + p(x;\theta),
\end{align}
where $s_1(y|x;\theta)= \frac{d}{d\theta}\log f_1(Y|X;\theta)$, $s_0(y|x;\theta)= \frac{d}{d\theta}\log f_0(Y|X;\theta)$, $p(x;\theta) = \frac{d}{d\theta}\log f(X;\theta)$, and $e'(x;\theta) = \frac{d}{d\theta}e(x;\theta)$. From the score function Eq \eqref{eq:score} we obtain the tangent space $\mathcal{T}$ spanned by the score function $t s_1(y|x) + (1-t)s_0(y|x) + \frac{t-e(x)}{e(x)(1-e(x))}e'(x) + p(x)$.
Now assume $\theta$ can be parametrized by $\beta_1$ and $\beta_0$, that is
\begin{align*}
    \beta_1(\theta) &= \int\int yf_1(y|x;\theta)f(x;\theta)dy dx,\\
    \beta_0(\theta) &= \int\int yf_0(y|x;\theta)f(x;\theta)dy dx.
\end{align*}
$\beta_1(\theta)$ and $\beta_0(\theta)$ represent the conditional mean of outcome under treatment and control, respectively. The pathwise derivatives of $\beta_1(\theta)$ and $\beta_0(\theta)$ are
\begin{align*}
     \frac{\partial\beta_1(\theta_0)}{\partial\theta} &= \int\int ys_1(y|x;\theta_0)f_1(y|x)f(x)dy dx + \int\int\beta_1(x)p(x;\theta_0)f(x)dx,\\
     \frac{\partial\beta_0(\theta_0)}{\partial\theta} &= \int\int ys_0(y|x;\theta_0)f_0(y|x)f(x)dy dx + \int\int\beta_0(x)p(x;\theta_0)f(x)dx,
\end{align*}
where $\theta_0$ denotes the true parameter. If the target parameter of interest is the absolute risk difference (ARD),
\begin{align*}
    \frac{\partial \beta^{\text{ARD}}(\theta_0)}{\partial\theta} &= \frac{\partial \beta^{\text{ARD}}(\theta_0)}{\partial\big(\beta_1(\theta_0),\beta_0(\theta_0)\big)} \frac{\partial\big(\beta_1(\theta_0),\beta_0(\theta_0)\big)}{\partial\theta}\\
    &= (1 \ -1)
    \begin{pmatrix}
    \frac{\partial\beta_1(\theta_0)}{\partial\theta}\\
    \frac{\partial\beta_0(\theta_0)}{\partial\theta}
    \end{pmatrix} = \frac{\partial\beta_1(\theta_0)}{\partial\theta}- \frac{\partial\beta_0(\theta_0)}{\partial\theta},\\
    &= \int\int ys_1(y|x;\theta_0)f_1(y|x)f(x)dy dx + \int\int\beta_1(x)p(x;\theta_0)f(x)dx\\
    &\quad-\int\int ys_0(y|x;\theta_0)f_0(y|x)f(x)dy dx - \int\int\beta_0(x)p(x;\theta_0)f(x)dx.
\end{align*}
Similarly, if the target parameter of interest is the relative risk,
\begin{align*}
    \frac{\partial \beta^{\text{RR}}(\theta_0)}{\partial\theta} &= \frac{\partial \beta^{\text{RR}}(\theta_0)}{\partial\big(\beta_1(\theta_0),\beta_0(\theta_0)\big)} \frac{\partial\big(\beta_1(\theta_0),\beta_0(\theta_0)\big)}{\partial\theta}\\
    &=\begin{pmatrix}
    \frac{1}{\beta_0(\theta_0)} & \frac{-\beta_1(\theta_0)}{\beta^2_0(\theta_0)}
    \end{pmatrix}
    \begin{pmatrix}
    \frac{\partial\beta_1(\theta_0)}{\partial\theta}\\
    \frac{\partial\beta_0(\theta_0)}{\partial\theta}
    \end{pmatrix}= \frac{1}{\beta_0(\theta_0)}\frac{\partial\beta_1(\theta_0)}{\partial\theta}-\frac{\beta_1(\theta_0)}{\beta_0^2(\theta_0)} \frac{\partial\beta_0(\theta_0)}{\partial\theta},\\
    &=\frac{1}{\beta_0(\theta_0)} \Big(\int\int ys_1(y|x;\theta_0)f_1(y|x)f(x)dy dx + \int\int\beta_1(x)p(x;\theta_0)f(x)dx\Big)\\
    &\quad-\frac{\beta_1(\theta_0)}{\beta_0^2(\theta_0)}\Big(\int\int ys_0(y|x;\theta_0)f_0(y|x)f(x)dy dx + \int\int\beta_0(x)p(x;\theta_0)f(x)dx\Big).
\end{align*}

If the target parameter of interest is the odds ratio,
\begin{align*}
    \frac{\partial \beta^{\text{OR}}(\theta_0)}{\partial\theta} &= \frac{\partial \beta^{\text{OR}}(\theta_0)}{\partial\big(\beta_1(\theta_0),\beta_0(\theta_0)\big)} \frac{\partial\big(\beta_1(\theta_0),\beta_0(\theta_0)\big)}{\partial\theta}\\
    &=\begin{pmatrix}
    \frac{1-\beta_0(\theta_0)}{\beta_0(\theta_0)(1-\beta_1(\theta_0))^2} & \frac{-\beta_1(\theta_0)}{\beta^2_0(\theta_0)(1-\beta_1(\theta_0))}
    \end{pmatrix}
    \begin{pmatrix}
    \frac{\partial\beta_1(\theta_0)}{\partial\theta}\\
    \frac{\partial\beta_0(\theta_0)}{\partial\theta}
    \end{pmatrix},\\
    &= \frac{1-\beta_0(\theta_0)}{\beta_0(\theta_0)(1-\beta_1(\theta_0))^2}\frac{\partial\beta_1(\theta_0)}{\partial\theta}-\frac{\beta_1(\theta_0)}{\beta_0^2(\theta_0)(1-\beta_1(\theta_0))} \frac{\partial\beta_0(\theta_0)}{\partial\theta},\\
    &=\frac{1-\beta_0(\theta_0)}{\beta_0(\theta_0)(1-\beta_1(\theta_0))^2} \Big(\int\int ys_1(y|x;\theta_0)f_1(y|x)f(x)dy dx + \int\int\beta_1(x)p(x;\theta_0)f(x)dx\Big)\\
    &\quad-\frac{\beta_1(\theta_0)}{\beta_0^2(\theta_0)(1-\beta_1(\theta_0))}\Big(\int\int ys_0(y|x;\theta_0)f_0(y|x)f(x)dy dx + \int\int\beta_0(x)p(x;\theta_0)f(x)dx\Big).
\end{align*}

Now let
\begin{align*}
    \varphi^{\text{ARD}}(Y,T,X) &= \frac{T}{e(X)}\big(Y-\beta_1(X)\big)-\frac{1-T}{1-e(X)}\big(Y-\beta_0(X)\big)+\beta_1(X)-\beta_0(X)-(\beta_1-\beta_0),\\
      \varphi^{\text{RR}}(Y,T,X)&= \frac{1}{\beta_0}\Big(\frac{T}{e(X)}\big(Y-\beta_1(X)\big)+\beta_1(X)-\beta_1\Big) - \frac{\beta_1}{\beta_0^2}\Big(\frac{1-T}{1-e(X)}\big(Y-\beta_0(X)\big)+\beta_0(X)-\beta_0\Big),\\
      \varphi^{\text{OR}}(Y,T,X)&= \frac{1-\beta_0}{\beta_0(1-\beta_1)^2}\Big(\frac{T}{e(X)}\big(Y-\beta_1(X)\big)+\beta_1(X)-\beta_1\Big)\\
      &\quad - \frac{\beta_1}{\beta_0^2(1-\beta_1)}\Big(\frac{1-T}{1-e(X)}\big(Y-\beta_0(X)\big)+\beta_0(X)-\beta_0\Big).
\end{align*}
Taking the product of $\varphi(Y,T,X)$ and the score function in Eq \eqref{eq:score}, we observe that
\begin{align*}
    \frac{\partial\beta^{\text{ARD}}(\theta_0)}{\partial\theta_0} &= \mathbb{E}[\varphi^{\text{ARD}}(Y,T,X)\cdot s(Y,T,X|\theta_0)],\\
    \frac{\partial\beta^{\text{RR}}(\theta_0)}{\partial\theta_0} &= \mathbb{E}[\varphi^{\text{RR}}(Y,T,X)\cdot s(Y,T,X|\theta_0)],\\
    \frac{\partial\beta^{\text{OR}}(\theta_0)}{\partial\theta_0} &= \mathbb{E}[\varphi^{\text{OR}}(Y,T,X)\cdot s(Y,T,X|\theta_0)].
\end{align*}
The above derivations suggest that $\varphi^{\text{ARD}}(Y,T,X)$,  $\varphi^{\text{RR}}(Y,T,X)$, and $\varphi^{\text{OR}}(Y,T,X) \in \mathcal{T}$, and thus the semiparametric efficiency bounds of ARD, RR, and OR can be computed as $\mathbb{E}\big[\big(\varphi^{\text{ARD}}(Y,T,X)\big)^2\big]$, $\mathbb{E}\big[\big(\varphi^{\text{RR}}(Y,T,X)\big)^2\big]$, and $\mathbb{E}\big[\big(\varphi^{\text{OR}}(Y,T,X)\big)^2\big]$, respectively.

The above derivations suggest that the semiparametric efficiency results in Theorem 1 can be generalized to various estimators of interest.

\section{Proof of Lemma \ref{lemma:lagrangian}}\label{appendix:duality}

To simplify notation, suppose we work with a single update and denote the initial estimate as $\hat{p}_t^{\text{Init}}(\cdot)$.
The primal optimization problem is:
\begin{align*}
\underset{ \bm{\varepsilon}\in\mathbb{R}^J }{\min} \ &\ -\frac{1}{n}\sum_{i=1}^n   \Big[ Y_i \Big( \text{logit}\big(\hat{p}^{\text{Init}}_t(X_i)\big)+ \sum_{j=1}^J \varepsilon_j \cdot  \frac{\mathds{1}(X_i\in\mathcal{A}_j)}{\mathds{P}(\mathcal{A}_j)} \frac{\mathds{1}(T_i=t)}{\hat{e}(X_i)}\Big)\\
	 & \qquad \qquad - \log \Big( 1 + \text{exp}^{ \text{logit}\big(\hat{p}^{\text{Init}}_t(X_i)\big)+ \sum_{j=1}^J\varepsilon_j \cdot \frac{\mathds{1}(X_i\in\mathcal{A}_j)}{\mathds{P}(\mathcal{A}_j)} \frac{\mathds{1}(T_i=t)}{\hat{e}(X_i)} } \Big) \Big],\\
\text{s.t.} \ &\ ||\bm{\varepsilon}||_2 - \delta \leq 0 . 
\end{align*}
The Lagrangian associated with the primal problem is 
\begin{align*}
 L(\bm{\varepsilon}, \lambda)= -\frac{1}{n}\sum_{i=1}^n  & \Big[ Y_i \Big( \text{logit}\big(\hat{p}^{\text{Init}}_t(X_i)\big)+ \sum_{j=1}^J \varepsilon_j \cdot  \frac{\mathds{1}(X_i\in\mathcal{A}_j)}{\mathds{P}(\mathcal{A}_j)} \frac{\mathds{1}(T_i=t)}{\hat{e}(X_i)}\Big)\\
	 &- \log \Big( 1 + \text{exp}^{ \text{logit}\big(\hat{p}^{\text{Init}}_t(X_i)\big)+ \sum_{j=1}^J\varepsilon_j \cdot \frac{\mathds{1}(X_i\in\mathcal{A}_j)}{\mathds{P}(\mathcal{A}_j)} \frac{\mathds{1}(T_i=t)}{\hat{e}(X_i)} } \Big) \Big] + \lambda (||\bm{\varepsilon}||_2 - \delta),
\end{align*}
where $\lambda\geq 0$ is the Lagrange multiplier. The Lagrangian primal problem is defined as 
\begin{align*}
\min_{\bm{\varepsilon}}  L_{\texttt{primal}}(\bm{\varepsilon}, \lambda) = \min_{\bm{\varepsilon}} \max_{\lambda \geq 0}  L(\bm{\varepsilon}, \lambda).
\end{align*}
The Lagrangian dual function is thus defined as
\begin{align*}
 L_{\texttt{dual}}(\lambda) = \min_{\bm{\varepsilon}} L(\bm{\varepsilon}, \lambda), 
\end{align*}
and the Lagrangian dual problem is 
\begin{align*}
\max_{\lambda \geq 0} L_{\texttt{dual}}( \lambda) = \max_{\lambda \geq 0} \min_{\bm{\varepsilon}} L(\bm{\varepsilon}, \lambda).
\end{align*}
Given our optimization problem is a convex problem, there is no duality gap between the primal and dual problems. 
Next, we solve for $\hat{\varepsilon}_j$ by taking derivative of the dual function.
\begin{align*}
  \frac{\partial L_{\texttt{dual}}(\hat{\bm{\varepsilon}},\lambda)}{\partial \varepsilon_j}&=
  -\frac{1}{n}\sum_{i:T_i=t}\Big\{ Y_i \frac{\mathds{1}(X_i\in\mathcal{A}_j)}{\mathds{P}(\mathcal{A}_j)} \frac{\mathds{1}(T_i=t)}{\hat{e}(X_i)} - \frac{\text{exp}\Big(\text{logit}\big(\hat{p}^{\text{Init}}_t(X_i)\big) + \sum_{j=1}^J\hat{\varepsilon}_j\cdot \frac{\mathds{1}(X_i\in\mathcal{A}_j)}{\mathds{P}(\mathcal{A}_j)} \frac{\mathds{1}(T_i=t)}{\hat{e}(X_i)}\Big)}{1+\text{exp}\Big(\text{logit}\big(\hat{p}^{\text{Init}}_t(X_i)\big) + \sum_{j=1}^J\hat{\varepsilon}_j\cdot \frac{\mathds{1}(X_i\in\mathcal{A}_j)}{\mathds{P}(\mathcal{A}_j)} \frac{\mathds{1}(T_i=t)}{\hat{e}(X_i)}\Big)}\\
  &\cdot \frac{\mathds{1}(X_i\in\mathcal{A}_j)}{\mathds{P}(\mathcal{A}_j)} \frac{\mathds{1}(T_i=t)}{\hat{e}(X_i)}\Big\} + \lambda\varepsilon_j ,\\
&= -\frac{1}{n}\sum_{i:T_i=t}\Big\{ \Big[Y_i -\frac{\text{exp}\Big(\text{logit}\big(\hat{p}^{\text{Init}}_t(X_i)\big) + \sum_{j=1}^J\hat{\varepsilon}_j \cdot \frac{\mathds{1}(X_i\in\mathcal{A}_j)}{\mathds{P}(\mathcal{A}_j)} \frac{\mathds{1}(T_i=t)}{\hat{e}(X_i)}\Big)}{1+\text{exp}\Big(\text{logit}\big(\hat{p}^{\text{Init}}_t(X_i)\big)+ \sum_{j=1}^J\hat{\varepsilon}_j\cdot \frac{\mathds{1}(X_i\in\mathcal{A}_j)}{\mathds{P}(\mathcal{A}_j)} \frac{\mathds{1}(T_i=t) }{\hat{e}(X_i)}\Big)}\Big]\\
&\cdot \frac{\mathds{1}(X_i\in\mathcal{A}_j)}{\mathds{P}(\mathcal{A}_j)} \frac{\mathds{1}(T_i=t)}{\hat{e}(X_i)}\Big\}+ \lambda\varepsilon_j,\\
  &= -\frac{1}{n}\sum_{i:T_i=t} \Big\{\Big(Y_i - \hat{p}_t(X_i)\Big)\cdot  \frac{\mathds{1}(X_i\in\mathcal{A}_j)}{\mathds{P}(\mathcal{A}_j)} \frac{1}{\hat{e}(X_i)}\Big\} + \lambda \varepsilon_j,\\
  \frac{\partial L_{\texttt{dual}}(\hat{\bm{\varepsilon}},\lambda)}{\partial \lambda}  &=  -\hat{\phi}_j(X_i) + \lambda\varepsilon_j= 0,\\
  &\implies \hat{\phi}_j(X_i) = \lambda\varepsilon_j, \quad \varepsilon_j = \frac{\hat{\phi}_j(X_i)}{\lambda}.
\end{align*}

By plugging in $ \varepsilon_j = \frac{\hat{\phi}_j(X_i)}{\lambda}$, we have the  Lagrangian dual function equals to 
\begin{align*}
 L_{\texttt{dual}}( \lambda) = -\frac{1}{n}\sum_{i:T_i=t}  & \Big[ Y_i \Big( \text{logit}\big(\hat{p}^{\text{Init}}_t(X_i)\big)+ \sum_{j=1}^J \frac{\hat{\phi}_j(X_i)\hat{S}_j(X_i)}{\lambda}\Big)\\
	 &- \log \Big( 1 + \text{exp}^{ \text{logit}\big(\hat{p}^{\text{Init}}_t(X_i)\big)+  \sum_{j=1}^J \frac{\hat{\phi}_j(X_i)\hat{S}_j(X_i)}{\lambda} } \Big) \Big] + \lambda (\frac{||\bm{\hat{\phi}(X_i)}||_2}{\lambda} - \delta).
\end{align*}
Hence, the dual problem reduces to 
\begin{align*}
 & \max_{\lambda\geq 0}\  -\frac{1}{n}\sum_{i:T_i=t}   \Big[ Y_i \Big( \text{logit}\big(\hat{p}^{\text{Init}}_t(X_i)\big)+ \sum_{j=1}^J \frac{\hat{\phi}_j(X_i)\hat{S}_j(X_i)}{\lambda}\Big)\\
	 & \qquad \qquad- \log \Big( 1 + \text{exp}^{ \text{logit}\big(\hat{p}^{\text{Init}}_t(X_i)\big)+  \sum_{j=1}^J \frac{\hat{\phi}_j(X_i)\hat{S}_j(X_i)}{\lambda} } \Big) \Big] + \lambda (\frac{||\bm{\hat{\phi}(X_i)}||_2}{\lambda} - \delta).
\end{align*}
With the following substitution,
\begin{align*}
\sum_{j=1}^J \frac{\hat{\phi}_j(X_i)\hat{S}_j(X_i)}{\lambda} = \frac{\sum_{j=1}^J \hat{\phi}_j \hat{S}_j(X_i)}{||\bm{\hat{\phi}}(X_i)||_2}\cdot \frac{||\bm{\hat{\phi}}(X_i)||_2}{\lambda} = \tilde{S}_t(X_i) \cdot \frac{||\bm{\hat{\phi}}(X_i)||_2}{\lambda}, 
\end{align*}
the dual problem satisfies
\begin{align*}
&\arg\max_{\lambda\geq 0}\  -\frac{1}{n}\sum_{i:T_i=t}   \Big[ Y_i \Big( \text{logit}\big(\hat{p}^{\text{Init}}_t(X_i)\big)+ \tilde{S}_t(X_i) \cdot \frac{||\bm{\hat{\phi}}(X_i)||_2}{\lambda} \Big)\\
	 &\qquad \qquad - \log \Big( 1 + \text{exp}^{ \text{logit}\big(\hat{p}^{\text{Init}}_t(X_i)\big)+  \tilde{S}_t(X_i) \cdot \frac{||\bm{\hat{\phi}}(X_i)||_2}{\lambda} } \Big) \Big] + \lambda (\frac{||\bm{\hat{\phi}}(X_i)||_2}{\lambda}- \delta),\\
= &\arg \max_{\lambda\geq 0}\  -\frac{1}{n}\sum_{i:T_i=t}   \Big[ Y_i \Big( \text{logit}\big(\hat{p}^{\text{Init}}_t(X_i)\big)+ \tilde{S}_t(X_i) \cdot \frac{||\bm{\hat{\phi}}(X_i)||_2}{\lambda} \Big)\\
	 &\qquad \qquad - \log \Big( 1 + \text{exp}^{ \text{logit}\big(\hat{p}^{\text{Init}}_t(X_i)\big)+  \tilde{S}_t(X_i) \cdot \frac{||\bm{\hat{\phi}}(X_i)||_2}{\lambda} } \Big) \Big] - \lambda\delta.
\end{align*}

Through the reparametrization $\gamma = \frac{||\hat{\phi}(X_i)||_2}{\delta}$, the dual problem can be reformulated as
\begin{align*}
  &\arg\min_{\gamma> 0} -\frac{1}{n}\sum_{i:T_i=t} \Big[ Y_i \Big( \text{logit}\big(\hat{p}^{(k-1)}_t(X_i)\big)+ \tilde{S}_t(X_i) \cdot \gamma \Big) \\
  &\quad\quad\quad\quad- \log \Big( 1 + \text{exp}^{ \text{logit}\big(\hat{p}^{(k-1)}_t(X_i)\big)+  \tilde{S}_t(X_i) \cdot \gamma } \Big) \Big] - \frac{||\bm{\hat{\phi}}(X_i)||_2\cdot\delta}{\gamma},
 \end{align*}
 where $\frac{||\bm{\hat{\phi}}(X_i)||_2\cdot\delta}{\gamma}$ is sufficiently close to 0.



\section{Additional Remarks}

\begin{remark}\label{remark:uni-local}
The multivariate local least favorable submodel defined in Eq (2) implies a one-dimensional universal least favorable submodel. The resulting one-step TMLE along this universal least favorable submodel corresponds with iteratively maximizing log likelihood for the multivariate least favorable submodel in $\bm{\varepsilon}$ under a constraint that $||\bm{\varepsilon}||=\gamma$ for some small $\gamma$, and noting that this MLE $\bm{\varepsilon}$ is known in closed form and equals $\gamma \cdot \frac{P_n \varphi} {||P_n \varphi||}$. One stops the iteration when the log-likelihood reaches its maximum. Our proposed method in Eq (3) can be viewed as a variation of this in the targeted learning literature.
\end{remark}

\begin{remark}\label{remark:continuous-measure}
The above theoretical results can be readily extended to infinitely many subgroups cases. Concretely, we denote $\bm{\alpha}_t(\bm{\nu})$ as a set of subgroup parameters indexed by a continuous vector $\bm{\nu}$ defined on a compact parameter space, where $\bm{\alpha}_{t}(\bm{\nu}) =  {P}\big( Y(t)=1|X\in\mathcal{A}(\bm{\nu})\big)$, and we define the vector of efficient influence functions as $\bm{\varphi}_t(\bm{\nu})$. As long as $\bm{\alpha}_t(\bm{\nu})$ is a smooth function of $\bm{\nu}$ the function class $\{ \bm{\varphi}(o;\bm{\alpha}_t(\bm{\nu}), \bm{\eta}), \bm{\eta}\in \mathcal{H}\}$  is a Donsker class, our theoretical results suggest that $\sqrt{n}(\hat{\bm{\alpha}}_t(\bm{\nu}) - \bm{\alpha}_t(\bm{\nu}))$ converges to a Gaussian process. 
\end{remark}

\section{Extension of the Proposed Method}

\subsection{Extension to continuous outcomes}\label{appendix:continuous-outcome}

There are two options to adapt our proposed method to continuous outcomes.  The first option is to use a different updating procedure for continuous outcomes \citep{gruber2010targeted}. For example, $\hat{p}^{(k)}_t$ can be obtained through a linear update:
\begin{align*}
\hat{p}^{(k)}_t(X_i) &= \hat{p}^{(k-1)}_t(X_i) + \hat{\gamma}^{(k)}\cdot \tilde{S}^{(k-1)}_t(X_i), \quad i\in\{i:T_i=t\}, 
\end{align*}
where $$\tilde{S}^{(k-1)}_t(X_i) =
 \frac{ \sum_{j=1}^d \frac{\mathds{1}(X_i\in\mathcal{A}_j)}{\hat{P}(\mathcal{A}_j)} \frac{\mathds{1}( T_i=t)}{\hat{e}_t(X_i)}\cdot\Big(\sum_{l=1}^n \hat{\phi}^{(k-1)}_j(Y_l, T_l,X_l)\Big)}{\sqrt{ \sum_{j=1}^d \left(  \sum_{l=1}^n \hat{\phi}^{(k-1)}_j(Y_l,T_l,X_l) \right)^2}},$$ and $\hat{\phi}^{(k-1)}_j(Y_i, T_i,X_i) = \frac{\mathds{1}(X_i\in\mathcal{A}_j)}{\hat{P}(\mathcal{A}_j)}\frac{\mathds{1}( T_i=t)}{\hat{e}_t(X_i)}(Y_i - \hat{p}_t^{(k-1)}(X_i)).$ $\hat{{\gamma}}^{(k)}$ can then be obtained by minimizing a user specified loss function $L(\cdot)$,
\begin{align*}
\hat{{\gamma}}^{(k)}  &= \underset{\gamma\geq 0}{\arg\min}\  \frac{1}{n}\sum_{i: T_i=t}   L \Big( Y_i,\ \hat{p}^{(k-1)}_t(X_i), \gamma, \ \tilde{S}^{(k-1)}_t(X_i) \Big).
\end{align*}
For example, one can consider the $l_2$-loss \citep{gruber2010targeted}, that is
\begin{align*}
    \hat{{\gamma}}^{(k)}  &= \underset{\gamma\geq 0}{\arg\min}\  \frac{1}{n}\sum_{i: T_i=t}    || Y_i -  \hat{p}^{(k)}_t(X_i)||_2^2.
\end{align*}

The second option is to dichotomize a continuous outcome into a binary outcome \citep{lewis2004defence,van2011targeted}. Then the proposed methodology in the main manuscript can be directly applied.

\subsection{Simultaneous confidence intervals of absolute risk difference, relative risk, and odds ratio}\label{supp:simul}
Let $\hat{\kappa}(q, J)$ be a consistent estimate of the $(1-q)$-th quantile of  $\max_{j\in 1, \ldots, J} |Z_j|$, where $Z_1, \ldots, Z_J$ are i.i.d. standard normal random variables. Then,
\begin{align*}
\hat{\alpha}_{\text{ARD},j} \pm \hat{\kappa}_{\text{ARD},q/2}\Big(\frac{\hat{\Sigma}_{\text{ARD},jj}}{n}\Big)^{1/2},\quad\hat{\alpha}_{\text{RR},j} \pm \hat{\kappa}_{\text{RR},q/2}\Big(\frac{\hat{\Sigma}_{\text{RR},jj}}{n}\Big)^{1/2},\quad\hat{\alpha}_{\text{OR},j}  \pm \hat{\kappa}_{\text{OR},q/2}\Big(\frac{\hat{\Sigma}_{\text{OR},jj}}{n}\Big)^{1/2}
\end{align*}
\begin{align*}
\lim_{n\rightarrow \infty}\mathbb{P}\left(\hat{\alpha}_{\text{ARD},j} \pm \hat{\kappa}(q, J) \cdot\Big(\frac{\hat{\Sigma}_{\text{ATE},jj}}{n}\Big)^{1/2}, j =1, \ldots, J\right) = 1-q,\\
\lim_{n\rightarrow \infty}\mathbb{P}\left(\hat{\alpha}_{\text{RR},j} \pm \hat{\kappa}(q, J) \cdot\Big(\frac{\hat{\Sigma}_{\text{RR},jj}}{n}\Big)^{1/2}, j =1, \ldots, J\right) = 1-q,\\
\lim_{n\rightarrow \infty}\mathbb{P}\left(\hat{\alpha}_{\text{OR},j} \pm \hat{\kappa}(q, J) \cdot\Big(\frac{\hat{\Sigma}_{\text{OR},jj}}{n}\Big)^{1/2}, j =1, \ldots, J\right) = 1-q,
\end{align*}
where $\hat{\Sigma}_{\text{ARD}} = \big( \hat{\Sigma}_{{\text{ARD}}, jk}\big)_{j,k=1}^J = \frac{1}{n}\sum_{i=1}^n \bm{\hat{\varphi}}_{{\text{ARD}},i}\bm{\hat{\varphi}}_{\text{ARD},i}' $, $$\bm{\hat{\varphi}}_{{\text{ARD}},i} = \big(\hat{\varphi}_{{\text{ARD}},1}(Y_i,T_i,X_i), \ldots, \hat{\varphi}_{{\text{ARD}},J}(Y_i,T_i,X_i)\big)' $$. Similarly, we can construct the covariance matrix $\hat{\Sigma}_{\text{RR}}$ and $\hat{\Sigma}_{\text{OR}}$. The plug-in estimates of the efficient influence functions are,
\begin{align*}
\hat{\varphi}_{\text{ARD},j}(O_i) &=  \frac{ \mathds{1}(X_i \in \mathcal{A}_j)}{\hat{\mathds{P}}(\mathcal{A}_j)}\Big[\Big(\frac{T_i}{\hat{e}_1(X_i)}\big(Y_i-\hat{p}_1(X_i)\big)+\hat{p}_1(X_i)-\hat{\alpha}_1\Big)\\
&- \Big(\frac{1-T_i}{\hat{e}_0(X_i)}\big(Y_i-\hat{p}_0(X_i)\big)+\hat{p}_0(X_i)-\hat{\alpha}_0\Big)\Big],\\
\hat{\varphi}_{\text{RR},j}(O_i) &=  \frac{ \mathds{1}(X_i \in \mathcal{A}_j)}{\hat{\mathds{P}}(\mathcal{A}_j)}\Big[\frac{1}{\hat{\alpha}_0}\Big(\frac{T_i}{\hat{e}_1(X_i)}\big(Y_i-\hat{p}_1(X_i)\big)+\hat{p}_1(X_i)-\hat{\alpha}_1\Big)\\
&- \frac{\hat{\alpha}_1}{\hat{\alpha}_0^2}\Big(\frac{1-T_i}{\hat{e}_0(X_i)}\big(Y_i-\hat{p}_0(X_i)\big)+\hat{p}_0(X_i)-\hat{\alpha}_0\Big)\Big],\\
\hat{\varphi}_{\text{OR},j}(O_i) &=  \frac{ \mathds{1}(X_i \in \mathcal{A}_j)}{\hat{\mathds{P}}(\mathcal{A}_j)}\Big[\frac{1-\hat{\alpha}_0}{\hat{\alpha}_0(1-\hat{\alpha}_1)^2}\Big(\frac{T_i}{\hat{e}_1(X_i)}\big(Y_i-\hat{p}_1(X_i)\big)+\hat{p}_1(X_i)-\hat{\alpha}_1\Big)\\
&- \frac{\hat{\alpha}_1}{\hat{\alpha}_0^2(1-\hat{\alpha}_1)}\Big(\frac{1-T_i}{\hat{e}_0(X_i)}\big(Y_i-\hat{p}_0(X_i)\big)+\hat{p}_0(X_i)-\hat{\alpha}_0\Big)\Big].
\end{align*}

We obtain the efficient influence functions for $\bm{\alpha}_{\textbf{ARD}}$, $\bm{\alpha}_{\textbf{RR}}$, and $\bm{\alpha}_{\textbf{OR}}$ by applying multivariate delta method on $(\bm{\alpha_1}, \bm{\alpha_0})$.

\section{Proof of Eq (7) in Section \ref{subsec:proposal}}\label{appendix:multi-score-proof}

In this part, we aim to derive the score function under the iterative procedure. For simplicity, denote the final update as $\hat{p}_1(X)$, $\hat{\varepsilon}$. The conditional likelihood function of $Y$ given $(T,X)$ is:
\begin{align*}
L(Y|T,X) = p(T, X)^Y \cdot \big(1- p(T,X)\big)^{1-Y}.
\end{align*}

\begin{proof}
\begin{align*}
  \frac{\partial \text{log} L(Y|T,X;\varepsilon)}{\partial \varepsilon} &= \frac{1}{n}\sum_{i=1}^n\sum_{j=1}^J\Big\{ Y_i \tilde{S}_1(X_i) - \frac{\text{expit}\Big(\text{logit}\big(\hat{p}^{(K-1)}_1(X_i)\big) + \hat{\varepsilon}\cdot \tilde{S}_1(X_i)\Big)}{1+\text{expit}\Big(\text{logit}\big(\hat{p}^{(K-1)}_1(X_i)\big) + \hat{\varepsilon}\cdot \tilde{S}_1(X_i)\Big)}\cdot \tilde{S}_1(X_i)\Big\} = 0,\\
  &= \frac{1}{n}\sum_{i=1}^n\sum_{j=1}^J\Big\{ \Big[Y_i -\frac{\text{expit}\Big(\text{logit}\big(\hat{p}^{(K-1)}_1(X_i)\big) + \hat{\varepsilon}\cdot \tilde{S}_1(X_i)\Big)}{1+\text{expit}\Big(\text{logit}\big(\hat{p}^{(K-1)}_1(X_i)\big) + \hat{\varepsilon}\cdot \tilde{S}_1(X_i)\Big)}\Big]\cdot \tilde{S}_1(X_i)\Big\}=0,\\
  &= \frac{1}{n}\sum_{i=1}^n\sum_{j=1}^J \Big\{\Big(Y_i - \hat{p}_1(X_i)\Big)\cdot \tilde{S}_1(X_i)\Big\} = 0,\\
  &= \frac{1}{n}\sum_{i=1}^n\sum_{j=1}^J \Big\{\Big(Y_i - \hat{p}_1(X_i)\Big)\cdot \frac{\mathds{1}(X_i\in\mathcal{A}_j)}{\hat{\mathds{P}}(\mathcal{A}_j)}\frac{T_i}{\hat{e}_1(X_i)}\cdot\frac{\frac{1}{n}\sum_{i=1}^n \hat{\phi}_j(Y_i, T_i,X_i)}{\sqrt{ \sum_{m=1}^J \left(  \frac{1}{n}\sum_{i=1}^n \hat{\phi}_m(Y_i,T_i,X_i) \right)^2}}\Big\},\\
  &= \frac{\sum_{j=1}^J\Big\{ \frac{1}{n}\sum_{i=1}^n  \big\{ \hat{\phi}_j(Y_i,T_i,X_i) \big\} \frac{1}{n}\sum_{i=1}^n  \big\{\hat{\phi}_j(Y_i,T_i,X_i)\big\}\Big\}}{\sqrt{ \sum_{m=1}^J \left(  \frac{1}{n}\sum_{i=1}^n  \hat{\phi}_m(Y_i,T_i,X_i) \right)^2}},\\
  &= \frac{\sum_{j=1}^J 
 \Big(\frac{1}{n}\sum_{i=1}^n \hat{\phi}_j(Y_i,T_i,X_i)\Big)^2}{\sqrt{ \sum_{m=1}^J \left(  \frac{1}{n}\sum_{i=1}^n \hat{\phi}_m(Y_i,T_i,X_i) \right)^2}}=\sqrt{ \sum_{j=1}^J \left(  \frac{1}{n}\sum_{i=1}^n \hat{\phi}_j(Y_i,T_i,X_i) \right)^2} = 0.
\end{align*}
\end{proof}

\section{Implementation Details}

\subsection{Implementation details of cross-fitted TMLE }\label{appendix:cv-TMLE}
As mentioned in the main manuscript, the Donsker class condition on the efficient influence function can be relaxed by cross-fitting. Here, we briefly discuss the implementation details of the cross-fitted iterative version of the one-step TMLE of the multivariate dimensional parameters. The non-iterative version can be carried out similarly. 

\begin{description}
\item[Step 1.] Randomly split the sample into $V$ equal-sized subsamples.  
\item[Step 2.]For $v \leftarrow 1$ to $V$:
\item[\quad\quad\quad\quad (a)] Use subsample $v$ as the validation data and the rest as training data. Generate initial estimates of $p_t(X)$ and $e_t(X)$ by fitting the model on the training set, and predict on the validation set, denoted as $\hat{p}_{t,v}^{(0)}(X)$ and $\hat{e}_{t,v}(X)$. 
\item[\quad\quad\quad\quad (b)] For $k \leftarrow 1, \ldots, K$ (or until converge):
\begin{align*} 
	   \hat{\varepsilon}_v^{(k)} = \underset{\varepsilon \in\mathds{R}}{\arg\max}  &\Bigg\{\frac{1}{n_v}\sum_{i:T_i=t, i\in v}  \Big[ Y_i \Big( \text{logit}\big(\hat{p}^{(k-1)}_{t,v}(X_i)\big)+ \varepsilon  \tilde{S}^{(k-1)}_{t,v}(X_i) \Big)\\
	  &-\log \Big( 1 + \text{exp}^{ \text{logit}\big(\hat{p}^{(k-1)}_{t,v}(X_i)\big)+ \varepsilon  \tilde{S}^{(k-1)}_{t,v}(X_i) } \Big) \Big]\Bigg\},
\end{align*}
\quad\quad\quad\quad where
\begin{align*}
    \tilde{S}^{(k-1)}_{t,v}(X_i)=\frac{\sum_{j=1}^d\frac{1}{n_v}\sum_{l\in v}\hat{\phi}^{(k-1)}_j(Y_l, T_l,X_l)}{\sqrt{ \sum_{m=1}^d \left(  \frac{1}{n_v}\sum_{i\in v} \hat{\phi}^{(k-1)}_m(Y_i,T_i,X_i) \right)^2}} \cdot  \frac{\mathds{1}(X_i\in\mathcal{A}_j)}{\hat{\mathds{P}}(\mathcal{A}_j)} \frac{T_i}{\hat{e}(X_i)},
\end{align*}
\quad\quad\quad\quad and $\hat{\phi}^{(k-1)}_{j,v} = \frac{\mathds{1}(X_i\in\mathcal{A}_j)}{\hat{\mathds{P}}(\mathcal{A}_j)} \frac{T_i}{\hat{e}_t(X_i)}(Y_i - \hat{p}_{t,v}^{(k-1)}(X_i))$.
\item[\quad\quad\quad\quad (c)] Update the conditional risk via: 
\begin{align*}
	    \hat{p}_{t,v}^{(k)}( X_i) &= \text{expit}\left( \text{logit}\big(\hat{p}^{(k-1)}_{t,v}( X_i) \big)
	          + \varepsilon_v^{(k)} \cdot \tilde{S}^{(k-1)}_{t,v}(X_i) \right).
			\end{align*}
\item[\quad\quad\quad\quad (d)] Estimate $\alpha_{j,v}$ by: 
\begin{align*}
\hat{\alpha}_{j,v} = \frac{
\sum_{i\in v} \mathds{1}(X_i\in \mathcal{A}_j)\cdot \hat{p}_{t,v}^{(K)}(X_i)}{\sum_{i\in v}\mathds{1}(X_i\in\mathcal{A}_j)}.
\end{align*}
\item[Step 3.] Aggregate estimates from the validation sets by:
\begin{align*}
\hat{\alpha}_j = \frac{1}{V}
\sum_{v=1}^V \hat{\alpha}_{j,v}.
\end{align*}
	\end{description}

\subsection{Implementation details of the relative risk and odds ratio estimators}\label{supp:implementation}

\begin{description}
\item[Step 1.] Randomly split the sample into $V$ equal-sized subsamples.  
\item[Step 2.]For $v \leftarrow 1$ to $V$:
\item[\quad\quad\quad\quad (a)] Use subsample $v$ as the validation data and the rest as training data. Generate initial estimates of $p_1(X)$, $p_0(X)$, $e_1(X)$ and $e_0(X)$ by fitting the model on the training set, and predict on the validation set, denoted as $\hat{p}_{1,v}^{(0)}(X)$, $\hat{p}_{0,v}^{(0)}(X)$, $\hat{e}_{1,v}(X)$, and $\hat{e}_{0,v}(X)$. 
\item[\quad\quad\quad\quad (b)] For $k \leftarrow 1, \ldots, K$ (or until converge):
\begin{align*} 
	   \hat{\varepsilon}_{t,v}^{(k)} = \underset{\varepsilon \in\mathds{R}}{\arg\max}  &\Bigg\{\frac{1}{n_v}\sum_{i\in v}  \Big[ Y_i \Big( \text{logit}\big(\hat{p}^{(k-1)}_{t,v}(X_i)\big)+ \varepsilon  \tilde{S}^{(k-1)}_{t,v}(X_i) \Big)\\
	  &-\log \Big( 1 + \text{exp}^{ \text{logit}\big(\hat{p}^{(k-1)}_{t,v}(X_i)\big)+ \varepsilon  \tilde{S}^{(k-1)}_{t,v}(X_i) } \Big) \Big]\Bigg\},
\end{align*}
\quad\quad\quad\quad where
\begin{align*}
    \tilde{S}^{(k-1)}_{t,v}(X_i)=\frac{\sum_{j=1}^d\frac{1}{n_v}\sum_{l\in v}\hat{\phi}^{(k-1)}_j(Y_l, T_l,X_l)}{\sqrt{ \sum_{m=1}^d \left(  \frac{1}{n_v}\sum_{i\in v} \hat{\phi}^{(k-1)}_m(Y_i,T_i,X_i) \right)^2}} \cdot  \frac{\mathds{1}(X_i\in\mathcal{A}_j)}{\hat{\mathds{P}}(\mathcal{A}_j)} \frac{\mathds{1}(T_i=t)}{\hat{e}(X_i)},
\end{align*}
\quad\quad\quad\quad and $\hat{\phi}^{(k-1)}_{j,v} = \frac{\mathds{1}(X_i\in\mathcal{A}_j)}{\hat{\mathds{P}}(\mathcal{A}_j)} \frac{T_i}{\hat{e}_t(X_i)}(Y_i - \hat{p}_{t,v}^{(k-1)}(X_i))$.
\item[\quad\quad\quad\quad (c)] Update the conditional risk via: 
\begin{align*}
	    \hat{p}_{t,v}^{(k)}(T_i,X_i) &= \text{expit}\left( \text{logit}\big(\hat{p}^{(k-1)}_{t,v}(X_i) \big)
	          + \varepsilon_{t,v}^{(k)} \cdot \tilde{S}^{(k-1)}_{1,v}(X_i) \right).
			\end{align*}
\item[\quad\quad\quad\quad (d)] Estimate $\alpha_{j,v}$ by: 
\begin{align*}
\hat{\alpha}_{tj,v} = \frac{
\sum_{i\in v} \mathds{1}(X_i\in \mathcal{A}_j)\cdot \hat{p}_{t,v}^{(K)}(X_i)}{\sum_{i\in v}\mathds{1}(X_i\in\mathcal{A}_j)}.
\end{align*}
\item[Step 3.] Aggregate estimates from the validation sets by:
\begin{align*}
\hat{\alpha}_{tj} = \frac{1}{V}
\sum_{v=1}^V \hat{\alpha}_{tj,v}.
\end{align*}
\item[Step 4.] Estimate $\bm{\alpha}_{\text{RR}}$ and $\bm{\alpha}_{\text{OR}}$ as
\begin{align*}
\bm{\hat{\alpha}_{\textbf{RR}}} &= \Big(\frac{\hat{\alpha}_{1,1}}{\hat{\alpha}_{0,1}},\dots, \frac{\hat{\alpha}_{1,d}}{\hat{\alpha}_{0,d}}\Big),\quad\bm{\hat{\alpha}_{\textbf{OR}}} = \Big(\frac{\hat{\alpha}_{1,1}}{1-\hat{\alpha}_{1,1}}\Big/\frac{\hat{\alpha}_{0,1}}{1-\hat{\alpha}_{0,1}},\dots, \frac{\hat{\alpha}_{1,d}}{1-\hat{\alpha}_{1,d}}\Big/\frac{\hat{\alpha}_{0,d}}{1-\hat{\alpha}_{0,d}}\Big).
\end{align*}
	\end{description}

\section{Additional Simulation Results}\label{appendix:simulation}

\subsection{Comparison of the targeted learning approaches discussed in Section \ref{sec:method}}\label{appendix:sim-comparison}

In this simulation study, we compare the proposed iTMLE method with the conventional targeted learning approach without targeting multiple subgroups. That is, we compare the performances for four estimators: (1) ``TMLE-single," which is the TMLE without targeting multiple subgroups as discussed in Section \ref{sec:limitation-of-one-step-TMLE}, (2) ``TMLE-multiple," which is the one-step TMLE estimator that targets multiple subgroups discussed in Eq (2), (3) ``TMLE-multiple-ulfm," which is the same method as (2) but operates under the universal least favorable submodel, and (4) ``iTMLE," which is an iterative version of the one-step TMLE estimator discussed in Eq (3).

\begin{figure}[!p]
\centering
\includegraphics[width=\textwidth]{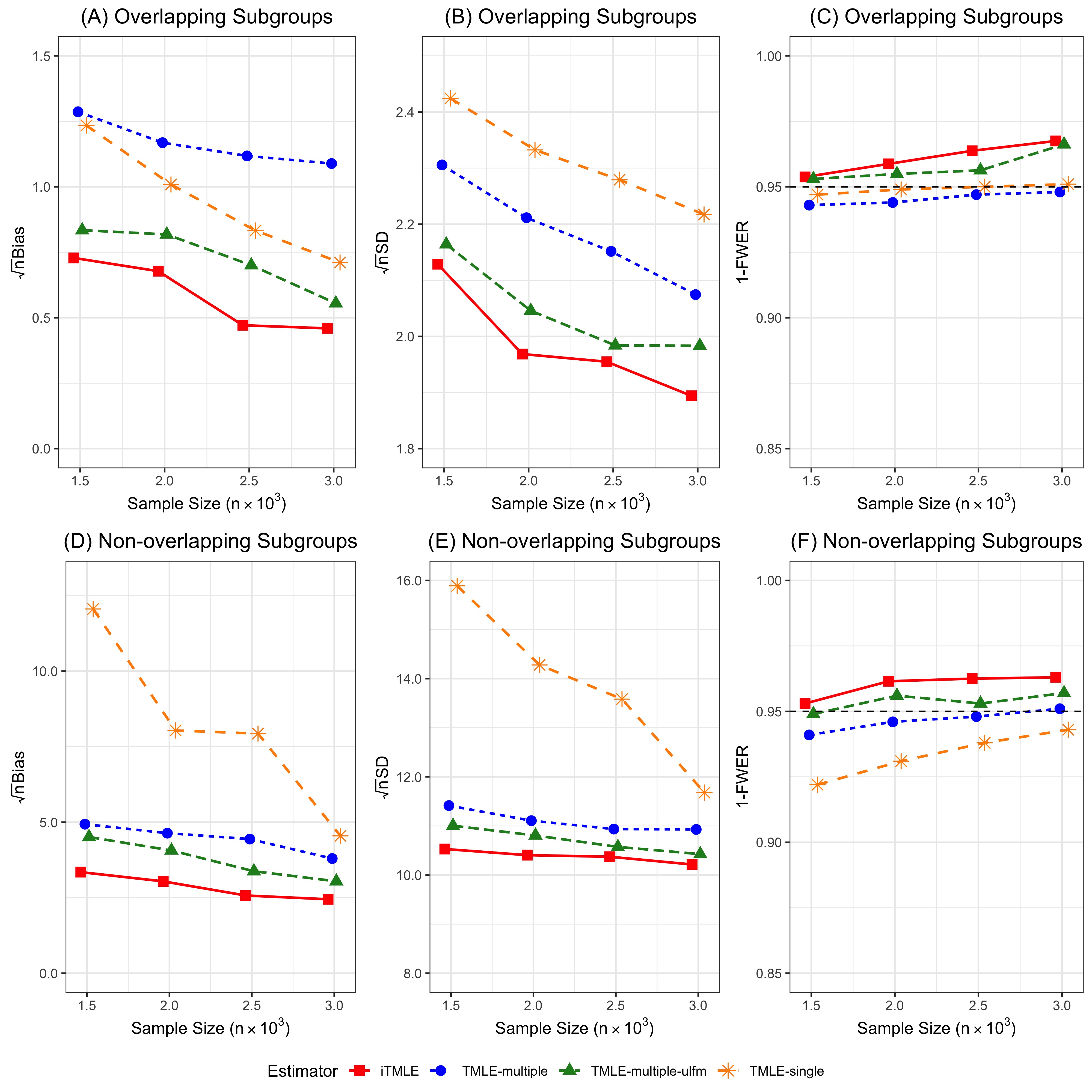}
\caption{Comparison of four TMLE estimators under overlapping ($d=4$) and non-overlapping subgroups ($d=10$). The considered methods include (1) iTMLE; (2) TMLE-single; (3) TMLE-multiple; (4) TMLE-multiple-ulfm.  All four methods use the random forest for the initial estimations. The maximum Monte Carlo standard error of (1-FWER) across the four estimators is 0.027. ``The maximum Monte Carlo standard error of (1-FWER)" refers to the largest standard error of (1-FWER) (out of all three considered estimators for the propensity score and the conditional expectation of the outcome based on logistic regression, random forest, and gradient boosting) computed from Monte Carlo samples.}
\label{fig:sim-tmle}
\end{figure}

Figure \ref{fig:sim-tmle} demonstrates that when $d=4$, iTMLE has smaller bias, smaller variance, and lower FWER. When the number of subgroups is not too large ($d=4$), targeting one subgroup at a time (``TMLE-single") yields smaller bias than targeting multiple subgroups (``TMLE-multiple"), while ``TMLE-single" loses control over bias, variance, and FWER when the number of subgroups is large. ``TMLE-multiple-ulfm" shows similar performance to iTMLE, but with slightly larger bias and variance.

\subsection{Misspecified propensity score model }

In this section, we compare the performance of the discussed method with other conventional estimators under the mis-specified propensity score model. The results from Figure \ref{supp:fig:tmle-dr} and Figure \ref{supp:fig:tmle-dml} are in-line with simulation studies in Section 6 of the main manuscript. 

\begin{figure}[ht!]
\centering
\includegraphics[width=\textwidth]{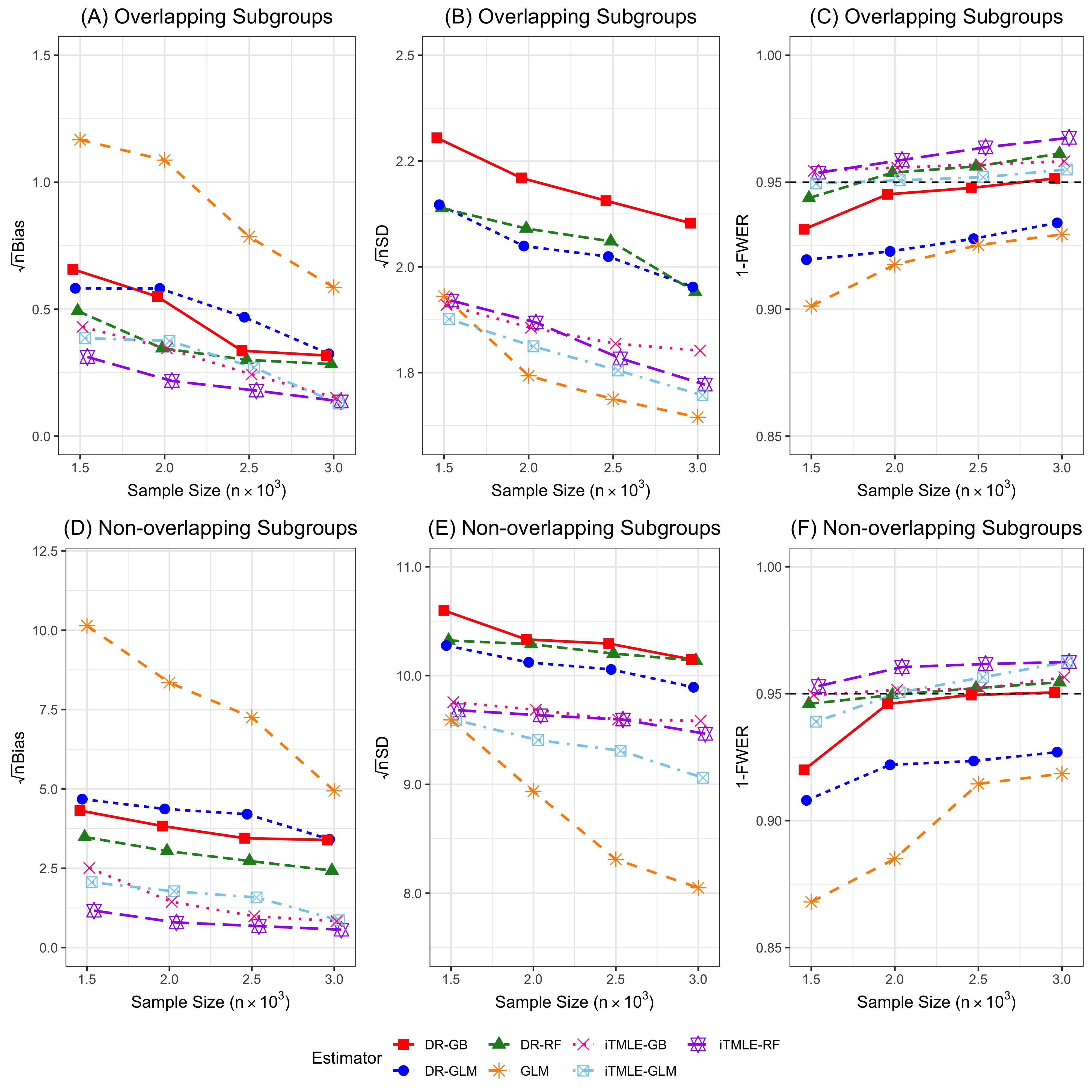}
\caption{Comparison of iTMLE with the conventional methods. ``iTMLE" denotes the discussed method. ``DR" denotes the doubly robust estimator. ``GLM" denotes the generalized linear models. The maximum Monte Carlo standard error of (1-FWER) across the four estimators is 0.030. ``The maximum Monte Carlo standard error of (1-FWER)" refers to the largest standard error of (1-FWER) (out of all three considered estimators for the propensity score and the conditional expectation of the outcome based on logistic regression, random forest, and gradient boosting) computed from Monte Carlo samples.}
\label{supp:fig:tmle-dr}
\end{figure}

\begin{figure}[ht!]
\centering
\includegraphics[width=\textwidth]{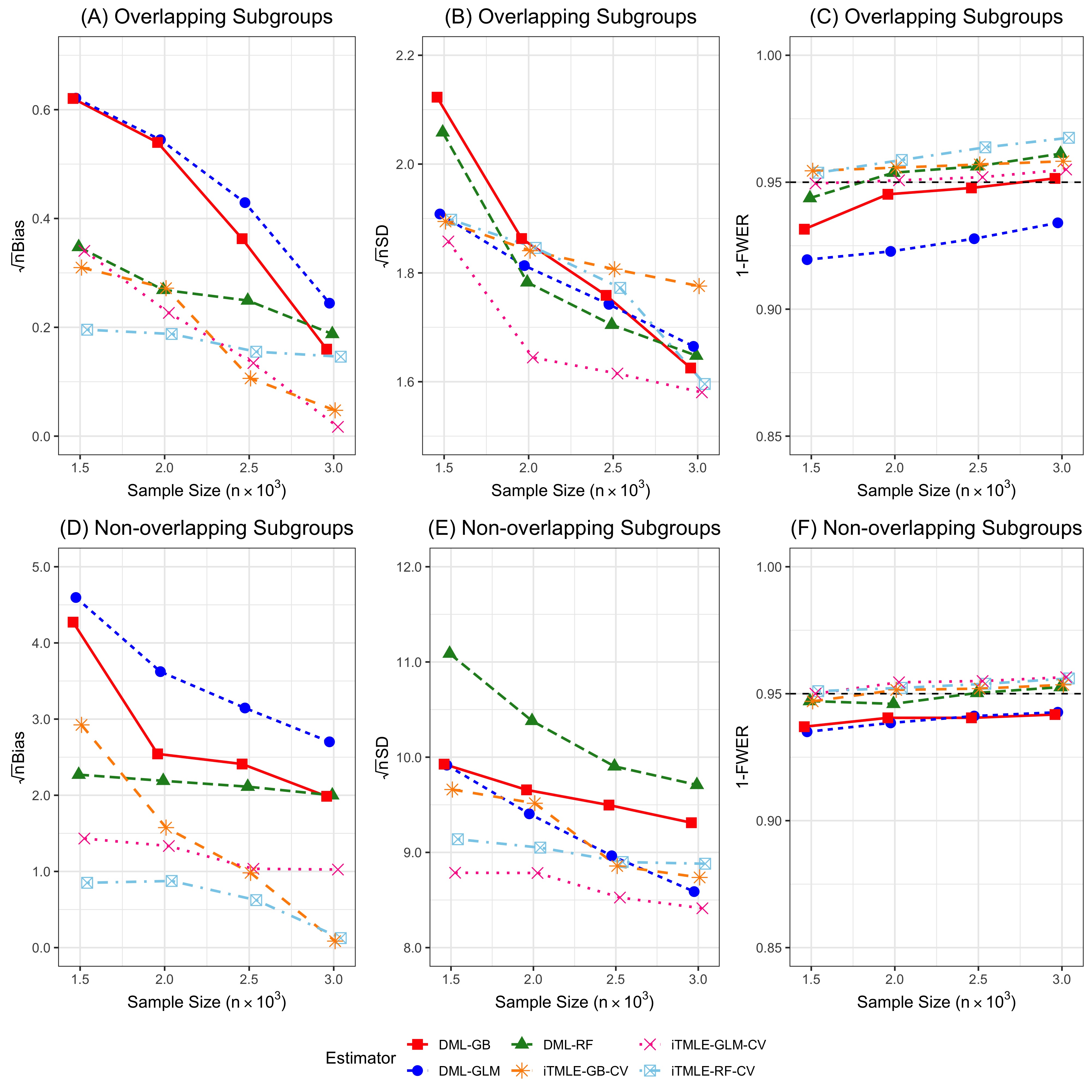}
\caption{Comparison of CV-iTMLE with the double machine learning method. ``CV-iTMLE" denotes  the discussed method with cross-fitting. ``DML" denotes the double machine learning method. The maximum Monte Carlo standard error of (1-FWER) across the four estimators is 0.028. ``The maximum Monte Carlo standard error of (1-FWER)" refers to the largest standard error of (1-FWER) (out of all three considered estimators for the propensity score and the conditional expectation of the outcome based on logistic regression, random forest, and gradient boosting) computed from Monte Carlo samples.}
\label{supp:fig:tmle-dml}
\end{figure}

\subsection{An alternative simulation design}\label{appendix:alternative-sim}

Kindly pointed out by an anonymous reviewer, the simulation design adopted in our main manuscript produces rather deterministic outcomes. Therefore, we provide additional simulation results under an alternative simulation design:
\begin{align*}
    X = (X_1, \ldots, X_5)^{\intercal} &\sim N(0, \Sigma),\quad \Sigma_{ij} = 0.5^{|i-j|}, \\
    T &\sim \text{Bernoulli}\Big(\text{expit}(X_{1} - 0.5\cdot X_{2} + 0.25 \cdot X_{3} + 0.1 \cdot X_{4})\Big),\\
    Y | T, X &\sim \text{Bernoulli}\Big( \text{expit} (T + \cdot X_{1} + \cdot X_{2} + \cdot X_{3} + \cdot X_{4}) \Big).
\end{align*}

We provide the distribution of $p_t(X) = P(Y=1|T = t, X)$ under the alternative simulation design in Figured \ref{fig:pt-dist} (A) and (B), and under the simulation design adopted in the main manuscript in Figured \ref{fig:pt-dist} (C) and (D).

\begin{figure}[h]
\centering
\includegraphics[width=0.85\textwidth]{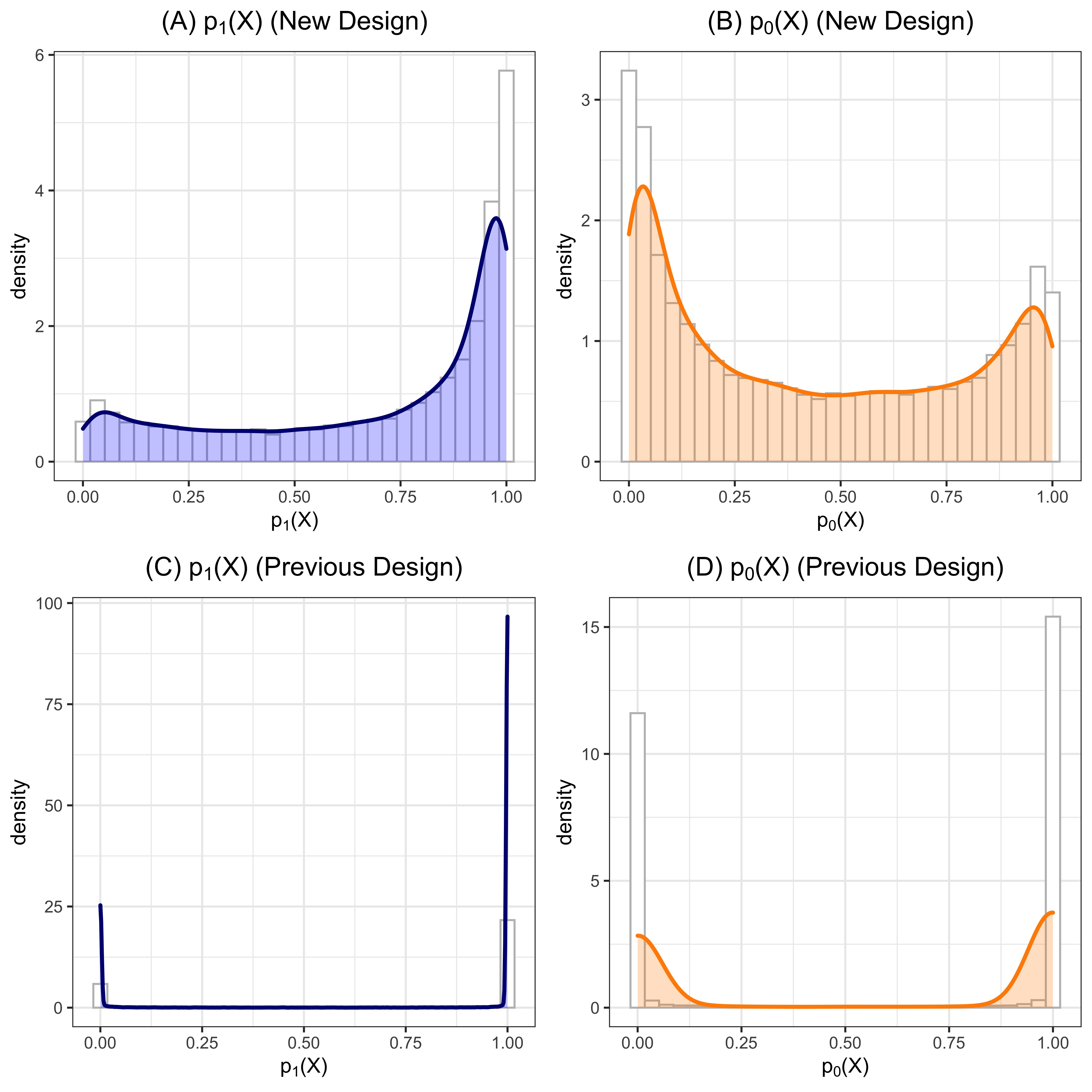}
\caption{Panels (A) and (B) provide the distribution of $p_t(X)$ under the new simulation design, and Panels (C) and (B) provide the distribution of $p_t(X)$ under the original simulation design.}
\label{fig:pt-dist}
\end{figure}

The results are summarized in Figure \ref{fig:sim1-new}
and \ref{fig:sim2-new}. From these new simulation results, we observe that the conclusions overall do not substantially  differ  from our previous simulation results. The $\sqrt{n}$-scaled standard deviation of all the estimators increase compared to the previous simulation design. We conjecture that the increased variances are due to the higher variability of the outcome variables under the new simulation setup. 

\begin{figure}[!p]
\centering
\includegraphics[width=\textwidth]{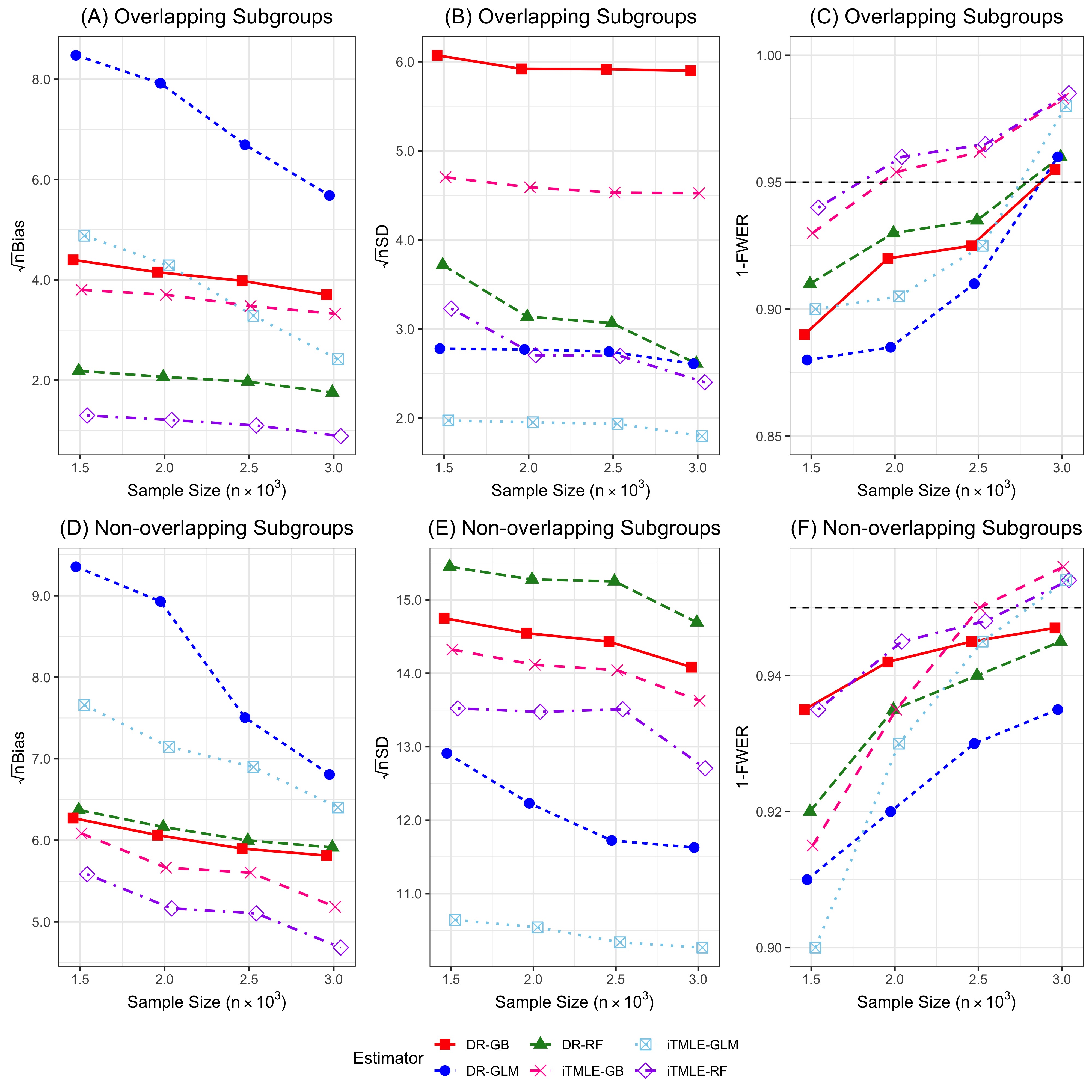}
\caption{Comparison of bias, standard deviation (scaled by root-$n$), and (1-FWER) in overlapping and non-overlapping subgroups. ``iTMLE" denotes the proposed estimator. ``DR" denotes the doubly robust estimator. The maximum Monte Carlo standard error of (1-FWER) across the four estimators is 0.030. ``The maximum Monte Carlo standard error of (1-FWER)" refers to the largest standard error of (1-FWER) (out of all three considered estimators for the propensity score and the conditional expectation of the outcome based on logistic regression, random forest, and gradient boosting) computed from Monte Carlo samples.}
\label{fig:sim1-new}
\end{figure}

\begin{figure}[!p]
\centering
\includegraphics[width=\textwidth]{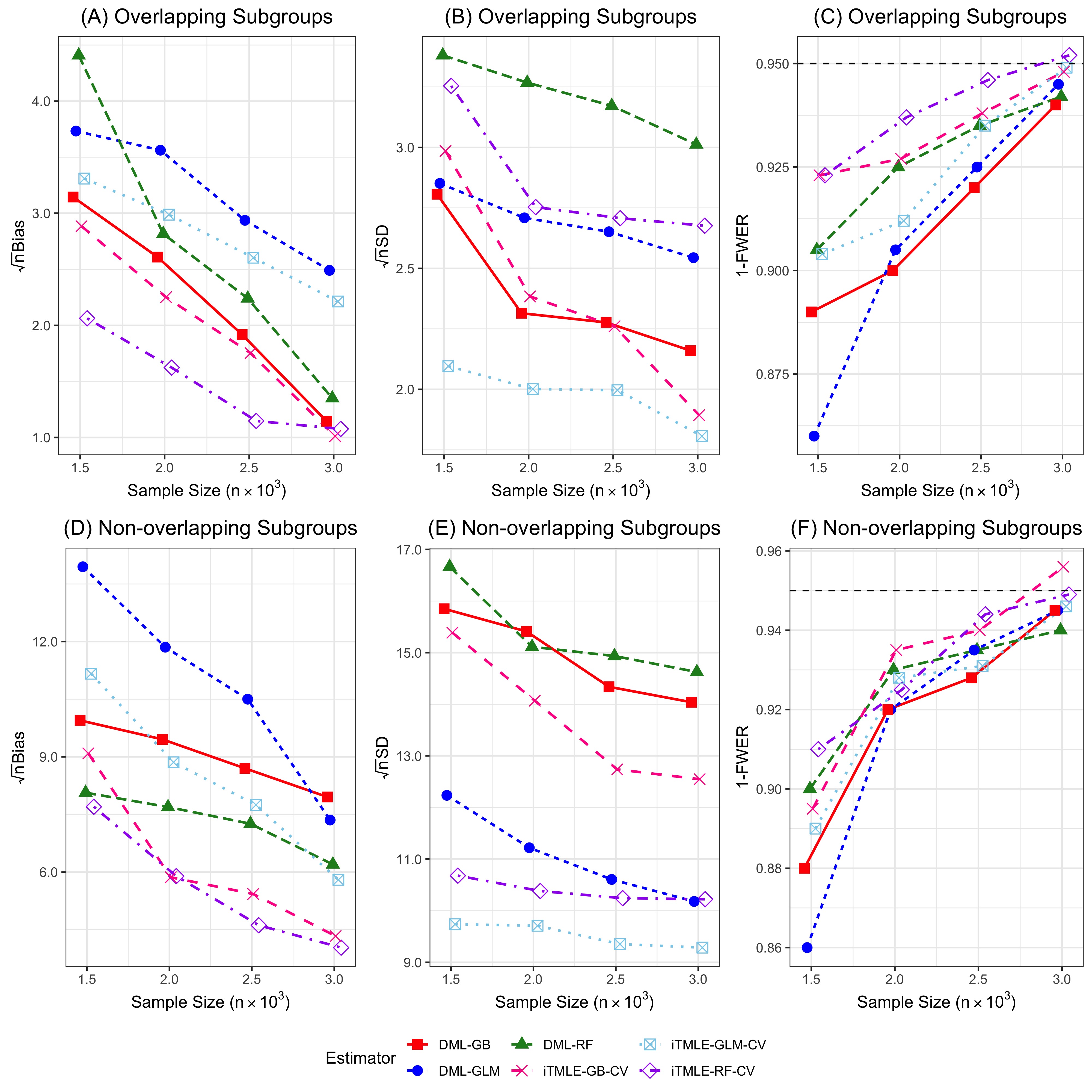}
\caption{Comparison of bias, standard deviation (scaled by root-$n$), and (1-FWER) in overlapping and non-overlapping subgroups. ``iTMLE" denotes the proposed estimator. ``DML" denotes the double machine learning method. The maximum Monte Carlo standard error of (1-FWER) across the four estimators is 0.031. ``The maximum Monte Carlo standard error of (1-FWER)" refers to the largest standard error of (1-FWER) (out of all three considered estimators for the propensity score and the conditional expectation of the outcome based on logistic regression, random forest, and gradient boosting) computed from Monte Carlo samples. }
\label{fig:sim2-new}
\end{figure}

\section{Details on the Case Studies}

\subsection{UK Biobank data preprocessing details}\label{appendix:data-details}

In UK Biobank study, participants provided lifestyle, medical history, and other health-related information through electronic questionnaires and physical measurements at one of the 22 assessment centers. Blood samples were also collected for genotyping. The UK Biobank study gained approval from the National Health Service’s National Research Ethics Service North West (11/NW/0382).

The individuals we investigated are unrelated and had passed standard quality control steps, including removal of outliers for heterozygosity or genotype missing rate, withdrawal of informed consent, and mismatch between reported and inferred sex by genotypes.

We obtain phenotype and genotype data from UK Biobank with the following steps. For phenotype data, first, we download encoded data in \textit{.enc} format from UK Biobank's Access Management System (AMS). To decrypt the encoded data, we download three helper programs: \texttt{ukb\_md5}, \texttt{ukb\_unpack}, and \texttt{ukb\_conv}. Note that the helper programs are only supported by Windows and Linux systems.  Second, we verify the integrity 
of the encoded data via \texttt{ukb\_md5} and unpack them into \textit{.enc\_ukb} format with \texttt{ukb\_unpack}. To convert the data into readible format, we use \texttt{ukb\_conv} to convert the \textit{.enc\_ukb} data into \textit{.csv} format (other options include txt, r, sas stata or bulk format). The data dictionary can be obtained using \texttt{ukb\_conv} with \textit{docs} option. After decrypting and converting the encoded data, we obtain a dataset with sample size n = $502,481$ and $20,502$ variables. In our study, since we only work with the phenotypes at the baseline to avoid confounding issues, we extract baseline variables with the suffix ``-0.0". The phenotype data we extract include gender, age at recruitment, AD family history, International Statistical Classification of Disease 9th revision (ICD 9) and 10th revision (ICD 10) codes and  self-report for T2D and AD.

For genotype data, first, we download imputed genotypes and associated sample information for 23 chromosomes. \textit{Imputation BGEN} and \textit{Imputation sample} can be obtained via \texttt{ukb\_gene} program.   \textit{Imputation BGI} and \textit{Imputation MAF+info} can be downloaded directly from UK Biobank resources $1965$ and $1967$. Second, we use the \textit{Imputation sample} file to remove individuals without genotype information, which yields sample size $n=407,057$.  Finally, we read in BGEN files with \texttt{snp\_readBGEN} function in \texttt{R} package \texttt{bigsnpr} using \textit{HapMap3} as the reference genomes. \texttt{snp\_readBGEN} converts the BGEN files into an \texttt{R} object comprising of two elements: \textit{genotype} and \textit{map}, where \textit{genotype} represents the imputed genotypes in a matrix format and \textit{map} contains the features of SNPs (chromosome, rsid, physical position, major and minor alleles and allele frequency). We only extract the genotype matrix from \textit{genotype} and \textit{rsid} from \textit{map} as our genotype data. We restrict our sample to subjects used for the principle components (PCs) computation, since those individuals are unrelated. From the extracted genotype data, we obtain the treatment variable: rs12916 (on chromosome 5), a functionally equivalent SNP of statins. 

Since rs12916 and T2D are associated with some other SNPs due to pleiotropy, we adjust for low-density lipoprotein (LDL) and T2D related SNPs in our study. To find disease-associated SNPs, we rely on the published GWAS studies from the GWAS catalogue. In our study, we define the disease-associated SNPs as SNPs associated with LDL or T2D with $p$-values less than $5\times 10^{-8}$. To determine the $p-$values for multiple correlated SNPs in the same locus, we use the linkage disequilibrium clumping procedure with $R^2 < 0.01$.  Our filtration criteria yield $385$ disease-associated SNPs.

\subsection*{Disease status definition} 
We identify T2D cases from three sources: doctor diagnosis, and self-reports. If one's self-reported T2D status is missing, we define the self-reported T2D as the following: self-reported diabetes =1 and self-reported gestational diabetes = 0 and self-reported type 1 diabetes = 0. To identify AD cases, we reply on self-reports , family-reports, and ICD codes. We use ICD-10 codes: G309, G301, F002, F000, G308, G300, F009, F001, and ICD-9 codes: 3310.

\subsection{Case study: T2D as the outcome}
\label{appendix:case-study-t2d}

Because statin usage may increase the risk of T2D (\citealp{swerdlow2015hmg}), as a secondary analysis, we further investigated the  effect  of  rs12916-T allele  on  T2D under the same considered subgroups to evaluate the potential heterogeneous side effects. We still considered the ``high AD genetic risk" subgroup and the ``low AD genetic risk" subgroup in this setting because existing studies suggest that insulin resistance links T2D and AD \citep{chatterjee2018alzheimer}. As some genetic variants are shared between T2D and AD \citep{gao2016shared}, people with high AD genetic risk may be more vulnerable to T2D risk and thus is more sensitive to the side effects of statins use. Therefore, we hypothesized that the effect of carrying rs12916-T allele on T2D risks could be heterogeneous in subgroups with different AD genetic risks and evaluated our hypothesis through subgroup analysis. We compared the performance of the proposed method (CV-iTMLE) with the double machine learning (DML) method and the widely used generalized linear models (GLM). We used the random forest as our first stage estimator as it provides the most robust results in our simulation studies.

Figure \ref{fig:real-data-ad-2} demonstrated that the treatment effect of carrying rs12916-T allele on T2D risk is heterogeneous across considered subgroups. Both the proposed method and the double machine learning (DML) method suggested that carrying rs12916-T allele increases T2D risk in females and in individuals under 65. For the significant subgroup, the confidence interval of the proposed method was much shorter than that of DML, indicating that the proposed method is more efficient. Furthermore, our results showed that the effect of inheriting rs12916-T allele on T2D risk is heterogeneous in subgroups with different AD genetic risks, which potentially implies that subjects with higher AD genetic risk can be more vulnerable to statin usage. Such findings could be partially explained by the similar pathological pathways shared between T2D and AD \citep{li2007common}. Some of our above findings are in-line with current believes in the existing literature.
For example, the results from one randomized clinical trial (JUPITER trial \citealp{mora2010statins}) indicate that statins may increase T2D risk more significantly in females than in males. The significant adverse effect of carrying rs12916-T allele (a proxy for statin usage) in individuals with high AD genetic risk is a rather novel finding. We conjecture that this is because insulin resistance links T2D and AD \citep{chatterjee2018alzheimer}, and some genetic variants are also shared between T2D and AD \citep{gao2016shared}. These findings warrant further clinical studies.

In this secondary analysis, our proposed method again showed shortened confidence intervals and hence improved power in detecting significant subgroups, while the GLM and the double machine learning method tend to lose power. We conjectured that the double machine learning method failed to detect the adverse effect of rs12916-T allele on the high AD genetic risk subpopulation because of the large variance caused by small estimated propensity scores. In contrast, the proposed method is rather robust to the small estimated propensity scores.

\begin{figure}[!p]
\centering
\includegraphics[width=\textwidth]{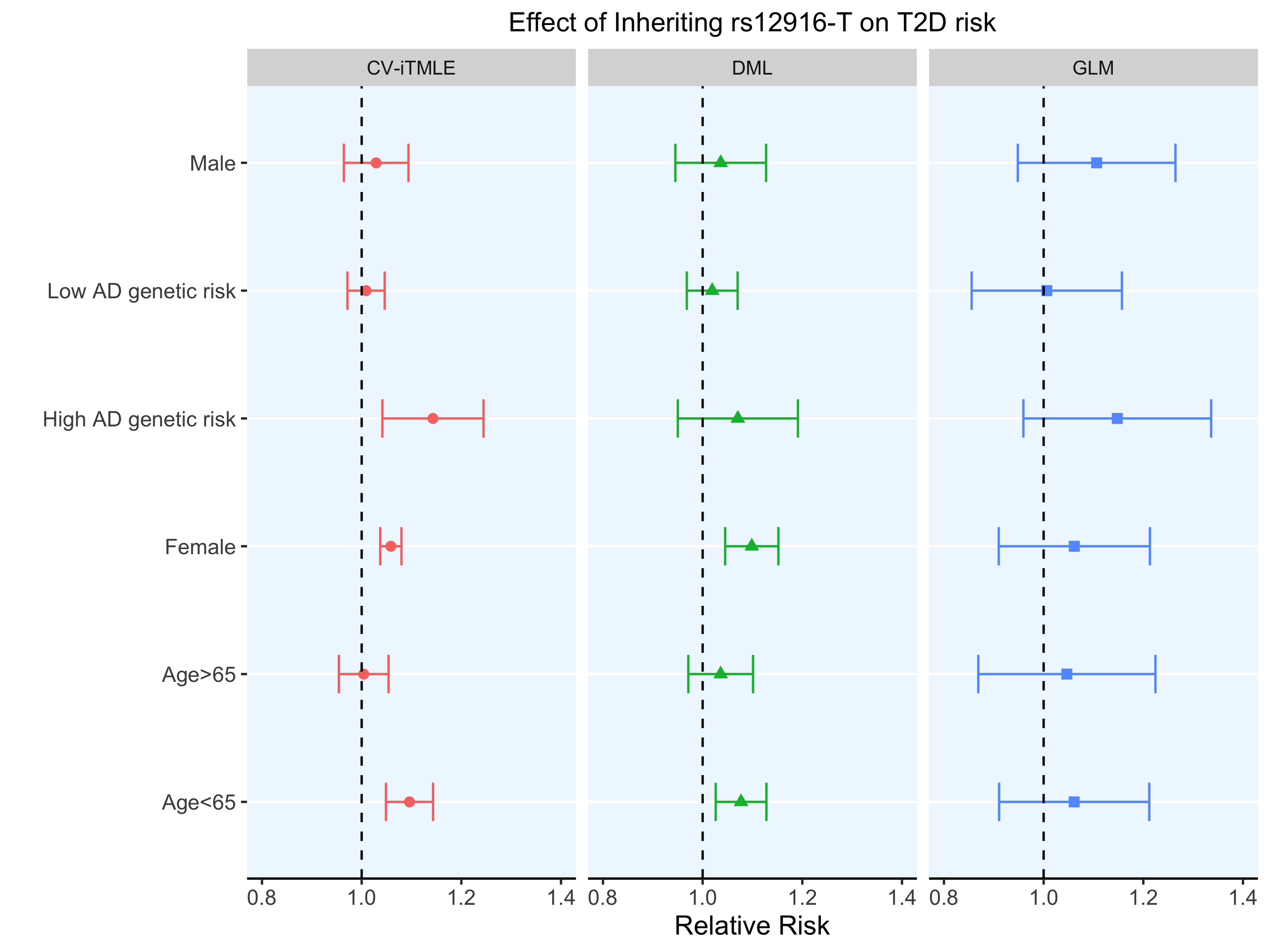}
\caption{The effect of inheriting rs12916-T allele (a proxy for statin usage) on the type 2 diabetes (T2D) risk in the UK Biobank white British population ($n=293,929$). ``DML" denotes the double machine learning method. ``GLM" denotes the generalized linear models. GLM is used for association test and does not imply causal relationships. ``CV-iTMLE" denotes the cross-validated iTMLE method.}
\label{fig:real-data-ad-2}
\end{figure}

\section{Efficient Influence Function and Delta Method}\label{supp:eif}
\subsection{Efficient influence function derivation for subgroup conditional risk}
Suppose our parameter of interest is $\alpha = E[Y|T=1, X\in \mathcal{A}_j]$. 
\begin{proof}
Given $\alpha(P_0) = \mathds{E}[ Y(1) | X\in \mathcal{A}_j ]$ and the $\varepsilon$-perturbed distribution is $P_{\varepsilon}^h$,
\begin{align*}
 \frac{\partial}{\partial \varepsilon} \alpha(P_{\varepsilon}^h) \big|_{\varepsilon=0} = & \lim_{\varepsilon\rightarrow 0} \frac{1}{\varepsilon}\Big\{  \int_{y\in\mathds{R}} t\cdot y  \cdot \mbox{d}
  \alpha(P_{\varepsilon}^h) - \alpha(P)\Big\},\\
  =&\lim_{\varepsilon\rightarrow 0} \frac{1}{\varepsilon} \Big\{ \int_{y\in\mathds{R}} \int_{x\in\mathds{R}} t\cdot y \cdot \mathds{1}(x\in \mathcal{A}_j)\cdot f(x,y,t; \alpha)(1 + \varepsilon s(x,y,t; \alpha))\mbox{d} x \mbox{d} y,
  \\
  &\quad\quad - \int_{y\in\mathds{R}}\int_{x\in\mathds{R}} t\cdot y \cdot \mathds{1}(x\in \mathcal{A}_j) \cdot f(x,y,t; \alpha) \mbox{d}x \mbox{d} y \Big\},\\
=& \int_{y\in\mathds{R}}\int_{x\in\mathds{R}} t\cdot y\cdot  \mathds{1}(x\in \mathcal{A}_j) \cdot f(x,y,t; \alpha) s(x,y,t; \alpha)\mbox{d} x \mbox{d} y,\\
=& \int_{y\in\mathds{R}}\int_{x\in\mathds{R}} t \cdot y \cdot \mathds{1}(x\in \mathcal{A}_j) \cdot \Big(s(y|x\cdot \mathds{1}(x\in \mathcal{A}_j),t ;\alpha)\\
&\cdot f(y|x\cdot \mathds{1}(x\in \mathcal{A}_j),t;\alpha) \cdot f(t|x \cdot \mathds{1}(x\in \mathcal{A}_j);\alpha) \cdot f(x\cdot \mathds{1}(x\in \mathcal{A});\alpha\Big)\mbox{d} x \mbox{d} y, \\
=& \int_{y\in\mathds{R}}\int_{x\in\mathds{R}} t \cdot y \cdot \mathds{1}(x\in \mathcal{A}_j) \cdot \Big(s(y|x\cdot \mathds{1}(x\in \mathcal{A}_j),t ;\alpha)\\
&\cdot \frac{f(y,x\cdot \mathds{1}(x\in \mathcal{A}_j),t;\alpha)} {f(t|x\cdot \mathds{1}(x\in \mathcal{A}_j);\alpha) f(x\cdot \mathds{1}(x\in \mathcal{A}_j);\alpha)}\Big)dx dy,\\
=& \int_{y\in\mathds{R}} \int_{x\in\mathds{R}} t\cdot y\cdot \mathds{1}(x\in \mathcal{A}_j)\cdot 
 \Big(s(y|x\cdot \mathds{1}(x\in \mathcal{A}_j),t;\alpha)\cdot \frac{f(y,x\cdot \mathds{1}(x\in \mathcal{A}_j),t;\alpha)}{ e(x;\alpha)\cdot  f( x\cdot\mathds{1}(x\in \mathcal{A}_j);\alpha)}dxdy,
\end{align*}
given $\int_{x\in\mathds{R}} f(x\cdot \mathds{1}(x\in \mathcal{A}_j;\alpha)dx = P(\mathcal{A}_j;\alpha)$,
\begin{align*}
=&\frac{\mathds{1}(x\in \mathcal{A}_j)}{P(\mathcal{A}_j;\alpha)} \int_{y\in\mathds{R}}\int_{x\in\mathds{R}} \frac{t\cdot y}{e(x;\alpha)}\Big(s(y|x\cdot \mathds{1}(x\in \mathcal{A}_j);\alpha)\cdot
f(y,x\cdot \mathds{1}(x\in \mathcal{A}_j;\alpha)dxdy,\\
=& E\Big[\frac{\mathds{1}(x\in \mathcal{A}_j)}{P(\mathcal{A}_j;\alpha)}\frac{T}{e(X;\alpha)}Y\cdot s(x,y;\alpha)\Big].
\end{align*}
Next, we follow the standard technique and obtain the efficient influence function as 
\begin{align*}
\varphi_{1,j}(Y, D, X) &= \frac{ \mathds{1}(X\in\mathcal{A}_j)  }{\mathds{P}(\mathcal{A}_j)} \Big(\frac{T}{e(X)}\big(Y - p_1(X)\big) + p_1(x)-\alpha_{1,j}\Big),\\
\varphi_{0,j}(Y, D, X) &= \frac{ \mathds{1}(X\in\mathcal{A}_j)  }{\mathds{P}(\mathcal{A}_j)} \Big(\frac{1-T}{1-e(X)}\big(Y - p_0(X)\big)+p_0(X)-\alpha_{0,j}\Big). 
\end{align*}
\end{proof}

\subsection{Delta method}
The efficient influence functions of ATE, the relative risk and the odds ratio can be derived by applying the delta method on the efficient influence functions of $(\bm{\alpha_1}, \bm{\alpha_0})$. 
\begin{align*}
    \bm{\varphi}_{\textbf{ATE}} &= (\bm{\varphi_1}, \bm{\varphi_0})\cdot \frac{\partial f_\text{ATE}(\bm{\alpha_1}, \bm{\alpha_0})}{\partial (\bm{\alpha_1}, \bm{\alpha_0})}\\
    &=
    \begin{pmatrix}
    \varphi_{1,1} &\varphi_{0,1}\\
    \vdots &\vdots\\
    \varphi_{1,J}  & \varphi_{0,J}
    \end{pmatrix}
    \begin{pmatrix}
    1\\
    -1
    \end{pmatrix}
    =
    \begin{pmatrix}
    \varphi_{1,1} - \varphi_{0,1}\\
    \vdots\\
    \varphi_{1,J} - \varphi_{0,J}
    \end{pmatrix},\\
    \bm{\varphi}_{\textbf{OR}} &= (\bm{\varphi_1}, \bm{\varphi_0})\cdot \frac{\partial f_\text{OR}(\bm{\alpha_1}, \bm{\alpha_0})}{\partial (\bm{\alpha_1}, \bm{\alpha_0})}\\
    &=
    \begin{pmatrix}
    \varphi_{1,1} & \varphi_{0,1}\\
    \vdots &\vdots\\
    \varphi_{1,J} & \varphi_{0,J}
    \end{pmatrix}
    \begin{pmatrix}
 \frac{1-\alpha_0}{\alpha_0(1-\alpha_1)^2} \\ -\frac{\alpha_1}{\alpha^2_0(1-\alpha_1)} 
 \end{pmatrix}=
    \begin{pmatrix}
    \frac{1-\alpha_0}{\alpha_0(1-\alpha_1)^2}\cdot\varphi_{1,1} - \frac{\alpha_1}{\alpha^2_0(1-\alpha_1)} \cdot \varphi_{0,1}\\
    \vdots\\
    \frac{1-\alpha_0}{\alpha_0(1-\alpha_1)^2}\cdot\varphi_{1,J} -  \frac{\alpha_1}{\alpha^2_0(1-\alpha_1)}\varphi_{0,J}
    \end{pmatrix},\\
    \bm{\varphi}_{\textbf{RR}} &= (\bm{\varphi_1}, \bm{\varphi_0})\cdot \frac{\partial f_\text{RR}(\bm{\alpha_1}, \bm{\alpha_0})}{\partial (\bm{\alpha_1}, \bm{\alpha_0})}\\
    &=
    \begin{pmatrix}
    \varphi_{1,1} & \varphi_{0,1}\\
    \vdots &\vdots\\
    \varphi_{1,J} & \varphi_{0,J}
    \end{pmatrix}
    \begin{pmatrix}
 \frac{1}{\alpha_0} \\ 
 -\frac{\alpha_1}{\alpha^2_0} 
 \end{pmatrix}=
    \begin{pmatrix}
    \frac{1}{\alpha_0}\cdot\varphi_{1,1} - \frac{\alpha_1}{\alpha^2_0} \cdot \varphi_{0,1}\\
    \vdots\\
    \frac{1}{\alpha_0}\cdot\varphi_{1,J} -  \frac{\alpha_1}{\alpha^2_0}\varphi_{0,J}
    \end{pmatrix}.
\end{align*}

\section{Additional Theoretical Investigations}\label{supp:another-proof-theorem-1}

\subsection{Proposition 1: sufficient conditions on Assumption 4.2}
In this subsection, we provide sufficient conditions on Assumption 4.2, which directly imposes conditions on the updated initial estimator $\hat{p}_t(\cdot)$. Our sufficient conditions are imposed on the initial estimators of the propensity score as well as ${p}_t(\cdot)$.

\begin{assumption}[Initial estimator restrictions]\label{supp:assumption:regularity}
The estimators $\hat{e}_t(X)$ and $\hat{p}^{\text{Init}}_t(X)$ obey the following for a sequence of data generating processes $\{P_n\}$.

\begin{enumerate}[label={(\alph*)}, ref={(\alph*)}]

    \item\label{assumption:regularity:a} $\frac{1}{n}
    \sum_{i=1}^n[(\hat{e}_t(X_i)-e_t(X_i))^2] = o_{P}(1)$ \text{and} $\frac{1}{n}
    \sum_{i=1}^n[(\hat{p}^{\text{Init}}_t(X_i)-p_t(X_i))^2] = o_{P}(1)$;
    \item \label{assumption:regularity:b}
    $
    [\sum_{i=1}^n(\hat{p}^{\text{Init}}_t(X_i)-p_t(X_i))^2/n]^{1/2}[
    \sum_{i=1}^n(\hat{e}_t(X_i)-e_t(X_i))^2/n]^{1/2} =  o_{P}(n^{-1/2})$;
     \item\label{assumption:regularity:c} $\frac{1}{n}
    \sum_{i=1}^n[(\hat{p}^{\text{Init}}_t(X_i)-p_t(X_i))(1-\frac{\mathds{1}(T_i=t)}{e_t(X_i)})] =  o_{P}(n^{-1/2})$.
\end{enumerate}
\end{assumption}

\begin{assumption}[Number of subgroups growth rate]\label{supp:assumption:subgroup} Suppose the number of subgroups under consideration satisfies that  $\text{log}J/n = o(1)$. 
\end{assumption}

\begin{prop}[Iterative update]\label{lemma:replace-pk}
Under Lemma \ref{lemma:epsilon}, Assumption \ref{supp:assumption:regularity} and Assumption \ref{supp:assumption:subgroup}, we conclude $
\hat{p}_1^{(K)}(X) =  \hat{p}_1^{(0)}(X) + K\cdot o_P(1)$,
where $K$ denotes the total number of iterations.
\end{prop}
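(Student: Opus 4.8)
The plan is to establish the bound by a telescoping decomposition over the $K$ fluctuation steps, reducing the whole statement to a uniform control of a single update. Writing the difference as a telescoping sum,
\[
\hat{p}_1^{(K)}(X) - \hat{p}_1^{(0)}(X) = \sum_{k=1}^{K}\big(\hat{p}_1^{(k)}(X) - \hat{p}_1^{(k-1)}(X)\big),
\]
it suffices to show that each single increment $\hat{p}_1^{(k)}(X) - \hat{p}_1^{(k-1)}(X)$ is $o_P(1)$ uniformly in $k \in \{1,\dots,K\}$. Summing the $K$ increments without seeking cancellation then yields exactly the claimed $K\cdot o_P(1)$ bound, so the entire burden of the proof is the per-step estimate.

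First I would bound a single increment using the Lipschitz continuity of the logistic link. From the update in Eq \eqref{eq:TMLE-variant} specialized to $t=1$, we have $\hat{p}_1^{(k)}(X_i) = \text{expit}\big(\text{logit}(\hat{p}_1^{(k-1)}(X_i)) + \hat{\gamma}^{(k)}\,\tilde{S}_1^{(k-1)}(X_i)\big)$. Since the derivative of $\text{expit}$ is bounded by $1/4$, a mean-value argument gives
\[
\big|\hat{p}_1^{(k)}(X_i) - \hat{p}_1^{(k-1)}(X_i)\big| \le \tfrac{1}{4}\,\big|\hat{\gamma}^{(k)}\big|\,\big|\tilde{S}_1^{(k-1)}(X_i)\big|,
\]
so it remains to control the fluctuation coefficient $\hat{\gamma}^{(k)}$ and the clever covariate $\tilde{S}_1^{(k-1)}$ separately.

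Second, I would control the clever covariate defined in Eq \eqref{eq:clever-covariate}. Because it is self-normalized, Cauchy--Schwarz applied to its numerator yields $\big|\tilde{S}_1^{(k-1)}(X_i)\big| \le \big(\sum_{j=1}^d a_{ij}^2\big)^{1/2}$, where $a_{ij} = \frac{\mathds{1}(X_i\in\mathcal{A}_j)}{\hat{P}(\mathcal{A}_j)}\frac{\mathds{1}(T_i=t)}{\hat{e}_t(X_i)}$; the factor $\|\hat{\bm{\phi}}\|_2$ cancels. Under the positivity condition (Assumption \ref{assumption:overlap}), $\hat{P}(\mathcal{A}_j)$ and $\hat{e}_t(X_i)$ are bounded below by $c$ with probability tending to one, hence $a_{ij}^2 \le c^{-4}\mathds{1}(X_i\in\mathcal{A}_j)$ and $\big|\tilde{S}_1^{(k-1)}(X_i)\big| \le c^{-2}\big(\sum_{j=1}^d \mathds{1}(X_i\in\mathcal{A}_j)\big)^{1/2}$. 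For non-overlapping subgroups the sum is at most one and the covariate is uniformly bounded; for overlapping subgroups it is at most $c^{-2}\sqrt{d}$, and I would absorb the $\sqrt{d}$ factor using the growth rate in Assumption \ref{supp:assumption:subgroup}. Combining this with the rate on the fluctuation parameter supplied by Lemma \ref{lemma:epsilon}, which makes $\hat{\gamma}^{(k)}$ negligible at each step under Assumption \ref{supp:assumption:regularity}, shows each increment is $o_P(1)$.

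The main obstacle will be the joint control of $\hat{\gamma}^{(k)}$ and the number of subgroups $d$. When subgroups overlap, the self-normalized clever covariate can be of order $\sqrt{d}$, so a crude bound $\hat{\gamma}^{(k)} = o_P(1)$ does not by itself suffice: one needs $\hat{\gamma}^{(k)}\sqrt{d} = o_P(1)$. This is precisely where Assumption \ref{supp:assumption:subgroup} enters, since the rate for $\hat{\gamma}^{(k)}$ from Lemma \ref{lemma:epsilon} is obtained through a maximal inequality over the $d$ subgroup scores, contributing only a $\sqrt{\log d}$ penalty, and $\log d / n = o(1)$ keeps the product negligible. The delicate point is that this must hold uniformly over all $k=1,\dots,K$ rather than for a single fixed step, so I would verify that the positivity and negligibility bounds used above can be taken uniform across iterations before assembling the telescoping sum.
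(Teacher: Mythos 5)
Your overall strategy coincides with the paper's: both arguments reduce the claim to a single-iteration increment and then multiply by $K$, and both make that increment small by combining the rate on the fluctuation coefficient from Lemma \ref{lemma:epsilon} with a bound on the clever covariate. Where you differ is in execution, and your version is cleaner in two places. First, you telescope on the probability scale and invoke the $1/4$ Lipschitz bound of $\mathrm{expit}$, whereas the paper telescopes on the logit scale and then runs an explicit third-order Taylor expansion of $\mathrm{expit}$, producing a long list of remainder terms ($A_{1,n},A_{2,n},B_{1,n},B_{2,n},C_{1,n},C_{2,n},V_{1,n},\dots$) each bounded via H\"older's inequality; your mean-value bound makes all of that unnecessary for an $o_P(1)$ conclusion. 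Second, your Cauchy--Schwarz bound $|\tilde{S}_1^{(k-1)}(X_i)|\le\big(\sum_{j} a_{ij}^2\big)^{1/2}$ on the self-normalized covariate is a genuine improvement in rigor: the paper simply declares the normalizing factor $C_1$ to be ``a constant,'' which is precisely the quantity that needs control.

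One step of your argument, however, does not go through as written. You assert that the rate for $\hat\gamma^{(k)}$ in Lemma \ref{lemma:epsilon} is obtained ``through a maximal inequality over the $d$ subgroup scores, contributing only a $\sqrt{\log d}$ penalty,'' and that this is where Assumption \ref{supp:assumption:subgroup} enters. That is not what the lemma provides. Its leading term is $H_n^{-1}\cdot C\cdot\sum_{m}\big(\tfrac{1}{n}\sum_i\phi_m(Y_i,T_i,X_i)\big)^2$ with $C$ the reciprocal of the $\ell_2$ norm of the vector of $d$ empirical score averages; since each average is $O_P(n^{-1/2})$, this gives $\hat\gamma^{(k)}=O_P(\sqrt{d/n})$, not $O_P(\sqrt{\log d/n})$. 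Multiplying by your worst-case covariate bound $c^{-2}\sqrt{d}$ in the overlapping case yields an increment of order $d/\sqrt{n}$, which the growth condition $\log d/n=o(1)$ does not control; you would need something like $d=o(\sqrt{n})$ (for fixed $d$, or for disjoint subgroups where positivity forces $d\le 1/c$ and the covariate is uniformly bounded, your argument closes). To be fair, the paper's own proof never tracks the $d$-dependence of either factor, so your proposal is no worse than the original on this point, but the appeal to a maximal inequality is a claim Lemma \ref{lemma:epsilon} does not support and would need to be proved separately or replaced by an explicit restriction on $d$. You are also right that uniformity over $k=1,\dots,K$ must be checked, since Lemma \ref{lemma:epsilon} is stated for the first update only; the paper silently replaces every $\hat\varepsilon^{(k)}\tilde S^{(k-1)}$ by $\hat\varepsilon^{(1)}\tilde S^{(0)}$ without justification, so flagging this is appropriate.
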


\begin{coro}\label{lemma:pt-consistency}
Following Proposition \ref{lemma:replace-pk}, 
$\sqrt{n}(\hat{p}_t - p_t) = o_P(1)$. (Proof in Section \ref{appendix:subsec:consistency}). 
\end{coro}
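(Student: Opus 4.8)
The plan is to read the corollary as an immediate consequence of Proposition~\ref{lemma:replace-pk}, combined with the initial-estimator rate conditions, via a triangle-inequality split that separates the cumulative effect of the targeting iterations from the error of the plug-in initial estimator. Writing the final estimate as $\hat{p}_t = \hat{p}_t^{(K)}$ and inserting the initial estimate $\hat{p}_t^{(0)} = \hat{p}_t^{\mathtt{Init}}$, I would first decompose
\[
\hat{p}_t - p_t = \big(\hat{p}_t^{(K)} - \hat{p}_t^{(0)}\big) + \big(\hat{p}_t^{(0)} - p_t\big),
\]
so that the two error sources can be treated separately after scaling by $\sqrt{n}$. This decomposition holds verbatim for both $t\in\{0,1\}$, so it suffices to argue for a generic $t$.

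For the first bracket, the iterative increment, I would invoke Proposition~\ref{lemma:replace-pk} directly, which yields $\hat{p}_t^{(K)} - \hat{p}_t^{(0)} = K\cdot o_P(1)$. The content of that proposition is that each targeting step moves the estimate by a negligible amount: the step size $\hat{\gamma}^{(k)}$ is controlled by Lemma~\ref{lemma:epsilon}, and the self-normalized clever covariate $\tilde{S}_t^{(k-1)}$ is bounded because the propensity scores enter only after normalization. The point to verify is that each per-step increment is in fact of order $o_P(n^{-1/2})$, so that after multiplying by $\sqrt{n}$ and summing the $K$ contributions the total is still $o_P(1)$; here Assumption~\ref{supp:assumption:subgroup} ($\log J/n = o(1)$) is what keeps both the iteration count and the number of subgroups entering the normalizing denominator of $\tilde{S}_t$ from growing fast enough to overwhelm the $\sqrt{n}$ scaling, through a maximal/union bound over the $J$ normalized scores.

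For the second bracket, the error of the initial estimator, I would appeal to Assumption~\ref{supp:assumption:regularity}: part~(a) delivers $L_2$ consistency, and parts~(b)--(c) supply the product-rate and the mean-zero averaging that render the relevant $\sqrt{n}$-scaled empirical quantity attached to $\hat{p}_t^{(0)} - p_t$ asymptotically negligible. Together with the Donsker condition of Assumption~\ref{assumption:regularity-score}, this is precisely the form in which $\sqrt{n}(\hat{p}_t - p_t) = o_P(1)$ is used downstream, namely to annihilate the remainder $R_{n3} = (\mathbb{P}_n - P)(\hat{p}_t - p_t)$ in the expansion underlying Theorem~1 and hence to verify Assumption~\ref{assumption:regularity-nuisance} for the final estimator.

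The main obstacle I expect is controlling the accumulation over iterations, i.e.\ upgrading $K\cdot o_P(1)$ to something genuinely $o_P(n^{-1/2})$ rather than merely $o_P(1)$. This requires a bound on the increments $\hat{\gamma}^{(k)}\tilde{S}_t^{(k-1)}$ that is uniform in $k$ and coupled to the subgroup count $J$, since the normalizing denominator of $\tilde{S}_t$ depends on all $J$ subgroup scores simultaneously; the coupling of $K$ and $J$ through Assumption~\ref{supp:assumption:subgroup} is the delicate ingredient. Once that uniform control is in place, the triangle inequality closes the argument.
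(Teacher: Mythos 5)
Your decomposition $\hat{p}_t - p_t = \big(\hat{p}_t^{(K)} - \hat{p}_t^{(0)}\big) + \big(\hat{p}_t^{(0)} - p_t\big)$ cannot deliver the $\sqrt{n}$ scale, and the obstacle you flag at the end is not a technical loose end but a structural dead end. Neither bracket is $o_P(n^{-1/2})$: Assumption~\ref{supp:assumption:regularity}\ref{assumption:regularity:a} only gives $L_2$-consistency of the initial estimator at an unspecified (generally slower than $\sqrt{n}$) nonparametric rate, and Proposition~\ref{lemma:replace-pk} only controls the accumulated iteration increments at $K\cdot o_P(1)$. Multiplying either piece by $\sqrt{n}$ therefore gives nothing, and no union bound over the $J$ subgroups or appeal to $\log J/n = o(1)$ can repair this, because the function-level difference $\hat p_t(\cdot)-p_t(\cdot)$ genuinely does not shrink at rate $n^{-1/2}$. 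The $\sqrt{n}$-negligibility asserted in the corollary is not a statement about pointwise or $L_2$ closeness of the fitted function; it comes from averaging against the efficient influence function, where the first-order nuisance errors cancel by orthogonality and only products of nuisance errors survive.

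That is the route the paper actually takes. Its proof starts from the fact that the final $\hat p_t$ solves the sample efficient-influence-function equation, so the targeted subgroup-level estimate has the exact AIPW representation $\hat p_1(\mathcal{A}) = \frac{1}{n}\sum_i \frac{\mathds{1}(X_i\in\mathcal{A})}{\hat{P}(\mathcal{A})}\big\{(Y_i-\hat p_1(X_i))\frac{T_i}{\hat e(X_i)} + \hat p_1(X_i)\big\}$. Expanding this around the true nuisances produces a leading influence-function average plus remainders $R_{1,n}$, $R_{2,n}$, $R_{3,n}$, which are differences and products of the errors $\hat p_1 - p_1$, $\hat e - e$, and $\hat{P}(\mathcal{A}) - P(\mathcal{A})$; these are bounded by H\"older's inequality together with the product-rate condition of Assumption~\ref{supp:assumption:regularity}\ref{assumption:regularity:b} and positivity. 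If you want to salvage your plan, you must replace the raw triangle inequality by this estimating-equation expansion: the path through $\hat p_t^{(0)}$ and the iteration increments is only needed to verify that the \emph{final} nuisance fit inherits the $L_2$ rates of the initial one (which is exactly what Proposition~\ref{lemma:replace-pk} supplies), not to bound $\hat p_t - p_t$ directly at the $\sqrt{n}$ scale.
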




\subsection{Proof of Proposition
\ref{lemma:replace-pk}}\label{appendix:replace-pk}
\begin{proof}
We can rewrite the iterative procedure as 
			\begin{align*}
	   \text{logit} \hat{p}_1^{(1)}( X_i) &=  \text{logit}\big(\hat{p}^{(0)}_1( X_i) \big)
	          + \hat{\varepsilon}^{(1)} \cdot \tilde{S}^{(0)}_1(X_i) , \\
	            \text{logit} \hat{p}_1^{(2)}( X_i) &=  \text{logit}\big(\hat{p}^{(1)}_1( X_i) \big)
	          + \hat{\varepsilon}^{(2)} \cdot \tilde{S}^{(1)}_1(X_i) , \\
	          &= \text{logit}\big(\hat{p}^{(0)}_1( X_i) \big)
	          + \hat{\varepsilon}^{(1)} \cdot \tilde{S}^{(0)}_1(X_i) 
	          + \hat{\varepsilon}^{(2)} \cdot \tilde{S}^{(1)}_1(X_i) , \\
	          \dots\\
	             \text{logit} \hat{p}_1^{(K)}( X_i) 
	          &= \text{logit}\big(\hat{p}^{(0)}_1( X_i) \big)
	          + K\cdot(
	          \hat{\varepsilon}^{(1)} \cdot \tilde{S}^{(0)}_1(X_i)),
			\end{align*}
where $\hat{p}_1^{(0)} = \hat{p}_1^{\text{Init}}$. First, we want to show $\hat{p}^{(1)}_1(X) - \hat{p}^{(0)}_1(X) = \hat{\varepsilon}\tilde{S}_1(X) + R_n =  o_P(1)$.
Applying Taylor expansion we have
\begin{align*}
    \hat{p}^{(1)}_1(X) &= \text{expit}\Big(\text{logit}\big(\hat{p}_1^{(0)}(X)\big) +  \hat{\varepsilon}\cdot\tilde{S}_1(X)\Big),\\
    &= \hat{p}_1^{(0)}(X) + \hat{\varepsilon}\tilde{S}_1(X)\cdot \hat{p}_1^{(0)}(X)(1-\hat{p}_1^{(0)}(X))+R_n,\\
    &= \hat{p}_1^{(0)}(X) + \hat{\varepsilon}\Big( \frac{ \frac{\mathds{1}(X\in\mathcal{A}_j)}{\hat{\mathds{P}}(\mathcal{A}_j)}\cdot \frac{T}{\hat{e}(X)}\cdot \frac{1}{n}\sum_{i=1}^n\Big\{\hat{\phi}_j(Y_i, T_i,X_i)\Big\}}{\sqrt{ \sum_{m=1}^J \left(  \frac{1}{n}\sum_{i=1}^n \hat{\phi}_m(Y_i,T_i,X_i) \right)^2}}\Big)\cdot  \hat{p}^{(0)}_1(X)(1-\hat{p}^{(0)}_1(X)) + R_n.
\end{align*}

Assume $C_1= \frac{1}{\sqrt{ \frac{1}{J}\sum_{m=1}^J \left(  \frac{1}{n}\sum_{i=1}^n \hat{\phi}_m(Y,T,X) \right)^2}}$ is a constant.
\begin{align*}
  \frac{1}{n}\sum_{i=1}^n \hat{p}_1^{(1)}(X_i) &=\frac{1}{n}\sum_{i=1}^n \Big\{\hat{p}_1^{(0)}(X_i) + \hat{\varepsilon}C_1\Big( \frac{\mathds{1}(X_i\in\mathcal{A}_j)}{\hat{\mathds{P}}(\mathcal{A}_j)} \frac{T_i}{\hat{e}(X_i)}\mathds{P}_n\hat{\phi}^{(0)}_j(Y_i, T_i,X_i)\Big)\cdot  \hat{p}^{(0)}_1(X)(1-\hat{p}^{(0)}_1(X_i))\Big\} + R_n,\\
  &= \frac{1}{n}\sum_{i=1}^n \Big\{\hat{p}_1^{(0)}(X_i) + \hat{\varepsilon}C_1\Big( \Big(\frac{\mathds{1}(X_i\in\mathcal{A}_j)}{\hat{\mathds{P}}(\mathcal{A}_j)} \frac{T_i}{\hat{e}(X_i)}\Big)(\mathds{P}_n\hat{\phi}^{(0)}_j(Y_i, T_i,X_i))\Big)\cdot  \hat{p}^{(0)}_1(X_i)(1-\hat{p}^{(0)}_1(X_i)),\\
  &+\hat{\varepsilon}^2C_1^2\Big[\Big(\frac{\mathds{1}(X_i\in\mathcal{A}_j)}{\hat{\mathds{P}}(\mathcal{A}_j)}\cdot \frac{T_i}{\hat{e}(X_i)}\Big)^2(\mathds{P}_n\hat{\phi}^{(0)}_j(Y_i, T_i,X_i))^2\Big]^2\\
  &\quad\quad\cdot  \hat{p}^{(0)}_1(X_i)(1-\hat{p}^{(0)}_1(X_i))(1-2\hat{p}^{(0)}_1(X_i)) \Big\}+ R_n.
\end{align*}
For simplicity, assume we know the true $P(\mathcal{A}_j)$. Under Mean-Value Theorem, assume there exists $\tilde{p}(X)$, such that $\tilde{p}(X) \in [\hat{p}^{(0)}_1(X), \hat{p}^{(1)}_1(X)]$. we know $R_n = \frac{1}{n}\sum_{i=1}^n\Big(\text{expit}\big(\text{logit}(\tilde{p}_1(X_i)\big) \Big)^{'''}\Big(\hat{\varepsilon}\tilde{S}_1(X_i)\Big)^3.$ We want to show $R_n = o_P(1)$, and the first-order and second-order terms in the expansion are also $o_P(1)$. First, we bound the remainder term $R_n$, assuming $\tilde{p}_1(X)$ follows the Assumptions listed in Assumption \ref{supp:assumption:regularity}.
\begin{align*}
    R_n &= \frac{1}{n}\sum_{i=1}^n \Big\{\Big(\text{expit}\big(\text{logit}(\tilde{p}_1(X_i)\big) \Big)^{'''}\Big(\hat{\varepsilon}\tilde{S}_1(X_i)\Big)^3\Big\},\\
    &=\frac{1}{n}\sum_{i=1}^n \Big\{ \Big[\hat{\varepsilon}C_1 \sum_{i=1}^n\Big( \frac{\mathds{1}(X\in\mathcal{A}_j)}{\mathds{P}(\mathcal{A}_j)}\Big)^2\cdot\Big(\frac{T_i}{\hat{e}(X_i)}\Big)^2 \big(Y_i -\tilde{p}(X_i)\big)\Big]^3  \tilde{p}_1(X_i)\big(1-\tilde{p}_1(X_i)\big)\\
    &\cdot \Big[ \Big(1-6\tilde{p}_1(X_i)\big(1-\tilde{p}_1(X_i)\big)\Big)\Big]\Big\},\\
    &= \frac{1}{n}\sum_{i=1}^n \Big\{ \hat{\varepsilon}^3 C_1^3 \Big[ \sum_{i=1}^n\Big( \frac{\mathds{1}(X\in\mathcal{A}_j)}{\mathds{P}(\mathcal{A}_j)}\Big)^2\Big]^3\cdot\Big[\Big(\frac{T_i}{e(X_i)}\Big)^2 p_1(X_i)(1-p_1(X_i)) + A_{1,n} + A_{2,n} 
    \Big]\\
    &\cdot \Big[(1-6p_1(X_i))(1-p_1(X_i))\frac{T_i}{e(X_i)} + B_{1,n} + B_{2,n}\Big]\cdot \Big[\frac{T_i}{e(X_i)}(Y_i - p_1(X_i))+ C_{1,n} + C_{2,n}\Big]^3\Big\},
\end{align*}
where,
\begin{align*}
A_{1,n}&= \frac{1}{n}
    \sum_{i=1}^n \Big(\frac{T_i}{\hat{e}(X_i)}\Big)^2 \tilde{p}_1(X_i)(1-\tilde{p}_1(X_i)) - \Big(\frac{T_i}{\hat{e}(X_i)}\Big)^2 p_1(X_i)(1-p_1(X_i)),\\
A_{2,n}&= \frac{1}{n}
    \sum_{i=1}^n \Big(\frac{T_i}{\hat{e}(X_i)}\Big)^2 p_1(X_i)(1-p_1(X_i)) - \Big(\frac{T_i}{e(X_i)}\Big)^2 p_1(X_i)(1-p_1(X_i)),\\
    B_{1,n}&= \frac{1}{n}
    \sum_{i=1}^n \Big(\frac{T_i}{\hat{e}(X_i)}\Big) (1-6\tilde{p}_1(X_i))(1-\tilde{p}_1(X_i)) - \Big(\frac{T_i}{\hat{e}(X_i)}\Big) (1-6p_1(X_i))(1-p_1(X_i)),\\
    B_{2,n}&= \frac{1}{n}
    \sum_{i=1}^n \Big(\frac{T_i}{\hat{e}(X_i)}\Big) (1-6p_1(X_i))\big(1-p_1(X_i)\big) - \Big(\frac{T_i}{e(X_i)}\Big) (1-6p_1(X_i))(1-p_1(X_i)),\\
    C_{1,n}&=\frac{1}{n}\sum_{i=1}^n \frac{T_i}{\hat{e}(X_i)}(Y_i - \tilde{p}_1(X_i))-\frac{T_i}{\hat{e}(X_i)}(Y_i - p_1(X_i)),\\
     C_{2,n}&=\frac{1}{n}\sum_{i=1}^n\frac{T_i}{\hat{e}(X_i)}(Y_i - p_1(X_i))-\frac{T_i}{e(X_i)}(Y_i - p_1(X_i)).
\end{align*}
We want to show $A_{1,n}$, $A_{2,n}$, $B_{1,n}$, $B_{2,n}$ $C_{1,n}$ and $C_{2,n}$ are all $o_P(n^{-1/2})$. We will use this equality frequently in the following proof:
\begin{align*}
\tag{1}\label{eqn:equality}
    \frac{T}{\hat{e}(X)} = \frac{T}{e(X)} + \frac{T\cdot e(X)-T\cdot \hat{e}(X)}{e(X)\hat{e}(X)}.
\end{align*}
\begin{align*}
A_{1,n}&= \frac{1}{n}
    \sum_{i=1}^n \Big\{ \Big(\frac{T_i}{\hat{e}(X_i)}\Big)^2 \tilde{p}_1(X_i)\big(1-\tilde{p}_1(X_i)\big) - \Big(\frac{T_i}{\hat{e}(X_i)}\Big)^2 p_1(X_i)(1-p_1(X_i))\Big\},\\
   &= \frac{1}{n}
\sum_{i=1}^n\Big\{\Big[\Big(\tilde{p}_1(X_i)-p_1(X_i)\Big) + \Big(p^2_1(X_i)-(\tilde{p}_1(X_i))^2 \Big)\Big] \Big(\frac{T_ie(X_i)-T_i\hat{e}(X_i)}{e(X_i)\hat{e}(X_i)} \Big)^2\Big\},\\
&= \frac{1}{n}
\sum_{i=1}^n\Big\{\Big(\tilde{p}_1(X_i)-p_1(X_i)\Big)\Big(\frac{T_ie(X_i)-T_i\hat{e}(X_i)}{e(X_i)\hat{e}(X_i)} \Big)^2\Big\}\\
&\quad\quad+ \frac{1}{n}
\sum_{i=1}^n \Big\{\Big(p^2_1(X_i)-(\tilde{p}_1(X_i))^2 \Big) \Big(\frac{T_ie(X_i)-T_i\hat{e}(X_i)}{e(X_i)\hat{e}(X_i)} \Big)^2\Big\},\\
&\leq  \Big(\max_{i\leq n}\Big(\frac{T_i}{e(X_i)\hat{e}(X_i)}\Big)^2 \Big(e(X_i)+\hat{e}(X_i)\Big)\Big) \sqrt{\frac{1}{n}\sum_{i=1}^n\Big(\tilde{p}_1(X_i)-p_1(X_i)\Big)^2\frac{1}{n}\sum_{i=1}^n\Big(e(X_i)-\hat{e}(X_i)\Big)^2}\\
&+ \Big(\max_{i\leq n}\Big(\frac{T_i}{e(X_i)\hat{e}(X_i)}\Big)^2 \Big(e(X_i)+\hat{e}(X_i)\Big)\Big) \sqrt{\frac{1}{n}\sum_{i=1}^n\Big(\big(\tilde{p}_1(X_i)\big)^2-p_1^2(X_i)\Big)^2\frac{1}{n}\sum_{i=1}^n\Big(e(X_i)-\hat{e}(X_i)\Big)^2},\\
&= o_P(n^{-1/2}),\\
A_{2,n}&=\frac{1}{n}
\sum_{i=1}^n \Big\{ (1-p_1(X_i))p_1(X_i)\Big[\frac{e(X_i)^2-\hat{e}(X_i)^2}{e(X_i)^2\hat{e}(X_i)^2}\Big]\Big\},\\\
&\leq \Big(\max_{i\leq n}\Big(\frac{T_i}{e(X_i)\hat{e}(X_i)}\Big)^2 \Big(e(X_i)+\hat{e}(X_i)\Big)(1-p_1(X_i)p_1(X_i))\Big)\sqrt{\frac{1}{n}\sum_{i=1}^n \Big(e(X_i)-\hat{e}(X_i)\Big)^2},\\
&= o_P(n^{-1/2}),
\end{align*}
We follow the similar proof for $B_{1,n}$ and $B_{2,n}$:
\begin{align*}
    B_{1,n} 
&\leq  \Big(\max_{i\leq n}\Big(\frac{T_i}{e(X_i)\hat{e}(X_i)}\Big)\Big) \sqrt{\frac{1}{n}\sum_{i=1}^n\Big(\tilde{p}_1(X_i)-p_1(X_i)\Big)^2\frac{1}{n}\sum_{i=1}^n\Big(e(X_i)-\hat{e}(X_i)\Big)^2}\\
&+ \Big(\max_{i\leq n}\Big(\frac{T_i}{e(X_i)\hat{e}(X_i)}\Big)\Big) \sqrt{\frac{1}{n}\sum_{i=1}^n\Big(\big(\tilde{p}_1(X_i)\big)^2-p_1^2(X_i)\Big)^2\frac{1}{n}\sum_{i=1}^n\Big(e(X_i)-\hat{e}(X_i)\Big)^2},\\
&= o_P(n^{-1/2}),\\
B_{2,n}
&\leq \Big(\max_{i\leq n}\Big(\frac{T_i}{e(X_i)\hat{e}(X_i)}\Big) (1-p_1(X_i)p_1(X_i))\Big)\sqrt{\frac{1}{n}\sum_{i=1}^n \Big(e(X_i)-\hat{e}(X_i)\Big)^2},\\
&= o_P(n^{-1/2}),\\
C_{1,n} &= \frac{1}{n}\sum_{i=1}^n \Big(\frac{T_i}{e(X_i)} + \frac{T_i(e(X_i)-\hat{e}(X_i))}{e(X_i)\hat{e}(X_i)}
\Big)(p_1(X_i) - \tilde{p}_1(X_i)),\\
&\leq o_P(n^{-1/2}) +   \Big(\max_{i\leq n}\Big(\frac{T_i}{e(X_i)\hat{e}(X_i)}\Big) \sqrt{\frac{1}{n}\sum_{i=1}^n\Big(\tilde{p}_1(X_i)-p_1(X_i)\Big)^2\frac{1}{n}\sum_{i=1}^n\Big(e(X_i)-\hat{e}(X_i)\Big)^2}\Big),\\
&= o_P(n^{-1/2}),\\
C_{2,n} &= \frac{1}{n}\sum_{i=1}^n\Big\{\frac{T_i(e(X_i)-\hat{e}(X_i))}{e(X_i)\hat{e}(X_i)}(Y_i - p_1(X_i)\Big\},\\
&\leq  \Big(\max_{i\leq n}\Big(\frac{T_i(Y_i-p_1(X_i)}{e(X_i)\hat{e}(X_i)}\Big) \sqrt{\frac{1}{n}\sum_{i=1}^n\Big(e(X_i)-\hat{e}(X_i)\Big)^2}\Big),\\
&= o_P(n^{-1/2}).
\end{align*}
Therefore, $R_n = o_P(1)$ by H$\ddot{\text{o}}$lder's inequality, Assumption \ref{supp:assumption:regularity} and Assumption 4.1 (in the main manuscript). Next, we work on the first-order and the second-order term. Rearrange the first-order term we have:
\begin{align*}
 &\quad \hat{\varepsilon}C_1\sum_{j=1}^J \Big(\frac{\mathds{1}(X_i\in\mathcal{A}_j)}{\mathds{P}(\mathcal{A}_j)}\Big)^2\cdot \frac{T_i}{\hat{e}(X_i)}(Y_i - \hat{p}^{(0)}(X_i))\cdot  \frac{T_i}{\hat{e}(X_i)}\hat{p}^{(0)}(X_i)(1-\hat{p}^{(0)}(X_i)),\\
  &= \hat{\varepsilon}C_1\sum_{j=1}^J \Big(\frac{\mathds{1}(X_i\in\mathcal{A}_j)}{\mathds{P}(\mathcal{A}_j)}\Big)^2\Big[ \frac{T_i}{e(X_i)}(Y_i - p_1(X_i)) + A_{1,n} + A_{2,n}\Big]\cdot \Big[ \frac{T_i}{e(X_i)}p_1(X_i)(1-p_1(X_i)) + B_{1,n} + B_{2,n}\Big],
\end{align*}
where,
\begin{align*}
  A_{1,n} &= \frac{T_i}{\hat{e}(X_i)}(Y_i - \hat{p}_1^{(0)}(X_i))  - \frac{T_i}{\hat{e}(X_i)}(Y_i - p_1(X_i)),\\
  A_{2,n} &= \frac{T_i}{\hat{e}(X_i)}(Y_i - p_1(X_i)) - \frac{T_i}{e(X_i)}(Y_i - p_1(X_i)),\\
  B_{1,n} &= \frac{T_i}{\hat{e}(X_i)}\hat{p}_1^{(0)}(X_i)(1-\hat{p}_1^{(0)}(X_i)) -  \frac{T_i}{\hat{e}(X_i)}p_1(X_i)(1-p_1(X_i)),\\
  B_{2,n}&=   \frac{T_i}{\hat{e}(X_i)}p_1(X_i)(1-p_1(X_i))-\frac{T_i}{e(X_i)}p_1(X_i)(1-p_1(X_i)).
\end{align*}
Following the similar proof as before, $A_{1,n}$,$A_{2,n}$,$B_{1,n}$,$B_{2,n}$ are $o_P(1)$, so the first-order term is $o_P(1)$. Now we want to bound the second-order term.
\begin{align*}
   &\quad\hat{\varepsilon}^2C_1^2\Big(\sum_{j=1}^J \frac{\mathds{1}(X_i\in\mathcal{A}_j)}{\mathds{P}(\mathcal{A}_j)}\Big)^2\Big(\frac{T_i}{\hat{e}(X_i)}\Big)^2\Big(Y_i - \hat{p}_1^{(0)}(X_i)\Big)^2\cdot \Big(\frac{T_i}{\hat{e}(X_i)}\Big)^2 \hat{p}^{(0)}_1(X_i)(1-\hat{p}^{(0)}_1(X_i))(1-2\hat{p}_1^{(0)}(X_i)),\\
   &= \hat{\varepsilon}^2C_1^2\Big(\sum_{j=1}^J \frac{\mathds{1}(X_i\in\mathcal{A}_j)}{\mathds{P}(\mathcal{A}_j)}\Big)^2\Big[\frac{T_i}{e(X_i)}\big(Y_i - p_1(X_i)\big)+A_{1,n} + A_{2,n}\Big]^2\\
   &\cdot \Big[\frac{T_i}{e(X_i)} p_1(X_i)(1-p_1(X_i)) + V_{1,n} + V_{2,n}\Big]\cdot\Big[\frac{T_i}{\hat{e}(X_i)}(1-2p_1(X_i)) + V_{3,n} + V_{4,n}\Big],
\end{align*}
where $A_{1,n}$ and $A_{2,n}$ have shown to be $o_P(n^{-1/2})$ and
\begin{align*}
    V_{1,n} &= \frac{1}{n}\sum_{i=1}^n \Big\{ \frac{T_i}{\hat{e}(X_i)} \cdot \hat{p}^{(0)}_1(X_i)(1-\hat{p}^{(0)}_1(X_i))- \frac{T_i}{\hat{e}(X_i)} \cdot p_1(X_i)(1-p_1(X_i))\Big\},\\
    V_{2,n} &= \frac{1}{n}\sum_{i=1}^n \Big\{ \frac{T_i}{\hat{e}(X_i)} \cdot p_1(X_i)(1-p_1(X_i))- \frac{T_i}{e(X_i)} \cdot p_1(X_i)(1-p_1(X_i))\Big\},\\
    V_{3,n}&= \frac{1}{n}\sum_{i=1}^n \Big\{2\frac{T_i}{\hat{e}(X_i)}(p_1(X_i)-\hat{p}^{(0)}_1(X_i))\Big\},\\
   V_{4,n} &=\frac{1}{n}\sum_{i=1}^n  \Big\{\Big(\frac{T_i}{\hat{e}(X_i)}-\frac{T_i}{e(X_i)}\Big)\Big(1-2p_1(X_i)\Big)\Big\}.
\end{align*}
Now we want to show $V_{1,n}$, $V_{2,n}$, $V_{3,n}$, $V_{4,n}$  are $o_P(n^{-1/2})$.  First, we show $V_{1,n} = o_P(n^{-1/2})$ by proving the sub-components $V_{11,n}$ and $V_{12,n}$ are $o_P(n^{-1/2})$. We will also use Equation (\ref{eqn:equality}) in the following proof.
\begin{align*}
   V_{1,n} &= \frac{1}{n}\sum_{i=1}^n \Big\{ \frac{T_i}{\hat{e}(X_i)} \cdot \Big[\hat{p}^{(0)}_1(X_i)-p_1(X_i)+\Big(p_1^2(X_i)-\big(\hat{p}^{(0)}_1(X_i)\big)^2\Big)\Big]\Big\},\\
   &= \frac{1}{n}\sum_{i=1}^n \Big\{ \Big(\frac{T_i}{e(X_i)}+ \frac{T_i(e(X_i)-\hat{e}(X_i))}{e(X_i)\hat{e}(X_i)}\Big) \cdot \Big[\hat{p}^{(0)}_1(X_i)-p_1(X_i)+\Big(p_1^2(X_i)-\big(\hat{p}^{(0)}_1(X_i)\big)^2\Big)\Big]\Big\},\\
   &= \underbrace{\frac{1}{n}\sum_{i=1}^n \Big\{ \frac{T_i}{e(X_i)} \cdot \Big[\hat{p}^{(0)}_1(X_i)-p_1(X_i)+\Big(p_1^2(X_i)-\big(\hat{p}^{(0)}_1(X_i)\big)^2\Big)\Big]}_{V_{11,n}}\Big\},\\
   &+ \underbrace{\frac{1}{n}\sum_{i=1}^n  \Big\{ \frac{T_i(e(X_i)-\hat{e}(X_i))}{e(X_i)\hat{e}(X_i)} \cdot \Big[p_1(X_i)- \hat{p}^{(0)}_1(X_i)+\Big(p_1^2(X_i)-\big(\hat{p}^{(0)}_1(X_i)\big)^2\Big)\Big]}_{V_{12,n}}\Big\}.
\end{align*}
where
\begin{align*}
    V_{11,n}&= \frac{1}{n}\sum_{i=1}^n \Big\{ \frac{T_i}{e(X_i)} \cdot \Big[\hat{p}^{(0)}_1(X_i)-p_1(X_i)\Big]+
    \frac{1}{n}\sum_{i=1}^n  \frac{T_i}{e(X_i)} \cdot \Big[ p^2_1(X_i)-(\hat{p}^{(0)}_1(X_i))^2\Big]\Big\}.
\end{align*}
Given 
\begin{align}\label{eq:taylor-expansion}
    \frac{1}{n}\sum_{i=1}^n\Big[\Big(p^2_1(X_i)-\big(\hat{p}^{(0)}_1(X_i)\big)^2\Big)\cdot \frac{T_i}{e(X_i)}\Big] =  o_{P}(\frac{1}{\sqrt{n}}),
\end{align}
and Assumption \ref{supp:assumption:regularity}\ref{assumption:regularity:a}, $V_{11,n} = o_P(n^{-1/2})$.
\begin{align*}
    V_{12,n} &= \frac{1}{n}\sum_{i=1}^n  \Big\{ \frac{T_i(e(X_i)-\hat{e}(X_i))}{e(X_i)\hat{e}(X_i)} \cdot \Big[p_1(X_i)-\hat{p}^{(0)}_1(X_i)\Big]\Big\}+
    \frac{1}{n}\sum_{i=1}^n  \Big\{ \frac{T_i(e(X_i)-\hat{e}(X_i))}{e(X_i)\hat{e}(X_i)} \Big[\Big(p_1^2(X_i)-\big(\hat{p}^{(0)}_1(X_i)\big)^2\Big)\Big]\Big\},\\
    &\leq \max_{i\leq n}(\frac{T_i}{e(X_i)\hat{e}(X_i)})\sqrt{\mathds{P}_n \big[\big(e(X_i)-\hat{e}(X_i)\big)^2\big]\mathds{P}_n \big[ \big(\hat{p}^{(0)}_1(X_i)-   p_1(X_i)\big)^2\big]}\\
    &+ \max_{i\leq n}(\frac{T_i}{e(X_i)\hat{e}(X_i)})\sqrt{\mathds{P}_n\big[ \big(e(X_i)-\hat{e}(X_i)\big)^2\big]\cdot \mathds{P}_n\big[\big(\hat{p}^{(0)}_1(X_i)-   p_1(X_i)\big)^2\big]}\sqrt{\mathds{P}_n \big[ \big(\hat{p}^{(0)}_1(X_i)-   p_1(X_i)\big)^2\big]},\\
    &= o_P(n^{-1/2}).
\end{align*}
by H$\ddot{\text{o}}$lder's inequality, Assumption \ref{supp:assumption:regularity} \ref{assumption:regularity:a}, \ref{supp:assumption:regularity} \ref{assumption:regularity:b} and Assumption 4.1 (in the main manuscript). We have shown $V_{1,n} = V_{11,n} + V_{12,n} = o_P(n^{-1/2})$. Next, we want to show $V_{2,n} = o_P(n^{-1/2})$ by bounding the second moment.
 \begin{align*}
   \sqrt{n}V_{2,n} &= \frac{1}{\sqrt{n}}\sum_{i=1}^n \Big(\frac{T_i}{\hat{e}(X_i)} - \frac{T_i}{e(X_i)}\Big) \cdot p_1(X_i)(1-p_1(X_i)),\\
   &= \frac{1}{\sqrt{n}}\sum_{i=1}^n  \Big(\frac{T_i(e(X_i)-\hat{e}(X_i))}{\hat{e}(X_i)e(X_i)} \Big) \cdot p_1(X_i)(1-p_1(X_i)),\\
   \mathds{E}[V_{2,n}^2] &= \frac{1}{n}\sum_{i=1}^n  \Big(\frac{T_i^2 p_1^2(X_i)(1-p_1(X_i))^2}{\hat{e}^2(X_i)e^2(X_i)} \Big) \cdot (e(X_i)-\hat{e}(X_i))^2,\\
   &\leq C\cdot \frac{1}{n}\sum_{i=1}^n \Big\{ (e(X_i)-\hat{e}(X_i))^2\Big\} = o_P(1).
\end{align*}
The proof of $V_{3,n}$ is similar to $V_{1,n}$, and the proof of $V_{4,n}$ is similar to $V_{2,n}$. By H$\ddot{\text{o}}$lder's inequality, Assumption \ref{supp:assumption:regularity} \ref{assumption:regularity:a}, \ref{supp:assumption:regularity} \ref{assumption:regularity:b} and Assumption 4.1 (in the main manuscript).
\begin{align*}
    V_{3,n} &=\frac{1}{n}\sum_{i=1}^n  \frac{T_i}{e(X_i)}(p_1(X_i)-\hat{p}_1(X_i)) + \frac{1}{e(X_i)\hat{e}(X_i)}(e(X_i)-\hat{e}(X_i))(p_1(X_i)-\hat{p}_1(X_i)),\\
     &\leq \max_{i\leq n}(\frac{T_i}{e(X_i)\hat{e}(X_i)})\sqrt{\mathds{P}_n \big[\big(e(X_i)-\hat{e}(X_i)\big)^2\big]\mathds{P}_n \big[ \big(\hat{p}^{\text{Init}}_1(X_i)-   p_1(X_i)\big)^2\big]},\\
    &= o_P(n^{-1/2}),\\
    V_{4,n}&= \frac{1}{n}\sum_{i=1}^n \frac{T_i(e(X_i)-\hat{e}(X_i))}{e(X_i)\hat{e}(X_i)}(1-2p_1(X_i)) = o_P(n^{-1/2}),\\
    \mathds{E}[V_{4,n}^2] &= \frac{1}{n}\sum_{i=1}^n  \Big(\frac{T_i^2 p_1^2(X_i)(1-2p_1(X_i))^2}{\hat{e}^2(X_i)e^2(X_i)} \Big) \cdot (e(X_i)-\hat{e}(X_i))^2,\\
   &\leq C\cdot \frac{1}{n}\sum_{i=1}^n \Big\{ (e(X_i)-\hat{e}(X_i))^2\Big\} = o_P(1).
\end{align*}
In conclusion,
\begin{align*}
    \hat{p}^{(1)}_1(X) &=  \hat{p}^{(0)}_1(X) + \varepsilon \tilde{S}_1(X) + o_P(1),\\
    &=  \hat{p}^{(0)}_1(X) + o_P(1),\\
     \hat{p}^{(K)}_1(X) &= \hat{p}^{(0)}_1(X) + K\cdot o_P(1),
\end{align*}
\end{proof}

\subsection{Lemma \ref{lemma:epsilon}}\label{appendix:epsilon}

\begin{lem}[$\hat{\varepsilon}$ convergence rate]\label{lemma:epsilon}
$\hat{\varepsilon} - \varepsilon = H_{n}^{-1}\cdot C\cdot \sum_{m=1}^J \left(  \frac{1}{n}\sum_{i=1}^n \phi_m(Y_i,T_i,X_i) \right)^2 + o_P(n^{-1/2})$, where $$H_n^{-1}=\Big\{\mathds{E} \Big[\Big(1-p_1(X)\Big)\cdot p_1(X)\cdot \frac{\Big(\frac{\mathds{1}(X\in\mathcal{A}_j)}{\mathds{P}(\mathcal{A}_j)}\frac{T}{e(X)} \big(\frac{1}{n}\sum_{i=1}^n\phi_j(Y_i, T_i,X_i)\big)^2\Big)^2}{ \sum_{m=1}^d \left(  \frac{1}{n}\sum_{i=1}^n \phi_m(Y_i,T_i,X_i) \right)^2}\Big]\Big\}^{-1},$$ and
$C =\Big(\sqrt{ \sum_{j=1}^J \left(  \frac{1}{n}\sum_{i=1}^n \hat{\phi}_j(Y_i,T_i,X_i) \right)^2}\Big)^{-1}$.
\end{lem}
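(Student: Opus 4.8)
The plan is to view $\hat\varepsilon$ as a one-dimensional Z-estimator defined by the first-order condition of the logistic regression \eqref{eq:TMLE-variant} and to expand it around the reference value $\varepsilon$ via the mean value theorem. Writing $p^{\varepsilon}_t(X_i)=\text{expit}\big(\text{logit}(\hat p^{(0)}_t(X_i))+\varepsilon\,\tilde S_t(X_i)\big)$, the score of the conditional log-likelihood in $\varepsilon$ is
\begin{align}\label{eq:eps-score}
\Psi_n(\varepsilon)=\frac{1}{n}\sum_{i:T_i=t}\big(Y_i-p^{\varepsilon}_t(X_i)\big)\,\tilde S_t(X_i),
\end{align}
and, as computed in Web Appendix \ref{appendix:multi-score-proof}, the estimate $\hat\varepsilon$ is characterized by $\Psi_n(\hat\varepsilon)=0$. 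Since $\varepsilon\mapsto\Psi_n(\varepsilon)$ is smooth, a one-term Taylor expansion about $\varepsilon$ gives $0=\Psi_n(\hat\varepsilon)=\Psi_n(\varepsilon)+\Psi_n'(\bar\varepsilon)(\hat\varepsilon-\varepsilon)$ for some intermediate $\bar\varepsilon$, hence
\begin{align}\label{eq:eps-ratio}
\hat\varepsilon-\varepsilon=-\frac{\Psi_n(\varepsilon)}{\Psi_n'(\bar\varepsilon)}.
\end{align}
The argument then reduces to identifying the numerator and the denominator of \eqref{eq:eps-ratio} separately.

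First I would treat the numerator by substituting the self-normalized clever covariate \eqref{eq:clever-covariate} into \eqref{eq:eps-score}. Since $\tfrac{\mathds 1(X_i\in\mathcal A_j)\mathds 1(T_i=t)}{\hat P(\mathcal A_j)\hat e_t(X_i)}\big(Y_i-p^{\varepsilon}_t(X_i)\big)$ is exactly the summand defining $\hat\phi_j$, averaging over $i$ turns $\Psi_n(\varepsilon)$ into a sum over $j$ of a product of two empirical averages of $\hat\phi_j$, divided by the normalizing constant $C^{-1}=\sqrt{\sum_{j}(\tfrac1n\sum_i\hat\phi_j)^2}$. Replacing the estimated nuisances $\hat e_t,\hat P(\mathcal A_j),\hat p^{(0)}_t$ by their population counterparts in the factor arising from $\big(Y_i-p^{\varepsilon}_t(X_i)\big)$ --- justified by the product-rate bound of Assumption \ref{supp:assumption:regularity}\ref{assumption:regularity:b} and the positivity Assumption \ref{assumption:overlap} --- then gives $\Psi_n(\varepsilon)=C\sum_{m=1}^J\big(\tfrac1n\sum_i\phi_m\big)^2+o_P(n^{-1/2})$, which is the claimed leading term.

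Next I would analyze the denominator. Differentiating \eqref{eq:eps-score} yields $-\Psi_n'(\varepsilon)=\tfrac1n\sum_{i:T_i=t}p^{\varepsilon}_t(X_i)\big(1-p^{\varepsilon}_t(X_i)\big)\tilde S_t(X_i)^2$, the factor $p_t(1-p_t)$ being the variance of the logistic link; its population limit is precisely the quantity $H_n$ written in the statement. To establish $\Psi_n'(\bar\varepsilon)\to -H_n$ I would combine (i) consistency $\hat\varepsilon\to\varepsilon$, which confines $\bar\varepsilon$ to a shrinking neighborhood of $\varepsilon$, (ii) a uniform law of large numbers for the Hessian over that neighborhood, and (iii) the same nuisance-replacement argument applied inside $\tilde S_t(X_i)^2$. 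Substituting the two limits into \eqref{eq:eps-ratio} and inverting then produces $\hat\varepsilon-\varepsilon=H_n^{-1}C\sum_{m=1}^J\big(\tfrac1n\sum_i\phi_m\big)^2+o_P(n^{-1/2})$.

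The hard part will be the self-normalization in $\tilde S_t$: the random constant $\sqrt{\sum_{j}(\sum_l\hat\phi_j)^2}$ enters both the numerator and the Hessian, and I must show it concentrates uniformly even as the number of subgroups grows, which is where Assumption \ref{supp:assumption:subgroup} ($\log J/n=o(1)$) is used. The accompanying difficulty is keeping every nuisance-replacement error at order $o_P(n^{-1/2})$ after summing over the $J$ subgroups; this is controlled by H\"older's inequality together with Assumption \ref{supp:assumption:regularity}\ref{assumption:regularity:a}--\ref{assumption:regularity:b}, exactly as in the proof of Proposition \ref{lemma:replace-pk}. The Taylor expansion \eqref{eq:eps-ratio} and the law of large numbers for the Hessian are otherwise routine.
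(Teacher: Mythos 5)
Your proposal follows essentially the same route as the paper's proof: a one-dimensional M-estimation linearization of the logistic-regression score at the reference $\varepsilon$, identification of that score with $C\sum_{j}\big(\tfrac1n\sum_i\hat\phi_j\big)^2$ via the self-normalized clever covariate, replacement of $\hat\phi_j$ by $\phi_j$ controlled by H\"older's inequality and the product-rate condition, and a separate argument that the (sample) Hessian converges to $H_n$. The paper writes the expansion directly as $H_n^{-1}\cdot(\text{score})+R_n$ and isolates the terms $U_n$ and $\hat H_n-H_n$, whereas you derive the equivalent ratio form $-\Psi_n(\varepsilon)/\Psi_n'(\bar\varepsilon)$ from the mean value theorem, but the substance and the assumptions invoked are the same.
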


\begin{proof}
\begin{align*}
   \hat{\varepsilon} - \varepsilon &= \Big\{\mathds{E} \Big[\Big(1-p_1(X)\Big)\cdot p_1(X)\cdot \tilde{S}_1^2(X)\Big]\Big\}^{-1}\cdot \frac{1}{n}\sum_{i=1}^n \tilde{S}_{1}(X_i)\Big(Y_i - \hat{p}^{\text{Init}}_1(X_i)\Big),\\
    &= \Big\{\mathds{E} \Big[\Big(1-p_1(X)\Big)\cdot p_1(X)\cdot \frac{\Big(\frac{\mathds{1}(X\in\mathcal{A}_j)}{\mathds{P}(\mathcal{A}_j)}\frac{T}{e(X)} \big(\frac{1}{n}\sum_{i=1}^n\phi_j(Y_i, T_i,X_i)\big)^2\Big)^2}{ \sum_{m=1}^J \left(  \frac{1}{n}\sum_{i=1}^n \phi_m(Y_i,T_i,X_i) \right)^2}\Big]\Big\}^{-1}\\
    &\cdot \frac{\sum_{j=1}^J \big(\frac{1}{n} \sum_{i=1}^n \hat{\phi}^{\text{Init}}_j(Y_i, T_i,X_i)\big)^2}{\sqrt{ \sum_{m=1}^J \left(  \frac{1}{n}\sum_{i=1}^n \hat{\phi}^{\text{Init}}_m(Y_i,T_i,X_i) \right)^2}} + R_n.
\end{align*}
Assume $C = \frac{1}{\sqrt{ \sum_{m=1}^J \left(  \frac{1}{n}\sum_{i=1}^n \hat{\phi}^{\text{Init}}_m(Y_i,T_i,X_i) \right)^2}}$, we lay out the proof as the following: First, we work with the first-order term using linearization. Second, we derive the convergence rate of the second-order term by bounding the sample Hessian matrix. For simplicity, we assume $\mathds{P}(\mathcal{A}_j)$ is known.
\begin{align*}
   \hat{\varepsilon} - \varepsilon
    &= H_n^{-1}\cdot \Big\{C\cdot \sum_{j=1}^J \Big(\frac{1}{n}\sum_{i=1}^n \hat{\phi}_j^{\text{Init}}(Y_i,T_i,X_i)\Big)^2\Big)\Big\} + (\hat{H}_n^{-1}-H_n^{-1})\cdot \Big\{C\cdot \sum_{j=1}^J \Big(\frac{1}{n}\sum_{i=1}^n \hat{\phi}_j^{\text{Init}}(Y_i,T_i,X_i)\Big)^2\Big)\Big\},\\
    & = H_n^{-1}\Big\{C \sum_{j=1}^J \Big(\frac{1}{n}\sum_{i=1}^n \phi_j(Y_i,T_i,X_i)\Big)^2\Big) +U_n\Big\}\\ &+\hat{H}_n^{-1}(\hat{H}_n-H_n)H_n^{-1}\Big\{C \sum_{j=1}^J \Big(\frac{1}{n}\sum_{i=1}^n \phi_j(Y_i,T_i,X_i)\Big)^2\Big) +U_n\Big\},
    \end{align*}
where $H_n^{-1}$ is a constant, $\varepsilon = 0$, and
\begin{align}\label{eq:hessian}
    \hat{H}_n - H_n = o_P(n^{-1/2}).
\end{align}
 To complete the proof, we only need to show $U_n = o_P(n^{-1/2})$.
\begin{align*}
    U_n &=  \Big(\frac{1}{n}\sum_{i=1}^n \sum_{j=1}^J\Big\{\hat{\phi}^{\text{Init}}_j(Y_i,T_i,X_i) - \phi_j(Y_i,T_i,X_i)\Big\}\Big) \Big(\frac{1}{n}\sum_{i=1}^n \sum_{j=1}^J\Big\{ \hat{\phi}^{\text{Init}}_j(Y_i,T_i,X_i) - \phi_j(Y_i,T_i,X_i)\Big\}\Big)\\
    &\quad+ 2\Big(\frac{1}{n}\sum_{i=1}^n \hat{\phi}^{\text{Init}}_j(Y_i,T_i,X_i) \phi_j(Y_i,T_i,X_i)\Big),
\end{align*}
where,
\begin{align*}
    2\Big(\frac{1}{n}\sum_{i=1}^n\sum_{j=1}^J \hat{\phi}^{\text{Init}}_j(Y_i,T_i,X_i) \phi_j(Y_i,T_i,X_i)\Big) &= 2\Big(\frac{1}{n}\sum_{i=1}^n \sum_{j=1}^J\Big\{ \frac{\mathds{1}(X_i\in\mathcal{A}_j)}{\mathds{P}(\mathcal{A}_j)}\frac{T_i}{\hat{e}(X_i)}(Y_i-\hat{p}_1^{\text{Init}}(X_i))\\
    &\cdot \frac{\mathds{1}(X_i\in\mathcal{A}_j)}{\mathds{P}(\mathcal{A}_j)}\frac{T_i}{e(X_i)}(Y_i-p_1(X_i))\Big\}\Big),\\
    &= o_P(1),
\end{align*}
by Assumption \ref{supp:assumption:regularity}\ref{assumption:regularity:a} and Assumption 4.1 (in the main manuscript). Next, we only need to show $\Big(\frac{1}{n}\sum_{i=1}^n \hat{\phi}^{\text{Init}}_j(Y_i,T_i,X_i) - \phi_j(Y_i,T_i,X_i)\Big) = o_P(n^{-1/2})$.
\begin{align*}
    \frac{1}{n}\sum_{i=1}^n \sum_{j=1}^J \Big\{\hat{\phi}^{\text{Init}}_j(Y_i,T_i,X_i) - &\phi_j(Y_i,T_i,X_i)\Big\} = \frac{1}{n}\sum_{i=1}^n \sum_{j=1}^J \frac{\mathds{1}(X_i)\in\mathcal{A}_j}{\mathds{P}(\mathcal{A}_j)} \Big(\frac{T_i}{\hat{e}(X_i)} - \frac{T_i}{e(X_i)}\Big) \cdot \big(\hat{p}_1^{\text{Init}}(X_i) - p_1(X_i)\big),\\
    &= \frac{1}{n}\sum_{i=1}^n \sum_{j=1}^J \frac{\mathds{1}(X_i)\in\mathcal{A}_j)}{\mathds{P}(\mathcal{A}_j)} \Big(\frac{T_i(\hat{e}(X_i) - e(X_i))}{\hat{e}(X_i)e(X_i)}\Big) \cdot \big(\hat{p}_1^{\text{Init}}(X_i) - p_1(X_i)\big),\\
&\leq \Big(\max_{i\leq n}\frac{T_i}{\hat{e}(X_i)e(X_i)}\Big)\sqrt{\frac{1}{n}\sum_{i=1}^n\big(\hat{e}(X_i) - e(X_i)\big)^2 \frac{1}{n}\sum_{i=1}^n\big(\hat{p}_1^{\text{Init}}(X_i)- p_1(X_i)\big)^2},\\
    &= o_P(n^{-1/2}),
\end{align*}
by H$\ddot{\text{o}}$lder's inequality, Assumption \ref{supp:assumption:regularity}\ref{assumption:regularity:b} and Assumption 4.1 (in the main manuscript). Therefore, $U_n = o_P(n^{-1/2})$.
\end{proof}

\subsection{Proof of Corollary 1}\label{appendix:subsec:consistency}
Assume our parameter of interest is the single subgroup treatment effect under the treated, $p_1(\mathcal{A}) = \mathds{P}(Y(1)|X\in\mathcal{A})$, where $\mathcal{A}$ denotes the single subgroup of interest. 


\begin{proof}
By taking derivative of the loss function with respect to $\varepsilon$, we have: 
\begin{align*}
 &\frac{1}{n}\sum_{i=1}^n \frac{\mathds{1}(X_i\in\mathcal{A})}{\mathds{P}(\mathcal{A})}\Big\{\Big(Y_i - \hat{p}_1(X_i)\Big)\frac{T_i}{\hat{e}(X_i)} + \hat{p}_1(X_i) - \hat{p}_1\Big\} = 0,\\
\hat{p}_1(\mathcal{A}) &= \frac{1}{n}\sum_{i=1}^n \frac{\mathds{1}(X_i\in\mathcal{A})}{\mathds{P}(\mathcal{A})}\Big\{\Big(Y_i - \hat{p}_1(X_i)\Big)\frac{T_i}{\hat{e}(X_i)} + \hat{p}_1(X_i)\Big\},\\
    \hat{p}_1(\mathcal{A}) - p_1(\mathcal{A}) &=  \frac{1}{n}\sum_{i=1}^n \frac{\mathds{1}(X_i\in\mathcal{A})}{\hat{\mathds{P}}(\mathcal{A})}\Big\{ \Big(Y_i - \hat{p}_1(X_i)\Big)\frac{T_i}{\hat{e}(X_i)} +\hat{p}_1(X_i)- p_1\Big\},\\
    &= \frac{1}{n}\sum_{i=1}^n \frac{\mathds{1}(X_i\in\mathcal{A})}{\mathds{P}(\mathcal{A})}\Big\{ \Big(Y_i - p_1(X_i)\Big)\frac{T_i}{e(X_i)} + p_1(X_i)-p_1 + R_{1,n} + R_{2,n} + R_{3,n}\Big\},
\end{align*}
where the three remainder terms are:
\begin{align*}
    R_{1,n} &=   \frac{1}{n}\sum_{i=1}^n \Big\{ \Big(Y_i - \hat{p}_1(X_i)\Big)\Big(\frac{T_i}{\hat{e}(X_i)}-1\Big)\Big\} - \frac{1}{n}\sum_{i=1}^n \Big\{ \Big(Y_i - p_1(X_i)\Big)\Big(\frac{T_i}{\hat{e}(X_i)}-1\Big)\Big\},\\
    R_{2,n}&= \frac{1}{n}\sum_{i=1}^n \Big\{ \Big(Y_i - p_1(X_i)\Big)\Big(\frac{T_i}{\hat{e}(X_i)}-1\Big)\Big\} - \frac{1}{n}\sum_{i=1}^n \Big\{ \Big(Y_i - p_1(X_i)\Big)\Big(\frac{T_i}{e(X_i)}-1\Big)\Big\},\\
    R_{3,n}&= \Big(\frac{1}{\hat{\mathds{P}}(\mathcal{A})} - \frac{1}{\mathds{P}(\mathcal{A})}\Big)\cdot \Big(\frac{1}{n}\sum_{i=1}^n \mathds{1}(X_i\in\mathcal{A})  \cdot \Big\{ \Big(Y_i - p_1(X_i)\Big)\frac{T_i}{e(X_i)} + p_1(X_i)-p_1 + R_{1,n} + R_{2,n} \Big\}\Big).
\end{align*}
First, we want to show $R_{1,n}$ and $R_{2,n}$  are $o_P(n^{-1/2})$. Second, we aim to show $R_{3,n} = o_P(1)$. 

\begin{align*}
  R_{3,n}&=  \Big(\frac{1}{\hat{\mathds{P}}(\mathcal{A})} - \frac{1}{\mathds{P}(\mathcal{A})}\Big)\cdot o_P(n^{-1/2}), \\
  \frac{1}{\hat{\mathds{P}}(\mathcal{A})} - \frac{1}{\mathds{P}(\mathcal{A})}&= \frac{\mathds{P}(\mathcal{A})-\hat{\mathds{P}}(\mathcal{A}) }{\hat{\mathds{P}}(\mathcal{A})\mathds{P}(\mathcal{A})},\\
    &\leq\max\frac{1}{\hat{\mathds{P}}(\mathcal{A})\mathds{P}(\mathcal{A})} \cdot \Big(\hat{\mathds{P}}(\mathcal{A}) - \mathds{P}(\mathcal{A})\Big),\\
    &= o_P(1) \cdot \Big(\hat{\mathds{P}}(\mathcal{A}) - \mathds{P}(\mathcal{A})\Big),\\
    &= o_P(1).
\end{align*}
by H$\ddot{\text{o}}$lder's inequality and Assumption \ref{supp:assumption:regularity}. Since $R_{1,n}$, $R_{2,n} = o_P(n^{-1/2})$ and $R_{3,n} = o_P(1)$, we conclude the remainder term is $o_P(1)$.
\end{proof}

	
\end{document}